\documentclass{article}

\usepackage{arxiv}

\usepackage[utf8]{inputenc} 
\usepackage[T1]{fontenc}    
\usepackage[colorlinks,citecolor=blue,urlcolor=blue]{hyperref}       
\usepackage{url}            
\usepackage{booktabs}       
\usepackage{amsthm,amsmath,amsfonts,amssymb}       
\usepackage{nicefrac}       
\usepackage{microtype}      
\usepackage{cleveref}       
\usepackage{lipsum}         
\usepackage{graphicx}
\usepackage[authoryear,round]{natbib}
\usepackage{doi}
\usepackage{bm,bbm}
\usepackage{subfigure}
\usepackage{float}
\usepackage{algorithm}  
\usepackage{algorithmic}
\usepackage{textcomp}

\numberwithin{equation}{section}
\theoremstyle{plain}
\newtheorem{thm}{Theorem}[section]
\newtheorem{prop}{Proposition}[section]
\newtheorem{definition}{Definition}
\newtheorem{remark}{Remark}[section]
\newtheorem{lemma}{Lemma}[section]

\title{Statistical Depth for Point Process via the Isometric Log-Ratio Transformation}


\author{ Xinyu Zhou\\
	Department of Statistics\\
	Florida State University\\
	Tallahassee, FL 32306 \\
	\texttt{xz19c@my.fsu.edu} \\
	\And
	Yijia Ma \\
	Department of Statistics\\
	Florida State University\\
	Tallahassee, FL 32306 \\
	\texttt{ym19f@my.fsu.edu} \\
	\And
	Wei Wu \\
	Department of Statistics\\
	Florida State University\\
	Tallahassee, FL 32306 \\
	\texttt{wwu@stat.fsu.edu} \\
}


\hypersetup{
pdftitle={Statistical Depth for Point Process via the Isometric Log-Ratio Transformation},
pdfsubject={statistical depth},
pdfauthor={Xinyu Zhou, Yijia Ma, Wei Wu},
pdfkeywords={Statistical depth, Point process, Isometric Log-Ratio transformation, Poisson process, Time rescaling},
}

\begin{document}
\maketitle

\begin{abstract}
Statistical depth, a useful tool to measure the center-outward rank of multivariate and functional data, is still under-explored in temporal point processes.  Recent studies on point process depth proposed a weighted product of two terms - one indicates the depth of the cardinality of the process, and the other characterizes the conditional depth of the temporal events given the cardinality.  The second term is of great challenge because of the apparent nonlinear structure of event times, and so far only basic parametric representations such as Gaussian and Dirichlet densities were adopted in the definitions. However, these simplified forms ignore the underlying distribution of the process events, which makes the methods difficult to interpret and to apply to complicated patterns.  To deal with these problems, we in this paper propose a distribution-based approach to the conditional depth via the well-known Isometric Log-Ratio (ILR) transformation on the inter-event times.  The new depth, called the ILR depth, is at first defined for homogeneous Poisson process by using the density function on the transformed space.  The definition is then extended to any general point process via a time-rescaling transformation.  We illustrate the ILR depth using simulations of Poisson and non-Poisson processes and demonstrate its superiority over previous methods.  We also thoroughly examine its mathematical properties and asymptotics in large samples.  Finally, we apply the ILR depth in a real dataset and the result clearly shows the effectiveness of the new method.  
\end{abstract}

\keywords{Statistical depth \and Point process \and Isometric Log-Ratio transformation \and Poisson process \and Time rescaling}

\section{Introduction} \label{sec:introduction}

In this paper, we study the center-outward rank in a set of temporal point process observations. 
Temporal point process (point process for short) is an important area in stochastic process, which has been extensively studied in both theory and applications. A point process is basically a list of event times in increasing order. Many real phenomena can produce data that can be represented as a point pattern. For instance, traffic accident times in a specific intersection, scoring times in a soccer match, or earthquake happening times in a geographical region. Point process  is a preferred tool to model and analyze such practical observations with random variability. The term ``point'' is used since an event can be thought as being instant and represented as a point on the time line.  In this study, we focus on orderly point process, where the process can be fully characterized with a conditional intensity function. 



Statistical depth has been a powerful tool to measure the center-outward rank of multivariate data and functional data. It was first studied on multivariate data by \citet{tukey1975mathematics} of a given data group in Euclidean space. Later, \citet{donoho1992breakdown} used hyperplanes idea to define depth for an arbitrary point in Euclidean space corresponding to the data group. From then on, various new depths on multivariate data were developed and the results have been fruitful. These methods include Oja depth \citep{oja1983descriptive}, simplicial depth \citep{liu1990notion}, Mahalanobis depth \citep{liu1993quality}, projection depth \citep{zuo2000note}, zonoid depth \citep{dyckerhoff1996zonoid}, and likelihood depth \citep{fraiman1999multivariate}.  Many of these methods are well-known and commonly used to build templates or identify outliers in given observations. Recent investigations on depth have focused more on functional observations. \citet{lopez2009concept} developed the notion of functional depth for the first time. Then, \citet{nieto2011properties} and \citet{mosler2012general} conducted researches on the general concepts and properties of functional depth. More applications of functional depth have been studied on outlier detection \citep{narisetty2016extremal} and data classification \citep{makinde2019classification}. 

The notion of depth has recently been introduced to non-Euclidean metric space  to deal with more complicated data structures in the era of big data \citep{geenens2021statistical,dai2021tukey}.  The methods were illustrated with examples in non-Euclidean function space, Riemannian manifold, and geodesic space.  It is clear that the point process is also a non-Euclidean metric space \citep{wu2011information} and the metric-based depth methods can be directly applied.  Moreover, the notion of depth for point process data has apparent benefits: 1. The center-outward ranking process can enhance the understanding of the temporal variability in the point process.  High depth-valued processes represent typical pattern in the data, and low depth-valued processes indicate outlier observations.  This provides a new and useful tool in addition to the traditional time-based frameworks.  2. These ranking result has broad applications.  For example, as we will show in this paper, the depth analysis can help understand the typical pattern in daily traffic accidents at a certain region and this information is important for proper management of social safety resources.  In addition, the depth tool in neuronal spike train study can help identify typical firing patterns and remove the outlier process to improve modeling and neural coding analysis.


Although important, the notion of depth in point process remains under-explored. The only two previous approaches on point process data are the generalized Mahalanobis depth \citep{liu2017generalized} and the Dirichlet depth \citep{qi2021dirichlet}. We point out both methods have clear limitations. At first, \citet{liu2017generalized} adopted the well-known Mahalanobis depth \citep{liu1993quality} to capture the structure of the point process given its cardinality. However, this depth ignores two important features in point process.  One is the finite time domain and the other is the ordered property of all events.  Any point process realization, even with unordered events times or events out of the given domain, will have a positive generalized Mahalanobis depth value.  This makes the proposed depth method inappropriate to use in practice.



In a recent study, \citet{qi2021dirichlet} introduced a new depth called Dirichlet depth for the point process. The process was equivalently represented by using inter-event times (IET) and the depth was defined based on the Dirichlet density function on the IET domain, which is a simplex.  Such approach appropriately addresses the issues on ordered events and bounded time domain. 
However, there are still other limitations. One is the choice of concentration parameter in the Dirichlet density, which was chosen for simplicity and lacked theoretical support. The other limit is that there is no clear symmetry for the deepest point, which is a common requirement in both multivariate \citep{zuo2000general} and functional depths \citep{nieto2011properties}. 

We point out that the limitations in the above depth methods are due to their constrained domains.  For the generalized Mahalanobis depth, its domain is all increasing events in a finite interval.  For the Dirichlet depth, its domain is a finite simplex (after the IET transformation).  In this paper, we aim to remove those constraints and define a new depth in an unconstrained Euclidean space.  Our approach is based on the well-known Isometric Log-Ratio (ILR) transformation on the IETs, and a new symmetry will be introduced to formally define a center with the largest depth value. The ILR transformation is a Log-Ratio analysis method and commonly used in compositional data analysis \citep{aitchison2000logratio}. It was formally defined by \citet{egozcue2003isometric} as an isometric isomorphism between the simplex $\mathcal{S}^D$ and Euclidean space $\mathbb{R}^{D-1}$, where $\mathcal{S}^D = \{(x_1, \cdots, x_D) \in \mathbb{R}^{D} \mid \sum_{i=1}^D x_i = C, x_i > 0, i = 1, \cdots, D\}$, $C$ is a positive constant, and $D>1$ is an integer. The key benefit of the ILR transformation is that it provides an isometric bijection between the constrained space $\mathcal{S}^D$ and the unconstrained space $\mathbb{R}^{D-1}$. For any point process with $D-1$ events in a finite time domain, its IETs can be equivalently transformed to a vector in $\mathbb{R}^{D-1}$.


In this paper, we will define a density-based depth in the unconstrained space $\mathbb{R}^{D-1}$ after the IET and ILR transformations.  As both transformations are invertible, we can map the depth values to the original point process observations.  This procedure can eliminate all limitations in the previous methods.  For homogeneous Poisson process, we will show that the density of the ILR transformed data is given in a closed form in the unconstrained Euclidean space. This density function is similar to a Gaussian density that it has one global maximum in the center and the function value consistently decreases from the center to boundary.  Inspired by the conventional Mahalanobis depth \citep{liu1993quality}, a new depth can be defined by this density function.  

We will show in Section \ref{sec:dep_method} that our proposed new depth method has the following benefits:  1) It is built under a rigorous mathematical framework and can maximally exploit the distribution pattern in the given data. 2)  For homogeneous Poisson process, the density after the ILR transformation is given in a closed form. 3) The density has a center-outward pattern with a clear center under orthogonal transformations.  4)  The density naturally leads to a new definition of depth, which satisfies all important properties for depth functions.   5) The new depth definition can be easily extended to any general point process with the well-known \textit{Time Rescaling Method} \citep{brown2002time}.

The rest of this paper is organized as follows. In Section \ref{sec:dep_method}, we will at first introduce the depth definition based on the ILR transformation for homogeneous Poisson process.  We will thoroughly examine its mathematical properties and provide simulations for illustration.  We will then provide definition on general point process and specifically focus on two cases: inhomogeneous Poisson process and inhomogeneous Markov interval process. Estimations of conditional intensity and simulation examples in both cases will be provided.  Comparisons with previous depth methods will also be conducted. In Section \ref{sec:asym}, we present the asymptotic theory on the sample depth when the underlying process is an inhomogeneous Poisson process. In Section \ref{sec:real_data_app}, we will apply the new depth to a real world dataset to demonstrate its effectiveness in characterization of typical patterns.  Finally, we will summarize our study and provide future work in Section \ref{sec:future}. All mathematical details are shown in appendices.

\section{Depth Methods} \label{sec:dep_method}
In this section, we will provide all details of the proposed methods.  We adopt the depth definition for point process in \citet{qi2021dirichlet} and will at first provide a review of the basic framework.


\subsection{Depth definition for point process} \label{sec:depdef}
Let $\mathbb{S}$ denote the set of all point processes in the time domain $[T_1,T_2]$ and $\mathbb{S}_k$ denote the set of all point processes with cardinality $k$ in the time domain $[T_1,T_2]$, e.g. $\mathbb{S}_k=\{(s_1,s_2,\dots,s_k)^T \in\mathbb{R}^k|T_1\leq s_1\leq s_2\leq \dots\leq s_k \leq T_2\}$ and $k$ is any non-negative integer. Therefore, $\mathbb{S}=\bigcup_{k=0}^{\infty}\mathbb{S}_k$.  For any point process $\bm{s}\in\mathbb{S}$, a depth function maps from $\mathbb{S}$ to $\mathbb{R}^+$. 
The boundary set for point process with cardinality $k$ is denoted as $\mathbb{B}_k=\{(s_1,s_2,\dots,s_k)^T \in\mathbb{S}_k|\text{at least one equality holds in: }$ $T_1\leq s_1\leq s_2\leq \dots\leq s_k \leq T_2\}$. 


In this paper, we adopt the overall framework in \citet{qi2021dirichlet}, where the depth is defined as the product of two terms.  For each process, the first term is a normalized one dimensional depth of the number of time events, and the second term is a conditional depth of point process given its cardinality. The formal depth definition is given below: 
\begin{definition} \label{def:wholedef}
For point process $S\in \mathbb{S}$ defined on $[T_1,T_2]$ with probability measure $P$, denote $P_{|S|}$ as a probability measure on the cardinality $|S|$ and $P_{S||S|}$ as the probability measure on the ordered events $S$ given $|S|$. For a realization $\bm{s}\in \mathbb{S}$, the depth $D(\bm{s};P)$ is defined as: 
\begin{eqnarray}
D(\bm{s};P)=w(|\bm{s}|;P_{|S|})^rD_c(\bm{s};P_{S||S|})
\end{eqnarray}
where $w(|\bm{s}|;P_{|S|})=\frac{D_1(|\bm{s}|;P_{|S|})}{\max_kD_1(|\bm{s}|=k;P_{|S|})}$ is the normalized one dimensional depth on the cardinality $|\bm{s}|$, $D_1(|\bm{s}|;P_{|S|})=\min\{P_{|S|}(|S|\leq |\bm{s}|),P_{|S|}(|S|\geq |\bm{s}|)\}$, $r>0$ is a hyper-parameter and $D_c(\bm{s};P_{S||S|})$ is the depth of $\bm{s}$ conditioned on $|\bm{s}|$. 
\end{definition}

\begin{remark}
There are many methods to estimate the one dimensional depth $D_1(|\bm{s}|;P_{|S|})$, in this paper we adopt the same approach in \citet{qi2021dirichlet}. In practice, $D_1(|\bm{s}|;P_{|S|})$ and $w(|\bm{s}|;P_{|S|})$ can be easily estimated by samples if the population result is unknown or difficult to obtain.
\end{remark}


The second term $D_c(\bm{s};P_{S||S|})$ is the main focus of this paper.  Note the conventional multivariate depth may not be directly used because the time events are in a non-Euclidean space (increasing sequence in the finite domain $[T_1,T_2]$). 
In this paper, we will propose to transform point process data to a Euclidean domain, and then utilize its density function to define the depth.  The detail is given in the next subsection.    

We often use various mathematical properties to evaluate the performance of a depth method.  A list of four properties is given in \citet{qi2021dirichlet} for the conditional depth in point process, which corresponds to the similar four properties in the multivariate case \citep{zuo2000general}.  These properties are listed below and we will evaluate them in our proposed depth:
\begin{enumerate}
\item \label{m1} $D_c(\bm{s};P_{S||S|=k})$ is a continuous mapping from $\mathbb{S}_k$ to $\mathbb{R}^+$ and $D_c(\bm{s};P_{S||S|=k})=0$ if $\bm{s}\in \mathbb B_k$. 
\item \label{m2} There exists unique $\bm{s}_c$ such that $D_c(\bm{s}_c;P_{S||S|=k})=\sup_{\bm{s}\in \mathbb{S}_k}D_c(\bm{s};P_{S||S|=k})$ for any $P_{S||S|=k}\in \mathcal{P}_k$, where $\bm{s}_c$ is the center point given a specific symmetry. 
\item \label{m3} If $\bm{s}_c$ is the center point, then $D_c(\bm{s};P_{S||S|=k})\leq D_c(\bm{s}_c+\alpha(\bm{s}-\bm{s}_c);P_{S||S|=k})$ for any $\bm{s}\in \mathbb{S}_k$ and $\alpha\in [0,1]$. 
\item \label{m4} For any scaling coefficient $a\in\mathbb{R}^+$ and translation coefficient $b\in \mathbb{R}$, $D_c(\bm{s};P_{S||S|=k})=D_c(a\bm{s}+b;P_{aS+b||S|=k})$
\end{enumerate}

\subsection{The ILR transformation on simplex} \label{sec:pdf}

Now we focus on point process $\bm{s}=(s_1,s_2,\dots,s_k)$ with given cardinality $k$ in the time domain $[T_1,T_2]$ and will provide an equivalent way to represent the process.

\subsubsection{Equivalent representation}

Denote $s_0=T_1$ and $s_{k+1}=T_2$. Then the process can be equivalently represented using a vector of the inter-event times (IET), obtained as $\bm{u}=(u_1,u_2,\dots,u_{k+1})^T=(s_1-s_0,s_2-s_1,\dots,s_{k+1}-s_k)^T$.  It is easy to see that $\sum_{i=1}^{k+1}u_{i}=T_2-T_1$ for $u_{i} \geq 0$, $i=1,\dots,k+1$.  That is, all IET vectors form a simplex $\mathcal{S}^{k+1}$ in $\mathbb R^{k+1}$, where
\[
\mathcal{S}^{k+1} = \{\bm{u}=(u_1,u_2,\dots,u_{k+1})^T \mid \sum_{i=1}^{k+1}u_{i}=T_2-T_1, u_{i} \geq 0, i = 1, \cdots, k+1\}. 
\]
 One previous approach is to assume the IET observations follow a Dirichlet distribution, which motives the definition of Dirichlet depth \citep{qi2021dirichlet}.  However, we point out that a Dirichlet model is only a simplified assumption and in general it is very challenging to model data in simplex. In this paper, we propose to adopt the well-known Isometric Log-Ratio (ILR) transformation to map the IET vectors to a conventional vector space. In this manner, we can examine the distribution of the IET vectors in the equivalent, unconstrained Euclidean space.

The ILR transformation is an isometric isomorphism mapping from simplex space $\mathcal{S}^{k+1}$ to Euclidean space $\mathbb{R}^{k}$.  Specifically, the transformation of any $\bm{u}=(u_1,u_2,\dots,u_{k+1})^T\in\mathcal{S}^{k+1}$ is given in the following form \citep{pawlowsky2007lecture}: 
\begin{eqnarray}
\bm{u}^*=ilr(\bm{u})  =   \Psi\cdot\Big[\log\frac{u_1}{g(\bm{u})},\log\frac{u_2}{g(\bm{u})},\dots,\log\frac{u_{k+1}}{g(\bm{u})}\Big]^T \label{eq:ilr}
\end{eqnarray}
where $g(\bm{u})$ is the geometric mean of $\bm{u}$.  $\Psi$ is a matrix in $\mathbb R^{k\times (k+1)}$ which satisfies $\Psi\Psi^T=I_k$ and $\Psi^T\Psi=I_{k+1}-\frac{1}{k+1}\bm{1}_{k+1}\bm{1}_{k+1}^T$, where $I_k$ is the identity matrix in $\mathbb R^{k \times k}$, $I_{k+1}$ is the identity matrix in $\mathbb R^{(k+1) \times (k+1)}$, and $\bm{1}_{k+1}$ is a column vector of ones in $\mathbb R^{k+1}$. Based on Eqn. \eqref{eq:ilr}, the inverse of ILR, i.e. recovering $\bm{u}$ from $\bm{u}^*$, takes the following form \citep{pawlowsky2007lecture}: 
\begin{eqnarray}
\bm{u}=ilr^{-1}(\bm{u}^*)=(T_2-T_1)\cdot\frac{\exp({\bm{u}^*}^T\Psi)^T}{\exp({\bm{u}^*}^T\Psi)\cdot\bm{1}_{k+1}} \label{eq:bmu}
\end{eqnarray}
Based on ILR transformation, we point out an important property of the matrix $\Psi$, which will be used in our newly defined depth:
\begin{prop} \label{prop:poly}
All $k+1$ columns of the matrix $\Psi \in \mathbb R^{k\times (k+1)}$  form a regular simplex in $\mathbb{R}^k$ centered at origin with edge length $\sqrt{2}$. 
\end{prop}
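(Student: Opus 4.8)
The plan is to read off all pairwise inner products of the columns of $\Psi$ directly from the defining identity $\Psi^T\Psi = I_{k+1} - \frac{1}{k+1}\bm{1}_{k+1}\bm{1}_{k+1}^T$, and then verify the claims (edge length, regularity, centering, and genuineness of the simplex) one at a time. Write $\bm{v}_1,\dots,\bm{v}_{k+1}\in\mathbb{R}^k$ for the columns of $\Psi$. Since the $(i,j)$ entry of $\Psi^T\Psi$ is exactly the inner product $\bm{v}_i^T\bm{v}_j$, the given formula yields $\bm{v}_i^T\bm{v}_j = \delta_{ij} - \frac{1}{k+1}$ for all $i,j$. Essentially everything follows from this single relation.

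First I would compute the edge lengths. From the inner product formula, $\|\bm{v}_i\|^2 = 1 - \frac{1}{k+1} = \frac{k}{k+1}$ and $\bm{v}_i^T\bm{v}_j = -\frac{1}{k+1}$ for $i\neq j$, so that $\|\bm{v}_i-\bm{v}_j\|^2 = \|\bm{v}_i\|^2 + \|\bm{v}_j\|^2 - 2\,\bm{v}_i^T\bm{v}_j = \frac{2k+2}{k+1} = 2$. Hence every pair of vertices is at distance $\sqrt{2}$; since all $\binom{k+1}{2}$ edges share this common length, the configuration is regular. Next I would check that the vertices are centered at the origin, i.e.\ $\sum_{i}\bm{v}_i = \Psi\bm{1}_{k+1} = \bm{0}$. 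The cleanest route is to evaluate $\Psi\Psi^T\Psi$ in two ways: on one hand $(\Psi\Psi^T)\Psi = I_k\Psi = \Psi$, and on the other $\Psi(\Psi^T\Psi) = \Psi - \frac{1}{k+1}(\Psi\bm{1}_{k+1})\bm{1}_{k+1}^T$. Equating the two forces $(\Psi\bm{1}_{k+1})\bm{1}_{k+1}^T = 0$, and hence $\Psi\bm{1}_{k+1}=\bm{0}$, as desired.

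The step I expect to be the main obstacle is confirming that the $k+1$ points actually form a full simplex in $\mathbb{R}^k$, i.e.\ that they are affinely independent; equal edge lengths alone do not guarantee this. To settle it I would form the augmented matrix $M = [\,\Psi^T \mid \bm{1}_{k+1}\,]\in\mathbb{R}^{(k+1)\times(k+1)}$, whose $i$-th row is the homogenized vertex $(\bm{v}_i^T,\,1)$. Using $\Psi\Psi^T = I_k$, the centering $\Psi\bm{1}_{k+1}=\bm{0}$, and $\bm{1}_{k+1}^T\bm{1}_{k+1}=k+1$, a direct block computation gives
\[
M^T M = \begin{pmatrix} I_k & \bm{0} \\ \bm{0}^T & k+1\end{pmatrix},
\]
which is nonsingular. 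Therefore the homogenized vertices are linearly independent, so the vertices themselves are affinely independent. Combined with the equal pairwise distances established above, this shows that $\bm{v}_1,\dots,\bm{v}_{k+1}$ are the vertices of a regular simplex in $\mathbb{R}^k$, centered at the origin and with edge length $\sqrt{2}$, completing the proof.
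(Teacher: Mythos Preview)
Your proof is correct and follows essentially the same approach as the paper: both read the pairwise inner products directly from $\Psi^T\Psi = I_{k+1}-\frac{1}{k+1}\bm{1}_{k+1}\bm{1}_{k+1}^T$ to obtain $\|\bm v_i\|^2=\frac{k}{k+1}$, $\bm v_i^T\bm v_j=-\frac{1}{k+1}$, and hence $\|\bm v_i-\bm v_j\|=\sqrt{2}$, together with $\Psi\bm 1_{k+1}=\bm 0$ for the centering. Your explicit verification of affine independence via the augmented matrix is a welcome addition that the paper omits, though it is not strictly necessary since $k+1$ points at equal positive pairwise distance are automatically affinely independent.
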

\begin{proof}
Since $\Psi\Psi^T=I_k$ and $\Psi^T\Psi=I_{k+1}-\frac{1}{k+1}\bm{1}_{k+1}^T\bm{1}_{k+1}$, it is easy to verify that $\Psi\bm{1}_{k+1}=\bm{0}$, thus, origin is the center of this polyhedron. Moreover, for each column $\Psi_{:,p}$, $p=1,2,\dots,k+1$, ${\lVert\Psi_{:,p}\rVert}^2=\frac{k}{k+1}$, where $\lVert\cdot\rVert$ denotes the Euclidean norm in $\mathbb{R}^{k}$, $\langle\Psi_{:,p},\Psi_{:,q}\rangle =-\frac{1}{k+1}$ for distinct $p,q=1,2,\dots,k+1$, then for any $i,j=1,2,\dots,k+1$ and $i\neq j$, ${\lVert \Psi_{:,i}-\Psi_{:,j}\rVert}^2={\lVert\Psi_{:,i}\rVert}^2+{\lVert\Psi_{:,j}\rVert}^2-2\langle\Psi_{:,i},\Psi_{:,j}\rangle=\frac{2k}{k+1}+\frac{2}{k+1}=2$. 
\end{proof}

Using the ILR transformation and its inverse, a bijective mapping for three spaces $\mathbb{S}_k$, $\mathcal{S}^{k+1}$ and $\mathbb{R}^k$ can be established. Whenever a point process $\bm{s}\in\mathbb{S}_k$ within a fixed time domain is given, its IET $\bm{u}\in\mathcal{S}^{k+1}$ as well as the ILR transformation $\bm{u}^*\in\mathbb{R}^k$ can be easily obtained. On the other hand, for any $\bm{u}^*\in\mathbb{R}^k$, the corresponding point process in $\mathbb{S}_k$ can be recovered by the ILR inverse. Therefore, the distribution of $\bm{u}^*$ can be derived from the distribution of $\bm{u}$, which will be illustrated in detail in the next subsection. 

\subsubsection{The ILR transformation on uniform distribution}
For simplicity, we will at first study the homogeneous Poisson process (HPP) \citep{stoyan2013stochastic}. As pointed out in \citet{qi2021dirichlet}, given the cardinality $k$ of point process, the IET of an HPP is uniformly distributed on simplex space $\mathcal{S}^{k+1}$. In this way, the joint density function of the IET vector $(u_{1}, u_{2}, ..., u_{k+1})^T$ is: 
\begin{eqnarray}
    f_{\bm{u}}(u_{1}, u_{2}, ..., u_{k+1})=k!\cdot \frac{1}{(T_2-T_1)^{k}}, \label{eq:pdf}
\end{eqnarray}
Thus, in order to obtain the density function of $\bm{u}^*$, the Jacobian matrix $J$ of $\bm{u}$ corresponding to $\bm{u}^*$ should be derived in closed form by taking derivative of $u_j$ to $u_i^*$ for $i,j=1,2,\dots,k$. Since $u_{k+1}$ is not random given $u_1,\dots,u_k$, it can be omitted when deriving the density. Then, the density function of $\bm{u}^*$ is: 
\begin{eqnarray}
f_{\bm{u}^{*}}(u_{1}^{*}, ..., u_{k}^{*}) = \frac{k!\cdot |\det(J)|}{(T_2-T_1)^k} \label{eq:ilrpdf}
\end{eqnarray}
The closed form of Eqn. \eqref{eq:ilrpdf} can be summarized in the next theorem (see proof in Appendix \ref{app:der}).

\begin{thm} \label{thm:ppdf}
Let $\bm{s}=(s_1,s_2,\dots,s_k)$ be a realization of a homogeneous Poisson point process with $k$ time events in $[T_1,T_2]$. Denote $s_0=T_1$, $s_{k+1}=T_2$ and $\bm{u}^*=(u_{1}^{*}, u_{2}^{*}, ..., u_{k}^{*})^T$ as the ILR transformation of the IET $\bm{u}=(u_1,u_2,\dots,u_{k+1})^T=(s_1-s_0,s_2-s_1,\dots,s_{k+1}-s_k)^T$. Then the probability density function of $\bm{u}^*$ conditioned on its cardinality is: 
\begin{eqnarray} 
f_{\bm{u^{*}}}(\bm{u}^*)=f_{\bm{u^{*}}}(u_{1}^{*},\dots u_{k}^{*})=\frac{c}{(\sum_{p=1}^{k+1}e^{\sum_{i=1}^{k}u_{i}^{*}\Psi_{i,p}})^{k+1}},\label{eq:density}
\end{eqnarray}
where $c$ is the normalizing constant (to make the integral of the density be 1). 
\end{thm}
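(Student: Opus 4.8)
The plan is to reduce the claim to one Jacobian-determinant evaluation and then use the relation $\Psi\bm{1}_{k+1}=\bm0$ (shown in the proof of Proposition~\ref{prop:poly}) at two decisive moments. By Eqn.~\eqref{eq:ilrpdf} it suffices to compute $\det(J)$, where $J$ is the $k\times k$ Jacobian of $(u_1,\dots,u_k)$ with respect to $(u_1^*,\dots,u_k^*)$ and $u_{k+1}$ is eliminated through the simplex constraint. Reading the inverse transform \eqref{eq:bmu} component-wise, I would write $v_p=\exp\!\big(\sum_{i=1}^{k}u_i^*\Psi_{i,p}\big)$, $Z=\sum_{p=1}^{k+1}v_p$, and $u_j=(T_2-T_1)\,w_j$ with $w_j=v_j/Z$, which displays the map as a softmax reparametrized by $\Psi$.

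First I would differentiate. From $\partial v_p/\partial u_i^*=v_p\Psi_{i,p}$ the entries take the softmax form
\[
\frac{\partial u_j}{\partial u_i^*}=(T_2-T_1)\,w_j\big(\Psi_{i,j}-\bar\Psi_i\big),\qquad \bar\Psi_i:=\sum_{q=1}^{k+1}w_q\Psi_{i,q}=(\Psi\bm{w})_i,
\]
with $\bm{w}=(w_1,\dots,w_{k+1})^T$. In matrix form $J=(T_2-T_1)\,D_w A$, where $D_w=\mathrm{diag}(w_1,\dots,w_k)$, $A=(\Psi^{(k)})^T-\bm{1}_k\bar{\bm\Psi}^T$, $\Psi^{(k)}$ is the submatrix of the first $k$ columns of $\Psi$, and $\bar{\bm\Psi}=\Psi\bm{w}$. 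Hence $\det(J)=(T_2-T_1)^k\big(\prod_{p=1}^{k}w_p\big)\det(A)$.

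The main obstacle is the rank-one-corrected determinant $\det(A)$. I would apply the matrix determinant lemma, valid because $\Psi^{(k)}$ is invertible: the $k+1$ columns of $\Psi$ span $\mathbb{R}^k$ subject only to $\Psi\bm{1}_{k+1}=\bm0$, so any $k$ of them are linearly independent. That same relation gives $\Psi_{:,k+1}=-\Psi^{(k)}\bm{1}_k$, hence $\bar{\bm\Psi}=\Psi^{(k)}(\bm{w}^{(k)}-w_{k+1}\bm{1}_k)$ where $\bm{w}^{(k)}=(w_1,\dots,w_k)^T$, and the lemma collapses neatly to $\det(A)=(k+1)\,w_{k+1}\det(\Psi^{(k)})$. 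Substituting back, $\det(J)=(T_2-T_1)^k(k+1)\big(\prod_{p=1}^{k+1}w_p\big)\det(\Psi^{(k)})$.

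It remains to clear the apparent $\bm{u}^*$-dependence in the numerator, which is where $\Psi\bm{1}_{k+1}=\bm0$ is used a second time: since $\prod_{p=1}^{k+1}w_p=\big(\prod_{p=1}^{k+1}v_p\big)/Z^{k+1}$ and $\prod_{p=1}^{k+1}v_p=\exp\!\big((\Psi\bm{1}_{k+1})^T\bm{u}^*\big)=1$, the product of the $v_p$ is identically one. Feeding $\det(J)$ into Eqn.~\eqref{eq:ilrpdf} then yields $f_{\bm{u}^*}(\bm{u}^*)=(k+1)!\,|\det(\Psi^{(k)})|/Z^{k+1}$, which is precisely \eqref{eq:density} with $c=(k+1)!\,|\det(\Psi^{(k)})|$. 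I expect the algebra of $\det(A)$ and the recognition that both simplifications are consequences of the single identity $\Psi\bm{1}_{k+1}=\bm0$ to be the crux; everything else is routine differentiation and bookkeeping.
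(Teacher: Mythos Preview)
Your argument is correct, and it reaches the same conclusion as the paper by a genuinely different route. The paper (Appendix~\ref{app:der}) writes out the Jacobian entries explicitly, factors $e^{\sum_i u_i^*\Psi_{i,p}}$ from each row, and then performs elementary row operations: subtracting the first row from the others reduces the determinant to a sum $\sum_{p=1}^{k+1}e^{\sum_i u_i^*\Psi_{i,p}}D_p$, and a further row-addition shows $D_1=\cdots=D_k=0$, leaving only the $p=k+1$ term. Your approach instead recognizes the softmax structure up front, writes $J=(T_2-T_1)D_wA$ with $A=(\Psi^{(k)})^T-\bm{1}_k\bar{\bm\Psi}^T$ a rank-one perturbation, and dispatches $\det(A)$ in one line with the matrix determinant lemma. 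Both proofs ultimately hinge on the identity $\Psi\bm{1}_{k+1}=\bm0$ at the same two places (to collapse $\prod_p v_p$ to $1$, and to express $\Psi_{:,k+1}$ in terms of the first $k$ columns), but your version makes the structure more transparent and avoids the bookkeeping of the row-reduction; the paper's version is more elementary in that it needs no named lemma and no invertibility argument for $\Psi^{(k)}$. Your justification that any $k$ columns of $\Psi$ are independent (since the null space of $\Psi$ is exactly $\mathrm{span}\{\bm{1}_{k+1}\}$ by $\Psi^T\Psi=I_{k+1}-\tfrac{1}{k+1}\bm{1}\bm{1}^T$) is correct and worth stating explicitly.
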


To understand the density in Eqn. \eqref{eq:density}, we show two examples when $k =1$ and $2$: When $k = 1$, the density function can be simplified as $f_{\bm{u}^{*}}(u^{*})=\frac{\sqrt{2}}{(e^{u^{*}\Psi_{1,1}}+e^{-u^{*}\Psi_{1,1}})^{2}}$.  This function is shown in Fig. \ref{fig:special_case}(a).  We can see that it has a clear bell-shape, close to a standard normal distribution density. When $k=2$, the density function is  $f_{\bm{u}^{*}}(u_1^*,u_2^*)=\frac{6|\Psi_{1,1}\Psi_{2,2}-\Psi_{1,2}\Psi_{2,1}|}{(e^{u_1^*\Psi_{1,1}+u_2^*\Psi_{2,1}}+e^{u_1^*\Psi_{1,2}+u_2^*\Psi_{2,2}}+e^{u_1^*\Psi_{1,3}+u_2^*\Psi_{2,3}})^3}$, and the corresponding graphs of density as well as contours are shown in Fig. \ref{fig:special_case}(b). We also see a 2-dimensional bell shape in density.  However, we notice that the contours are not elliptical, but smoothed triangular. 

\begin{figure} [h!]
	\centering
	\subfigure[$k=1$]{
		\begin{minipage}[b]{0.47\textwidth}
			\centering
			\includegraphics[scale=0.3]{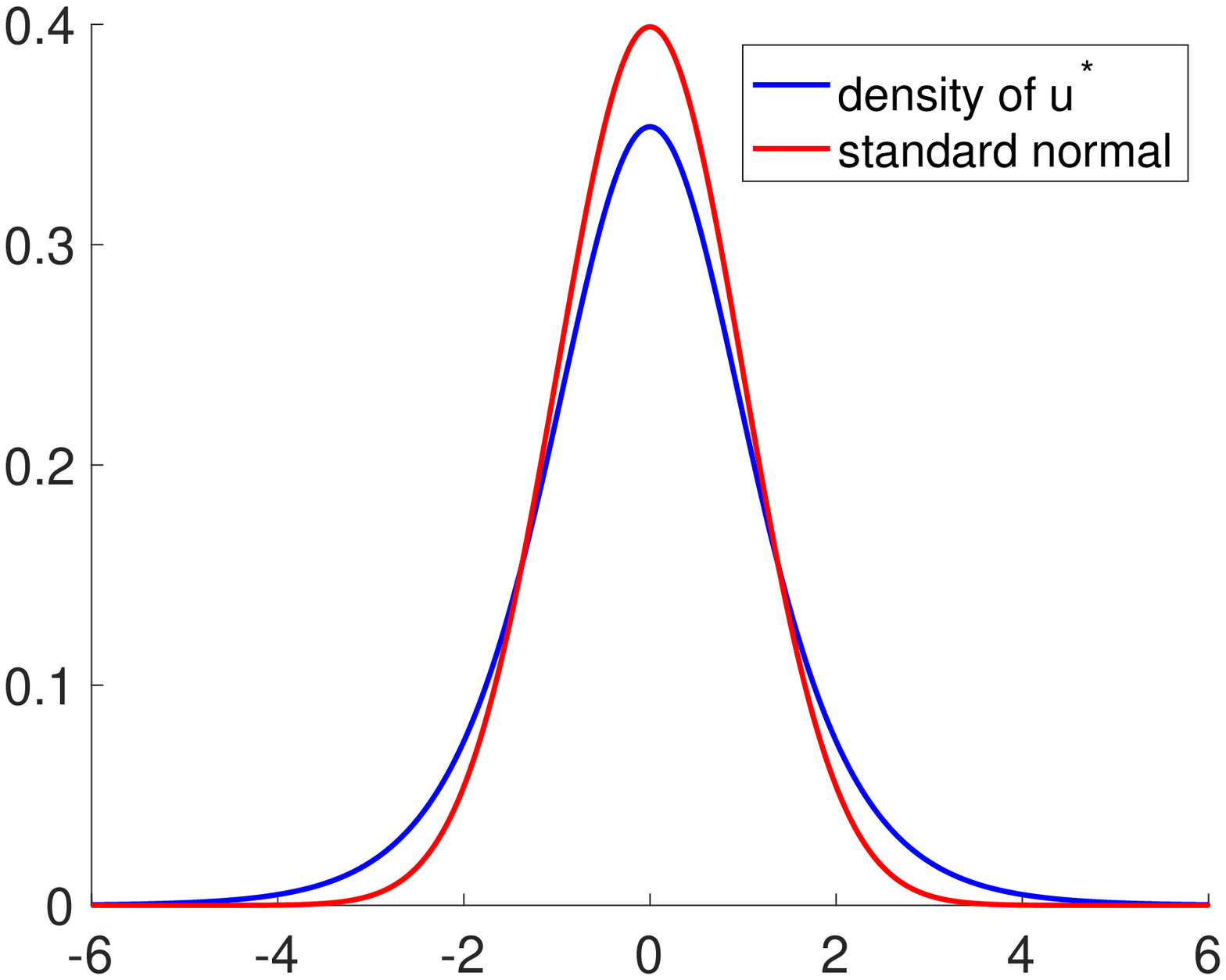}
		\end{minipage}
		\label{fig:1d_case}
	}
	\subfigure[$k=2$]{
		\begin{minipage}[b]{0.47\textwidth}
			\centering
			\includegraphics[scale=0.3]{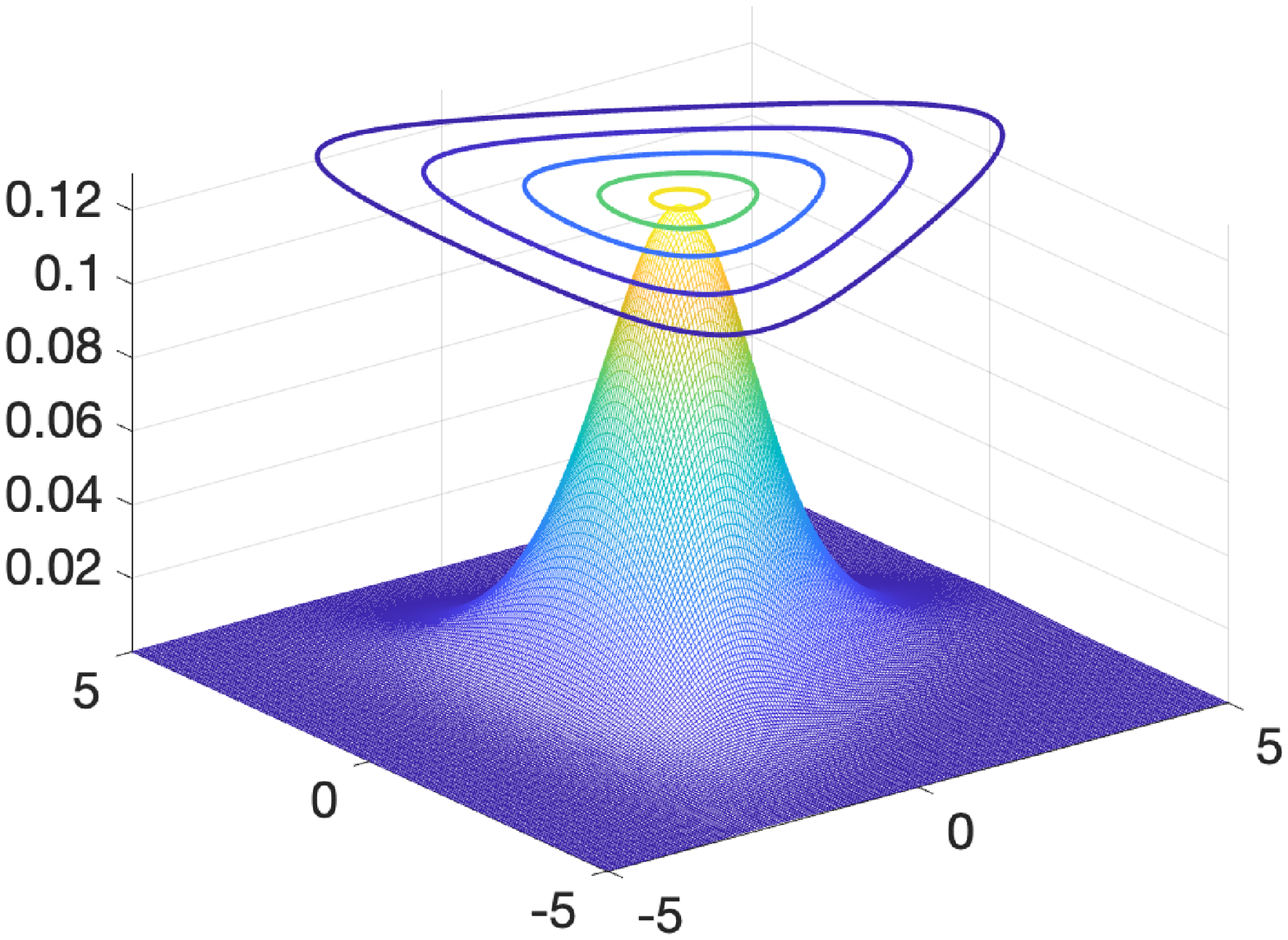}
		\end{minipage}
		\label{fig:2d_case}
	}
	\caption{Two special cases of the density in Eqn. (\ref{eq:density}). (a) The blue line is the density function when $k=1$, and the red line is the standard normal density function (b) Mesh plot of density when $k=2$, where the curves at the top are contours of the density function.}
	\label{fig:special_case}
\end{figure}

Based on the observations in these special cases, we have the following three important properties of the density in Eqn. \eqref{eq:density}:
\begin{enumerate}
\item The density is log-concave and uni-modal, i.e., it can be approximated using a normal distribution.
\item The approximated normal density has a standard form, i.e., the mean is 0 and the covariance is $I_k$. 
\item The density has a symmetry with respect to the origin in a simplex.  We will use this symmetry in the proposed depth function.   
\end{enumerate}
These properties are formally given in Propositions \ref{prop:logconcave} - \ref{prop:ortho} as follows:

\begin{prop} \label{prop:logconcativity}
The density function given in Eqn. \eqref{eq:density} is log-concave and uni-modal, and the global maximum point is the origin in the Euclidean space $\mathbb{R}^k$. 
\label{prop:logconcave}
\end{prop}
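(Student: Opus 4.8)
The plan is to exploit the fact that the denominator in Eqn.~\eqref{eq:density} is built from a log-sum-exp of affine functions, which renders the log-density concave almost immediately. First I would take the logarithm and write
\[
\log f_{\bm{u}^*}(\bm{u}^*) = \log c - (k+1)\log\Big(\sum_{p=1}^{k+1} e^{\langle \bm{u}^*, \Psi_{:,p}\rangle}\Big),
\]
where each exponent $\langle \bm{u}^*, \Psi_{:,p}\rangle = \sum_{i=1}^k u_i^* \Psi_{i,p}$ is a linear function of $\bm{u}^*$. Denoting the log-sum-exp term by $\mathrm{LSE}(\bm{u}^*)$, I would invoke the standard facts that the log-sum-exp map is convex and that convexity is preserved under composition with an affine map; hence $\mathrm{LSE}$ is convex, and $\log f = \log c - (k+1)\mathrm{LSE}$ is concave. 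This establishes log-concavity.

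To locate the maximum, I would minimize $\mathrm{LSE}$. Its gradient is the softmax-weighted combination $\nabla \mathrm{LSE}(\bm{u}^*) = \Psi\bm{w}(\bm{u}^*)$, where the weights are $w_p = e^{\langle \bm{u}^*, \Psi_{:,p}\rangle}/\sum_q e^{\langle \bm{u}^*, \Psi_{:,q}\rangle}$. Evaluating at the origin gives uniform weights $w_p = 1/(k+1)$, so $\nabla \mathrm{LSE}(\bm{0}) = \tfrac{1}{k+1}\Psi\bm{1}_{k+1} = \bm{0}$, where the last equality is precisely the identity $\Psi\bm{1}_{k+1}=\bm{0}$ established in the proof of Proposition~\ref{prop:poly}. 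Since $\mathrm{LSE}$ is convex, this stationary point is a global minimizer, so the origin is a global maximizer of $f_{\bm{u}^*}$.

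It remains to upgrade concavity to strict concavity, so that the mode is unique and the density is genuinely uni-modal. The Hessian of $\mathrm{LSE}$ is $\nabla^2 \mathrm{LSE}(\bm{u}^*) = \Psi\big(\mathrm{diag}(\bm{w}) - \bm{w}\bm{w}^T\big)\Psi^T$, and the inner matrix $M = \mathrm{diag}(\bm{w}) - \bm{w}\bm{w}^T$ is the covariance of a categorical distribution, hence positive semidefinite with null space exactly $\mathrm{span}(\bm{1}_{k+1})$ whenever all $w_p>0$ (which always holds here). To show $\Psi M \Psi^T \succ 0$, I would take any $\bm{v}\neq\bm{0}$ and observe that $\bm{v}^T\Psi M\Psi^T\bm{v}=0$ forces $\Psi^T\bm{v}\in\mathrm{span}(\bm{1}_{k+1})$; but the row space of $\Psi$ is orthogonal to $\bm{1}_{k+1}$ (again by $\Psi\bm{1}_{k+1}=\bm{0}$), so $\Psi^T\bm{v}=\bm{0}$, and then $\Psi\Psi^T\bm{v}=\bm{v}=\bm{0}$ using $\Psi\Psi^T=I_k$, a contradiction. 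Hence the Hessian is positive definite, $\mathrm{LSE}$ is strictly convex, $\log f$ is strictly concave, and the global maximum at the origin is unique, which yields uni-modality. The main obstacle is exactly this last step: semidefiniteness of $M$ alone does not deliver a strictly positive definite Hessian, and one must combine both structural identities of $\Psi$ — its full row rank ($\Psi\Psi^T=I_k$) and the orthogonality of its rows to $\bm{1}_{k+1}$ — to exclude all degenerate directions.
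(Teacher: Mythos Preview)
Your argument is correct and follows the same overall strategy as the paper: compute the gradient and Hessian of the log-density, show the gradient vanishes at the origin via $\Psi\bm{1}_{k+1}=\bm{0}$, and show the Hessian is negative definite. The paper carries this out by explicit entrywise differentiation and then factors the Hessian as $-\frac{k+1}{(\sum_p e^{\cdots})^2}\,BB^T$ for a concrete $k\times\binom{k+1}{2}$ matrix $B$, after which it simply asserts that $BB^T$ is positive definite. Your route is more structured: you recognize the log-density as an affine shift of $-(k+1)$ times a log-sum-exp, invoke the standard convexity of log-sum-exp, and write the Hessian in the canonical softmax form $\Psi(\mathrm{diag}(\bm{w})-\bm{w}\bm{w}^T)\Psi^T$. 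The payoff is that your final step---using both $\Psi\Psi^T=I_k$ and $\Psi\bm{1}_{k+1}=\bm{0}$ to rule out degenerate directions---actually \emph{proves} strict positive definiteness, whereas the paper's ``$BB^T$ is positive definite'' is stated without justification (it is true, but requires exactly the rank argument you supply). So your version is both tidier and slightly more complete.
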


The proof of Proposition \ref{prop:logconcativity} is given in Appendix \ref{app:logcon}. Based on this result, the density function in Eqn. \eqref{eq:density} is centered at origin and the value decreases from origin in $\mathbb{R}^k$, which makes the density satisfy the second and third depth properties mentioned in Section \ref{sec:depdef}. Thus, if $\bm{u}^*$ is mapped back to $\mathbb{S}_k$, the density function in Eqn. \eqref{eq:density} provides an ideal candidate for point process depth conditioned on its cardinality, which will be formally defined in Section \ref{sec:depthdefinition}. 

The original density of $\bm{u}^*$ in Eqn. \eqref{eq:density} is in a complicated form and its mathematical properties are difficult to study, especially when $k$ is large.  Based on Proposition \ref{prop:logconcativity}, one can approximate the density using a normal distribution.  This approximation may simplify calculation on the given density.  For example, in Bayesian statistics, a method called Laplacian approximation estimates posterior distribution by using such approximated normal distribution.  
By looking at the contours in Figure \ref{fig:special_case}(b), we can see the contour curves near the center do have a circular shape.  This observation is formally given in the following proposition (see proof in Appendix \ref{app:contour}).
\begin{prop} \label{prop:sphere}
For the density function in Eqn. \eqref{eq:density}, if the Euclidean norm of $\bm{u}^*$ is small, the shape of contour is close to a hyper-sphere centered at origin. 
\end{prop}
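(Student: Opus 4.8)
The plan is to translate the claim about the contours of $f_{\bm{u}^*}$ into a claim about the level sets of the denominator in Eqn. \eqref{eq:density}, and then to read off the local shape of those level sets from a second-order Taylor expansion at the origin. Writing $\Psi_{:,p}$ for the $p$-th column of $\Psi$, so that $\sum_{i=1}^k u_i^* \Psi_{i,p} = \langle \bm{u}^*, \Psi_{:,p}\rangle$, set $g(\bm{u}^*) = \sum_{p=1}^{k+1} e^{\langle \bm{u}^*, \Psi_{:,p}\rangle}$. Since $f_{\bm{u}^*} = c\, g^{-(k+1)}$ is a strictly decreasing function of $g$, each contour $\{f_{\bm{u}^*} = \text{const}\}$ coincides with a level set $\{g = \text{const}\}$; hence it suffices to show that the level sets of $g$ near the origin are approximately hyper-spheres centered at the origin.

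First I would compute the low-order Taylor data of $g$ at $\bm{u}^* = \bm{0}$. The value is $g(\bm{0}) = k+1$. Differentiating gives $\nabla g(\bm{u}^*) = \sum_{p=1}^{k+1}\Psi_{:,p}\, e^{\langle\bm{u}^*,\Psi_{:,p}\rangle}$, so $\nabla g(\bm{0}) = \sum_{p=1}^{k+1}\Psi_{:,p} = \Psi\bm{1}_{k+1} = \bm{0}$, where the final identity is the one already established in the proof of Proposition \ref{prop:poly}. For the Hessian, $\nabla^2 g(\bm{u}^*) = \sum_{p=1}^{k+1}\Psi_{:,p}\Psi_{:,p}^T\, e^{\langle\bm{u}^*,\Psi_{:,p}\rangle}$, and at the origin $\nabla^2 g(\bm{0}) = \sum_{p=1}^{k+1}\Psi_{:,p}\Psi_{:,p}^T = \Psi\Psi^T = I_k$. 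Thus the vanishing of the gradient and the isotropy of the Hessian both follow directly from the two defining identities $\Psi\bm{1}_{k+1}=\bm{0}$ and $\Psi\Psi^T = I_k$.

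Combining these yields the expansion
\[
g(\bm{u}^*) = (k+1) + \tfrac{1}{2}\lVert \bm{u}^*\rVert^2 + R(\bm{u}^*),
\]
with remainder $R(\bm{u}^*) = O(\lVert\bm{u}^*\rVert^3)$. Since the quadratic part is exactly the rotationally invariant $\tfrac12\lVert\bm{u}^*\rVert^2$, the level set $\{g = (k+1)+\epsilon\}$ obeys $\lVert\bm{u}^*\rVert^2 = 2\epsilon + O(\epsilon^{3/2})$, so as $\epsilon \to 0$ the distance from the origin to the contour becomes constant across all directions; that is, the contour tends to a sphere centered at the origin.

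The one point that needs care is to make ``close to a hyper-sphere'' quantitative rather than pictorial. I would do this by bounding $R$ uniformly: because $\lVert\Psi_{:,p}\rVert^2 = k/(k+1) \le 1$ for every $p$ (from Proposition \ref{prop:poly}), all higher directional derivatives of $g$ are bounded, so Taylor's theorem gives $|R(\bm{u}^*)|\le C_k \lVert\bm{u}^*\rVert^3$ on a neighborhood of the origin. Solving $\tfrac12\lVert\bm{u}^*\rVert^2 + R = \epsilon$ along each unit direction then shows that the radius of the contour in any direction equals $\sqrt{2\epsilon}\,\bigl(1 + O(\sqrt{\epsilon})\bigr)$ with a direction-independent error term, which is the precise form of the asserted near-spherical shape. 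I expect this uniform control of the cubic remainder --- rather than the Taylor computation itself, which is immediate from the two matrix identities --- to be the only real obstacle.
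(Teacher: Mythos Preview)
Your proposal is correct and follows essentially the same route as the paper: reduce the density contours to level sets of $g(\bm{u}^*)=\sum_{p=1}^{k+1}e^{\langle\bm{u}^*,\Psi_{:,p}\rangle}$, Taylor-expand at the origin, and use $\Psi\bm{1}_{k+1}=\bm{0}$ and $\Psi\Psi^T=I_k$ to obtain the isotropic quadratic $\tfrac12\lVert\bm{u}^*\rVert^2$. The paper's argument stops at the formal second-order expansion and simply declares the resulting level set a sphere, whereas your added control of the cubic remainder makes the ``close to a hyper-sphere'' statement genuinely quantitative; this is a refinement rather than a different approach.
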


In general, because of log-concavity, one may propose to use a normal distribution to approximate the density function in Eqn. \eqref{eq:density} for the purposes of simplification or efficiency. This normal approximation adopts the idea of the Laplacian approximation, and it will actually lead to an alternative approach of depth definition later in this paper.  In our framework, it is interesting to find that the approximated normal distribution has a standard form (proof is given in Appendix \ref{app:gau}): 
\begin{prop} \label{prop:lap}
The normal approximation of the density function in Eqn. \eqref{eq:density} is the $k$-dimensional standard multivariate normal distribution $N(0, I_{k})$.  
\end{prop}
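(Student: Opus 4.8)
The plan is to read "normal approximation" as the Laplace approximation: expand the log-density to second order about its global maximum and extract the Gaussian whose mode and curvature match. By Proposition~\ref{prop:logconcativity} the global maximum sits at the origin, so the approximating normal automatically has mean $\bm{0}$; the real content of the claim is that the covariance equals $I_k$, i.e. that the Hessian of the log-density at the origin is exactly $-I_k$.

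First I would write $\ell(\bm{u}^*) = \log f_{\bm{u}^*}(\bm{u}^*) = \log c - (k+1)\log Z(\bm{u}^*)$, where $Z(\bm{u}^*) = \sum_{p=1}^{k+1} e^{\langle \bm{u}^*,\, \Psi_{:,p}\rangle}$ and $\Psi_{:,p}$ denotes the $p$-th column of $\Psi$. A single differentiation gives $\nabla\ell = -(k+1)\,\nabla Z / Z$ with $\nabla Z = \sum_{p} e^{\langle \bm{u}^*,\,\Psi_{:,p}\rangle}\Psi_{:,p}$. Evaluating at the origin, every exponential equals $1$, so $Z(\bm{0}) = k+1$ and $\nabla Z(\bm{0}) = \sum_{p}\Psi_{:,p} = \Psi\bm{1}_{k+1} = \bm{0}$, using the identity $\Psi\bm{1}_{k+1} = \bm{0}$ established in the proof of Proposition~\ref{prop:poly}. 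This re-confirms the origin is a critical point.

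Next I would compute the Hessian via the quotient rule, obtaining $\nabla^2\ell = -(k+1)\big[\nabla^2 Z/Z - \nabla Z\,(\nabla Z)^T/Z^2\big]$, where $\nabla^2 Z = \sum_{p} e^{\langle \bm{u}^*,\,\Psi_{:,p}\rangle}\Psi_{:,p}\Psi_{:,p}^T$. At the origin the rank-one correction term drops out because $\nabla Z(\bm{0}) = \bm{0}$, and $\nabla^2 Z(\bm{0}) = \sum_{p}\Psi_{:,p}\Psi_{:,p}^T = \Psi\Psi^T = I_k$ by the defining property $\Psi\Psi^T = I_k$. Hence $\nabla^2\ell(\bm{0}) = -(k+1)\cdot I_k/(k+1) = -I_k$. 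The Laplace approximation then yields the Gaussian with mean at the mode $\bm{0}$ and covariance $\big(-\nabla^2\ell(\bm{0})\big)^{-1} = I_k$, which is exactly $N(\bm{0}, I_k)$.

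I do not anticipate a genuine obstacle: the two algebraic properties of $\Psi$, namely $\Psi\bm{1}_{k+1} = \bm{0}$ and $\Psi\Psi^T = I_k$, collapse the computation cleanly, and the factor $(k+1)$ cancels exactly to produce the identity covariance. The only point demanding care is the bookkeeping in the quotient-rule Hessian together with verifying that the rank-one term vanishes at the origin (it does, since $\nabla Z(\bm{0}) = \bm{0}$); beyond that the statement is a direct consequence of the matrix identities already available in the excerpt.
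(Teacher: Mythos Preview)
Your proposal is correct and follows essentially the same Laplace-approximation route as the paper's own proof: second-order Taylor expansion of the log-density at the origin, with the gradient vanishing via $\Psi\bm{1}_{k+1}=\bm{0}$ and the Hessian collapsing to $-I_k$ via $\Psi\Psi^T=I_k$. The only cosmetic difference is that the paper carries out the Hessian computation entry-by-entry (using $\sum_p\Psi_{s,p}=0$, $\sum_p\Psi_{s,p}^2=1$, $\sum_p\Psi_{s,p}\Psi_{t,p}=0$), whereas you package the same identities in the matrix form $\nabla^2 Z(\bm{0})=\Psi\Psi^T=I_k$.
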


Moreover, according to Fig. \ref{fig:special_case}(b), the density contours appear symmetric about the origin for orthogonal transformations on three corners of the smoothed triangular shape.  In general, we will show this symmetric property with respect to $(k+1)!$ orthogonal transformations in $k$-dimensional space.  We will at first define those transformations: Assume $r(\cdot)$ is any permutation operation on  $1,2,\dots,k+1$, and denote $\Psi_r$ as the matrix after permuting the column of $\Psi$ in the order $r(1),r(2),\dots,r(k+1)$.  That is, using column-wise representation, if $\Psi = (\Psi_{:,1}, \cdots, \Psi_{:,k+1})$, then $\Psi_r = (\Psi_{:,r(1)}, \cdots, \Psi_{:,r(k+1)})$.
Let 
\begin{equation}
A_r=\Psi\Psi_r^T \in \mathbb R^{k\times k}. 
\label{eq:ortho}
\end{equation}
It is easy to verify that $A_r$ is orthogonal and $A_r^T\Psi_{:,i}=\Psi_{:,r(i)}$, $i=1,2,\dots,k+1$. 

Using Eqn. \eqref{eq:density}, we have
\[
f_{\bm{u^{*}}}(\bm{u}^*)=\frac{c}{(\sum_{p=1}^{k+1}e^{\sum_{i=1}^{k}u_{i}^{*}\Psi_{i,p}})^{k+1}}
= \frac{c}{(\sum_{p=1}^{k+1}e^{({\bm{u}^{*}})^T\Psi_{:,p}})^{k+1}}
\]
Therefore,
\[
f_{\bm{u^{*}}}(A_r \bm{u}^*)
= \frac{c}{(\sum_{p=1}^{k+1}e^{{(A_r\bm{u}^{*}})^T\Psi_{:,p}})^{k+1}}
= \frac{c}{(\sum_{p=1}^{k+1}e^{{(\bm{u}^{*}})^T\Psi_{:,r(p)}})^{k+1}} = f_{\bm{u^{*}}}(\bm{u}^*).
\]
Therefore, the above analysis has shown the following proposition on orthogonal symmetry: 
\begin{prop} \label{prop:ortho} For any orthogonal transformation $A_r$ defined via a permutation $r(\cdot)$ in Eqn. \eqref{eq:ortho},  
 $f_{\bm{u^{*}}}(A_r\bm{u}^*)=f_{\bm{u^{*}}}(\bm{u}^*)$ for any column vector $\bm{u}^*\in\mathbb{R}^k$. 
\end{prop}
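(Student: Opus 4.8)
The plan is to exploit the fact that, by Theorem \ref{thm:ppdf}, the density depends on $\bm{u}^*$ only through the $k+1$ scalar quantities $\eta_p := (\bm{u}^*)^T\Psi_{:,p}$, and does so in a fully symmetric way: writing $f_{\bm{u^{*}}}(\bm{u}^*) = c\big(\sum_{p=1}^{k+1} e^{\eta_p}\big)^{-(k+1)}$, the denominator is invariant under any permutation of the multiset $\{\eta_1,\dots,\eta_{k+1}\}$. Consequently, to prove $f_{\bm{u^{*}}}(A_r\bm{u}^*)=f_{\bm{u^{*}}}(\bm{u}^*)$ it suffices to show that applying $A_r$ merely relabels these inner products according to $r$, i.e. that $(A_r\bm{u}^*)^T\Psi_{:,p} = (\bm{u}^*)^T\Psi_{:,r(p)}=\eta_{r(p)}$ for each $p$. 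Since $\{\eta_{r(p)}\}_{p=1}^{k+1}$ is just a reordering of $\{\eta_p\}_{p=1}^{k+1}$, the exponential sum is unchanged and the claim follows immediately.

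The whole argument therefore reduces to the single algebraic identity $A_r^T\Psi_{:,i}=\Psi_{:,r(i)}$ for $i=1,\dots,k+1$. To establish it I would substitute $A_r^T=\Psi_r\Psi^T$ and compute $A_r^T\Psi_{:,i}=\Psi_r(\Psi^T\Psi_{:,i})$. The vector $\Psi^T\Psi_{:,i}$ is the $i$-th column of $\Psi^T\Psi = I_{k+1}-\tfrac{1}{k+1}\bm{1}_{k+1}\bm{1}_{k+1}^T$, namely $e_i-\tfrac{1}{k+1}\bm{1}_{k+1}$ with $e_i$ the standard basis vector in $\mathbb{R}^{k+1}$. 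Applying $\Psi_r$ and using that its $i$-th column is $\Psi_{:,r(i)}$, together with $\Psi_r\bm{1}_{k+1}=\sum_q\Psi_{:,q}=\Psi\bm{1}_{k+1}=\bm{0}$ (the centering relation from the proof of Proposition \ref{prop:poly}), gives $A_r^T\Psi_{:,i}=\Psi_{:,r(i)}-\tfrac{1}{k+1}\bm{0}=\Psi_{:,r(i)}$, as required.

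For completeness I would also record that $A_r$ is genuinely orthogonal, since the proposition is phrased as a symmetry statement: using $\Psi_r^T\Psi_r=I_{k+1}-\tfrac{1}{k+1}\bm{1}_{k+1}\bm{1}_{k+1}^T$ (column permutation preserves the Gram matrix because $r$ is a bijection) one gets $A_rA_r^T=\Psi(\Psi^T\Psi)\Psi^T=\Psi\Psi^T-\tfrac{1}{k+1}(\Psi\bm{1}_{k+1})(\Psi\bm{1}_{k+1})^T=I_k$. The main obstacle is isolating and verifying the identity $A_r^T\Psi_{:,i}=\Psi_{:,r(i)}$; this is the only place where the specific structure of $\Psi$ — in particular the centering $\Psi\bm{1}_{k+1}=\bm{0}$ that kills the $\tfrac{1}{k+1}\bm{1}\bm{1}^T$ correction term — is essential. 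Once that identity is in hand, the remaining substitution into the symmetric exponential sum is purely mechanical.
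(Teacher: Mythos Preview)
Your proposal is correct and follows essentially the same route as the paper: rewrite the density as $c\big(\sum_p e^{(\bm{u}^*)^T\Psi_{:,p}}\big)^{-(k+1)}$, use the identity $A_r^T\Psi_{:,p}=\Psi_{:,r(p)}$ to see that $A_r$ merely permutes the exponents, and conclude invariance of the symmetric sum. The paper simply asserts that $A_r$ is orthogonal and that $A_r^T\Psi_{:,i}=\Psi_{:,r(i)}$ is ``easy to verify,'' whereas you spell out these verifications via $\Psi^T\Psi=I_{k+1}-\tfrac{1}{k+1}\bm{1}\bm{1}^T$ and $\Psi\bm{1}_{k+1}=\bm{0}$; this adds useful detail but does not change the argument.
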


\begin{remark}
In Proposition \ref{prop:ortho}, if $k=2$, then there are $(2+1)! = 6$ different orthogonal matrices in total. Three of them are rotation matrices and the corresponding rotation angles are 0, $\frac{2\pi}{3}$, and $\frac{4\pi}{3}$, respectively.  The other three are reflection matrices.
\end{remark}

Combining Proposition $\ref{prop:poly}$ and Proposition $\ref{prop:ortho}$, the key part of density function of $\bm{u}^*$ can be considered as summing up the exponential of the inner product of $\bm{u}^*$ with the vertices of a $k+1$ regular simplex centered at origin in $\mathbb{R}^k$. Therefore, the orthogonal transformation of $\bm{u}^*$ introduced in Proposition $\ref{prop:ortho}$ can be viewed as orthogonal transformation of a regular simplex corresponding to the origin. In this way, the orthogonal symmetry of the density of $\bm{u}^*$ can be easily interpreted and the center of this density is the same as the center of the regular simplex.  

\subsection{ILR depth for homogeneous Poisson process}
\label{sec:ilrhpp}
In this subsection, we will formally define the depth for HPP, conditioned on its cardinality.  The definition is based on the density of the ILR transformed inter-event times in Eqn. \eqref{eq:density} and we call the new method the ILR depth. 

\subsubsection{Definition} \label{sec:depthdefinition}
The density function of the ILR transformed IET in an HPP is given in Eqn. \eqref{eq:density}. Based on the fact that the ILR transformation is isometric isomorphism between the simplex $\mathcal{S}^{k+1}$ and Euclidean space $\mathbb{R}^k$ \citep{pawlowsky2007lecture}, this density can capture temporal pattern in the original point process.  It is shown in Proposition \ref{prop:logconcave} that this density is log-concave and uni-modal, and therefore it provides an ideal form to define the depth.  

As the ILR transformation has a closed-form inverse, we can express the depth in terms of the original point process time events. Based on Eqn. \eqref{eq:density}, we have  
\begin{eqnarray*}
& & \Big(\sum_{p=1}^{k+1}e^{\sum_{i=1}^{k}u_i^*\Psi_{i,p}}\Big)^{k+1} = \Big(\sum_{p=1}^{k+1}e^{\log\frac{\bm{u}^T}{g(\bm{u})}\Psi^T\Psi_{:,p}}\Big)^{k+1} \\
&=& \Big(\sum_{p=1}^{k+1}e^{\log\frac{u_p}{g(\bm{u})}}\Big)^{k+1} 
= \Big(\frac{T_2-T_1}{g(\bm{u})}\Big)^{k+1} 
= \frac{(T_2-T_1)^{k+1}}{\prod_{i=1}^{k+1}(s_i-s_{i-1})}
\end{eqnarray*}
In this way, the formal depth definition conditioned on cardinality can be derived in terms of a point process $\bm{s}$, its IET $\bm{u}$, or the ILR transformation $\bm{u}^*$ of IET. In other words, the depth can be defined on three equivalent spaces, e.g. $\mathbb{S}_k$, $\mathcal{S}^{k+1}$ and $\mathbb{R}^k$. 
Similar to the commonly used Mahalanobis depth in \citet{liu1993quality}, the density in Eqn. \eqref{eq:density} will not be directly used to define depth because the value of density will decrease sharply when the data point deviates from the global maximum point. This phenomenon will become more evident when the dimension $k$ is large. Instead, similar to the definition of the Mahalanobis depth, a logarithm-based increasing function $f(\cdot)=\frac{1}{1-\log(\cdot)}$ can be used to the kernel part of density to alleviate the decreasing rate. Therefore, the ILR depth can be formally defined in the following form:  
\begin{definition} \label{def:formal}
Let $\bm{s}=(s_1,s_2,\dots,s_k)\in\mathbb{S}_k$ be a realization of an HPP in the time domain $[T_1,T_2]$ with $T_1< s_1< s_2 < \cdots < s_k < T_2$, denote $s_0=T_1$, $s_{k+1}=T_2$, $\bm{u}=(u_1,u_2,\dots,u_{k+1})^T=(s_1-s_0,s_2-s_1,\dots,s_{k+1}-s_k)^T$ as the IET and $\bm{u}^*=(u_1^*,u_2^*,\dots,u_k^*)^T$ as the ILR transform of $\bm{u}$. Then, the ILR depth of $\bm{s}$ conditioned on $|\bm{s}|=k$ is defined as: 
\begin{eqnarray}
    D_{c}(\bm{s};P_{S||S|=k}) &=& \frac{1}{1-\log\Big(\frac{c}{(\sum_{p=1}^{k+1}e^{\sum_{i=1}^{k}u_{i}^{*}\Psi_{i,p}})^{k+1}}\Big)} \label{eq:ilrdepth_rk} \\
    &=& \frac{1}{1-\log\Big(\frac{c}{(T_2-T_1)^{k+1}}\prod_{i=1}^{k+1}(s_i-s_{i-1})\Big)}  \label{eq:ilrdepth} 
\end{eqnarray}
where $c$ is a positive constant within interval $\big(0,e(k+1)^{k+1}\big)$in order to make $ D_{c}(\bm{s};P_{S||S|=k})$ positive. If the maximum value of $D_{c}(\bm{s};P_{S||S|=k})$ is constrained to be $1$, then $c=(k+1)^{k+1}$. If $\bm{s}\in\mathbb{B}_k$, then $D_{c}(\bm{s};P_{S||S|=k})=0$. 
\end{definition}
\begin{remark}
In Theorem \ref{thm:ppdf}, there is an important assumption that the point process $\bm{s}$ belongs to the interior of $\mathbb{S}_k$, i.e., $\bm{s} \notin \mathbb{B}_k$.  Otherwise, the ILR transformation cannot be properly conducted. However, in Definition \ref{def:formal}, if $\bm{s}\in\mathbb{B}_k$, the depth value is defined to be $0$ since in this case the ILR depth is continuous at boundary set $\mathbb{B}_k$ based on Eqn. \eqref{eq:ilrdepth}. 
\end{remark}

In the remaining part of this paper, the constant $c$ in Definition \ref{def:formal} is fixed as $(k+1)^{k+1}$ to normalize the maximum value of the ILR depth being $1$.

\subsubsection{Illustrations} \label{sec:illustration_ILR}


Assume the time domain is $[0,2]$ and the intensity function is a constant value $1$.  Then $1000$ HPP realizations with cardinality being 2 are generated. 
For each realization, the IET is a three dimensional vector in the simplex $\mathcal{S}^3$.  Thus, a 2-dimensional ternary plot \citep{pawlowsky2007lecture} can be conducted together with the ILR depth value as the contour according to Eqn. \eqref{eq:ilrdepth}.  The result is shown in Fig. \ref{fig:ILR_contour}(a). The depth values and their contours in the transformed Euclidean space based on Eqn. \eqref{eq:ilrdepth_rk} are also shown in Fig. \ref{fig:ILR_contour}(b). 
\begin{figure} [h!]
	\centering
	\subfigure[data points in Simplex]{
		\begin{minipage}[b]{0.47\textwidth}
			\centering
			\includegraphics[scale=0.3]{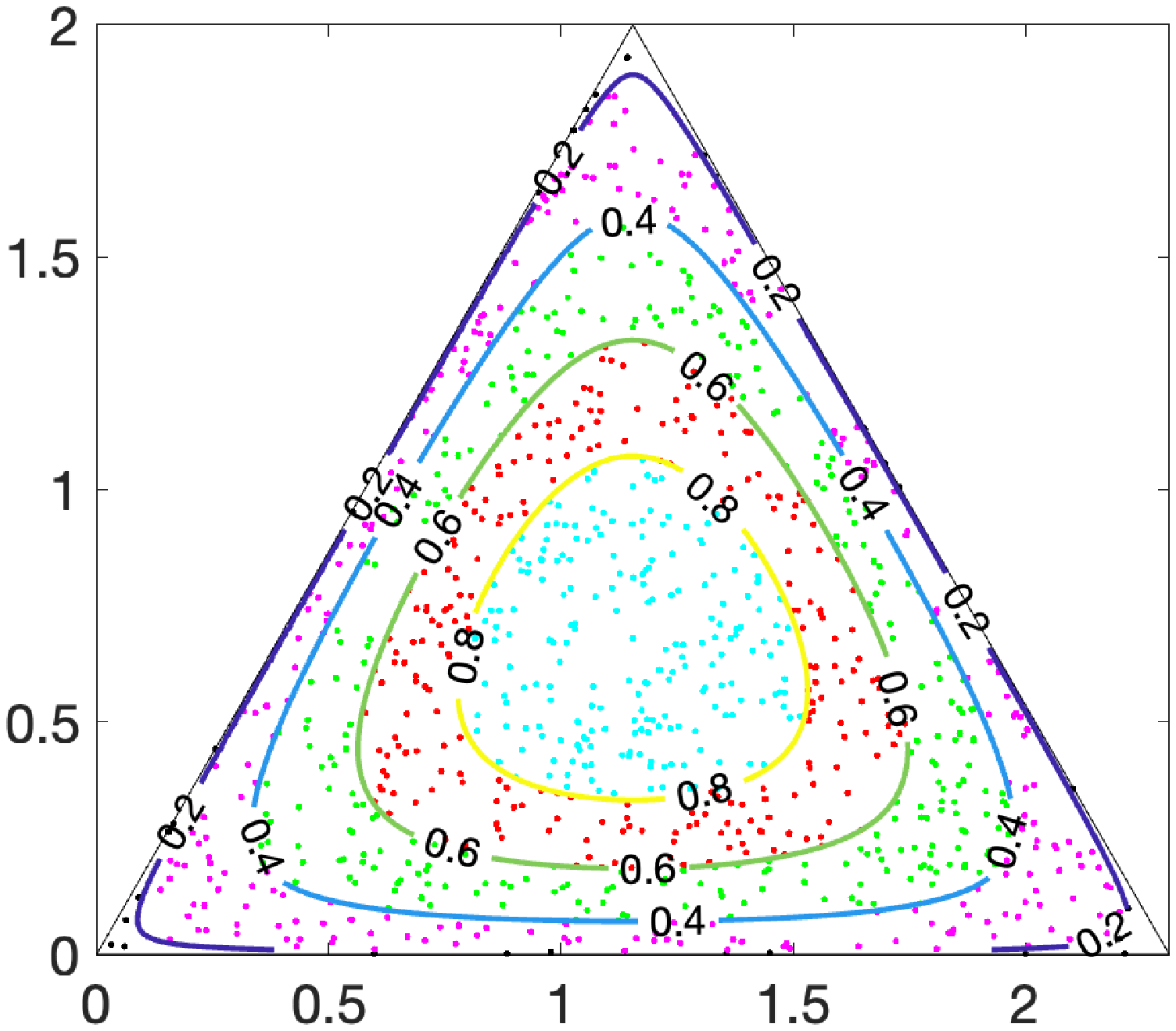}
		\end{minipage}
		\label{fig:ILR_ternary}
	}
	\subfigure[data points after ILR]{
		\begin{minipage}[b]{0.47\textwidth}
			\centering
			\includegraphics[scale=0.3]{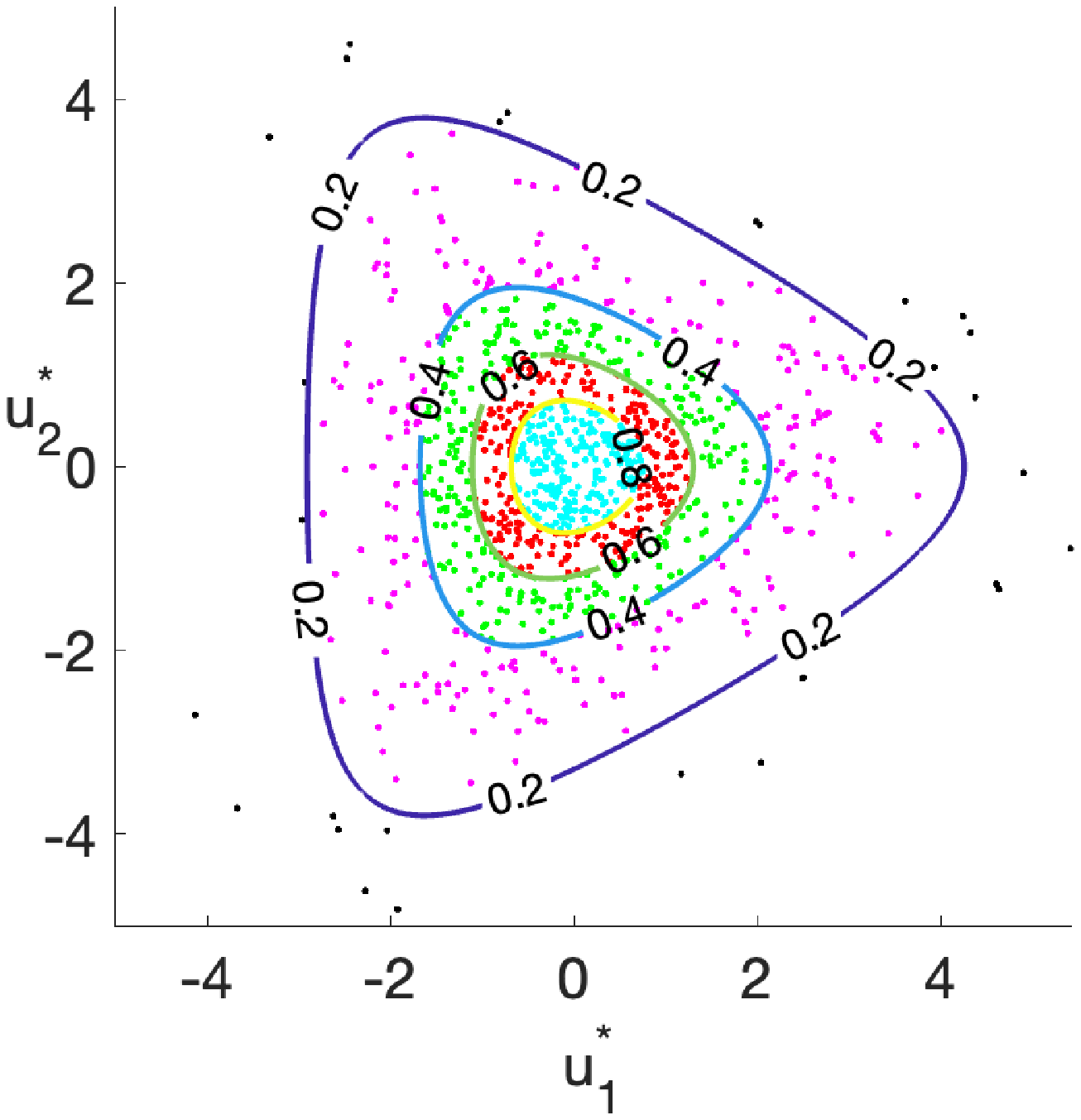}
		\end{minipage}
		\label{fig:ILR_ILR}
	}
	\caption{ILR depth result in HPP. (a) Data points in simplex shown by ternary plot, where the points are colored with respect to ranges of the depth values.  The solid lines indicate the depth contours with specific values.  (b) Same as (a) except for data points in the ILR-transformed space $\mathbb R^2$.}
	\label{fig:ILR_contour}
\end{figure}
From Fig. \ref{fig:ILR_contour}(a), the contour value decreases from the center to edges, and the depth value approaches $0$ if the IET vector approaches the boundary. In addition, based on Fig. \ref{fig:ILR_contour}(b), the shape of the ILR transformed data look similar to a regular triangle centered at origin. 
The inner contours of the ILR depth in $\mathbb{R}^2$ are closer to circular shapes, whereas the outside ones are similar to regular, smoothed triangles, which coincide with the shape of the contour of density in Eqn. \eqref{eq:density} except the scale. Based on these observations, mathematical properties of this new depth will be examined in Section \ref{sec:math_prop}. 

\begin{figure} [h!]
	\centering
	\subfigure[Dirichlet depth]{
		\begin{minipage}[b]{0.47\textwidth}
			\centering
			\includegraphics[scale=0.3]{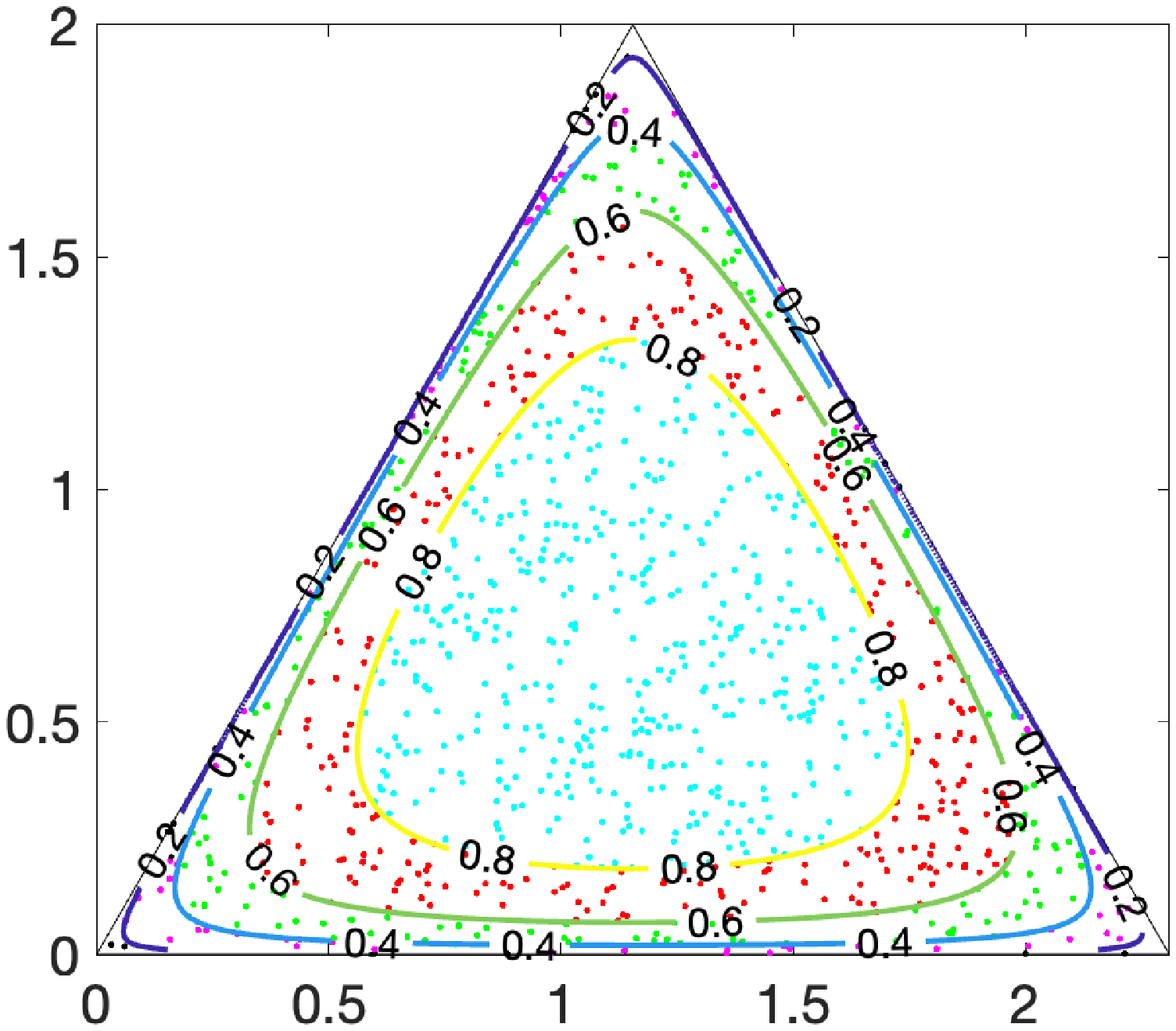}
		\end{minipage}
		\label{fig:dirichlet}
	}
	\subfigure[generalized Mahalanobis depth]{
		\begin{minipage}[b]{0.47\textwidth}
			\centering
			\includegraphics[scale=0.3]{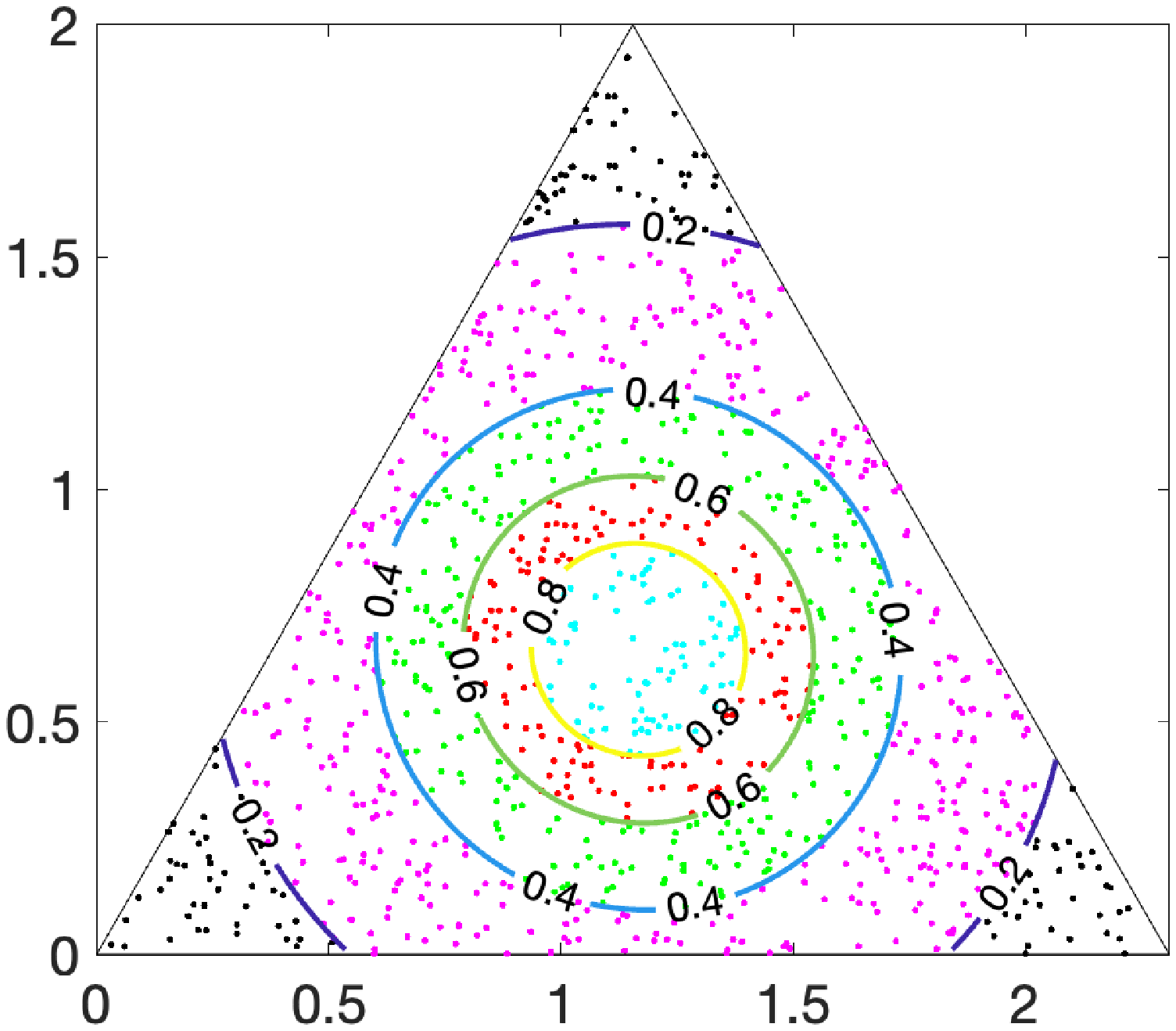}
		\end{minipage}
		\label{fig:mahalanobis}
	}
	\caption{Comparison with the Dirichlet depth and generalized Mahalanobis depth. (a) Data points in simplex shown by ternary plot, where the points are colored with respect to ranges of the Dirichlet depth values, the solid lines indicate the depth contours with specific values. (b) Same as (a) except for the generalized Mahalanobis depth.}
	\label{fig:two_comparison}
\end{figure}
To compare with the proposed ILR depth, we also show result using the Dirichlet depth \citep{qi2021dirichlet} and the generalized Mahalanobis depth \citep{liu2017generalized} on the same dataset in Fig. \ref{fig:two_comparison}.
Comparing Fig. \ref{fig:ILR_contour}(a) and \ref{fig:two_comparison}(a), we can see that the decreasing rate of the Dirichlet depth near the triangular center is much slower than that in the ILR depth. For the Dirichlet depth, many realizations in the middle part have large depth value, whereas the depth values decrease sharply when the data points approach the boundary. On the other hand, the decreasing rate of the ILR depth changes more slowly when the data points vary from center to boundary. Finally, the generalized Mahalanobis depth is not appropriate for HPP since it gives positive value to process at boundary and does not fit the triangular domain. 

In addition to the illustration of conditional depth, we will use another simulation to show the rank of HPP realizations using the depth in Definition \ref{def:wholedef}, where we use the ILR depth as the conditional depth given cardinality.  Suppose the intensity function is $\lambda=1$ on the time domain $[0,5]$. From the definition of HPP, the expected number of time events is $5$. This makes the normalized one dimensional depth $w(|\bm{s}|;P_{|S|})$ in Definition \ref{def:wholedef} obtains the maximum value when the cardinality of point process is $5$.  This result is clearly shown in Fig. \ref{fig:top10}(a).  

Based on the overall depth in Definition \ref{def:wholedef}, top-ranked processes are expect to have 5 evenly distributed events.  
In Fig. \ref{fig:top10}(b), we show the top 10 processes with largest overall depth value when $r=1$.  In this case, the one-dimensional depth plays an important role and all these processes have 5 events and their distirbutions are nearly uniform on [0, 5].  In contrast, we also show the top 10 processes with largest overall depth value when $r=0.1$ in Fig. \ref{fig:top10}(c).  With a smaller weight coefficeint, the one-dimensional depth contributes less to the overall depth value and we can see several top-10 processes have 3, 4, or 5 events.  Nevertheless, the event distribution in each top process is still homogeneous on [0, 5].

\begin{figure} [h!]
	\centering
	\subfigure[one dimensional depth]{
		\begin{minipage}[b]{0.3\textwidth}
			\centering
			\includegraphics[scale=0.2]{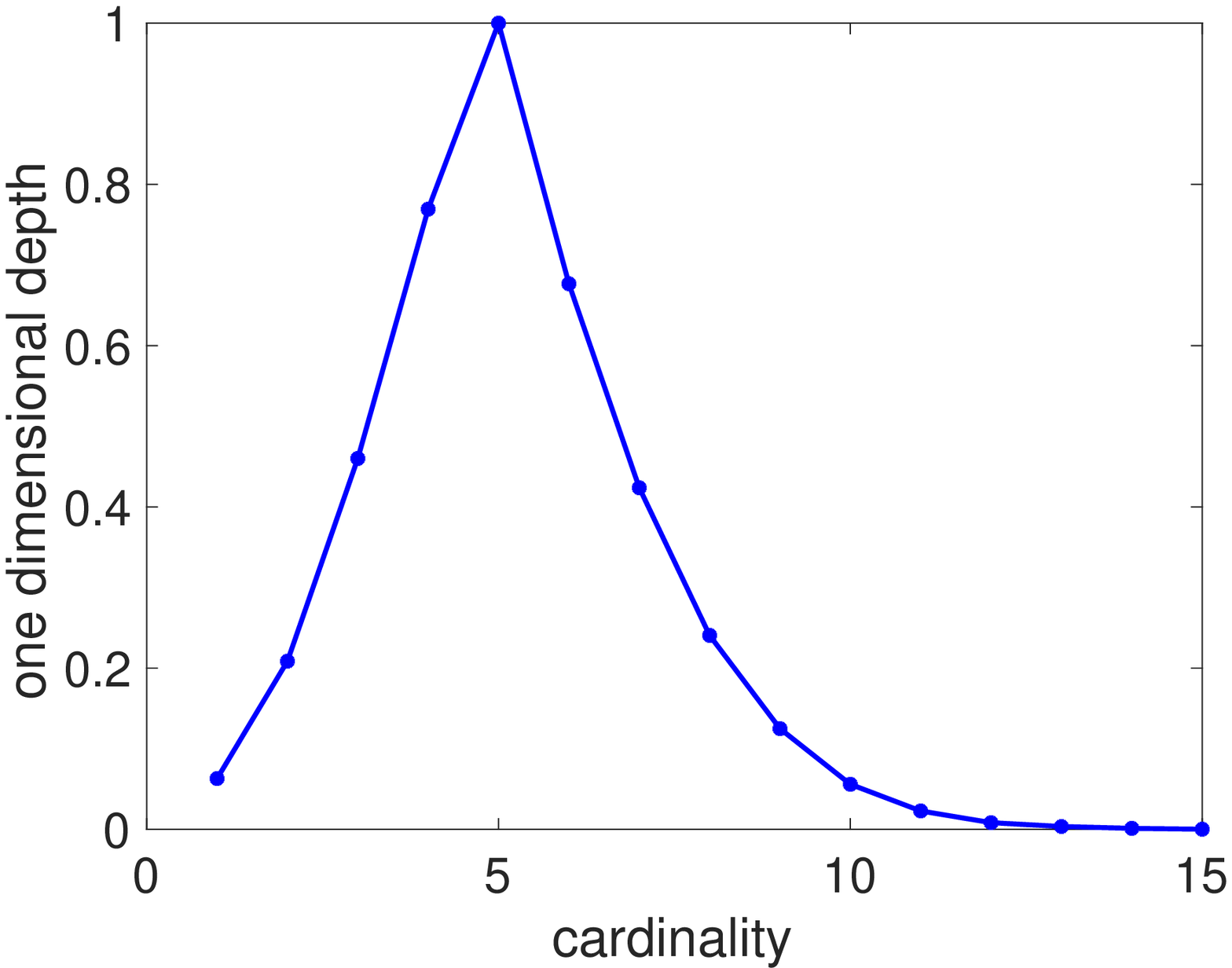}
		\end{minipage}
	}
	\subfigure[depth ranking, $r=1$]{
		\begin{minipage}[b]{0.3\textwidth}
			\centering
			\includegraphics[scale=0.2]{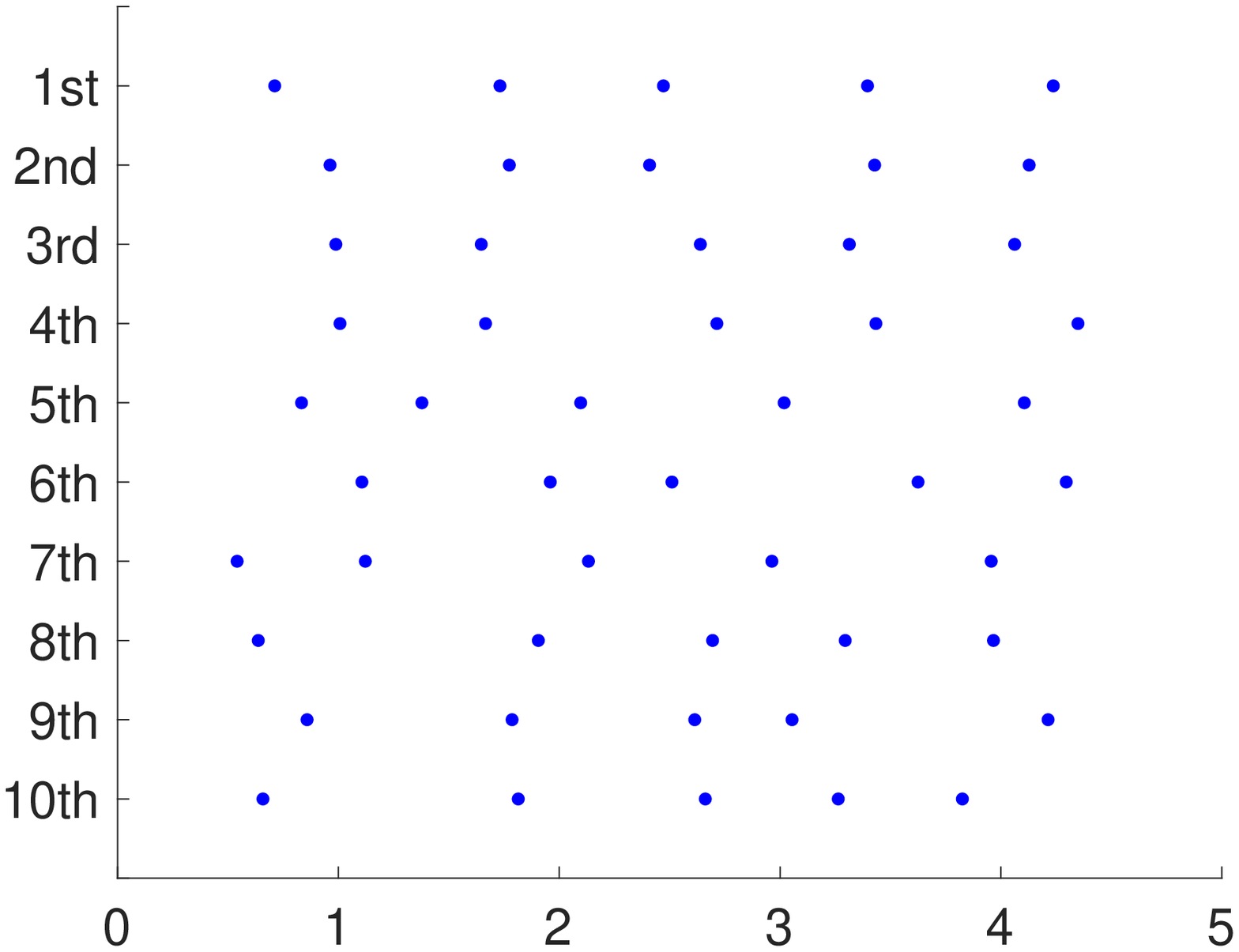}
		\end{minipage}
	}
	\subfigure[depth ranking, $r=0.1$]{
    		\begin{minipage}[b]{0.3\textwidth}
			\centering
   		 	\includegraphics[scale=0.2]{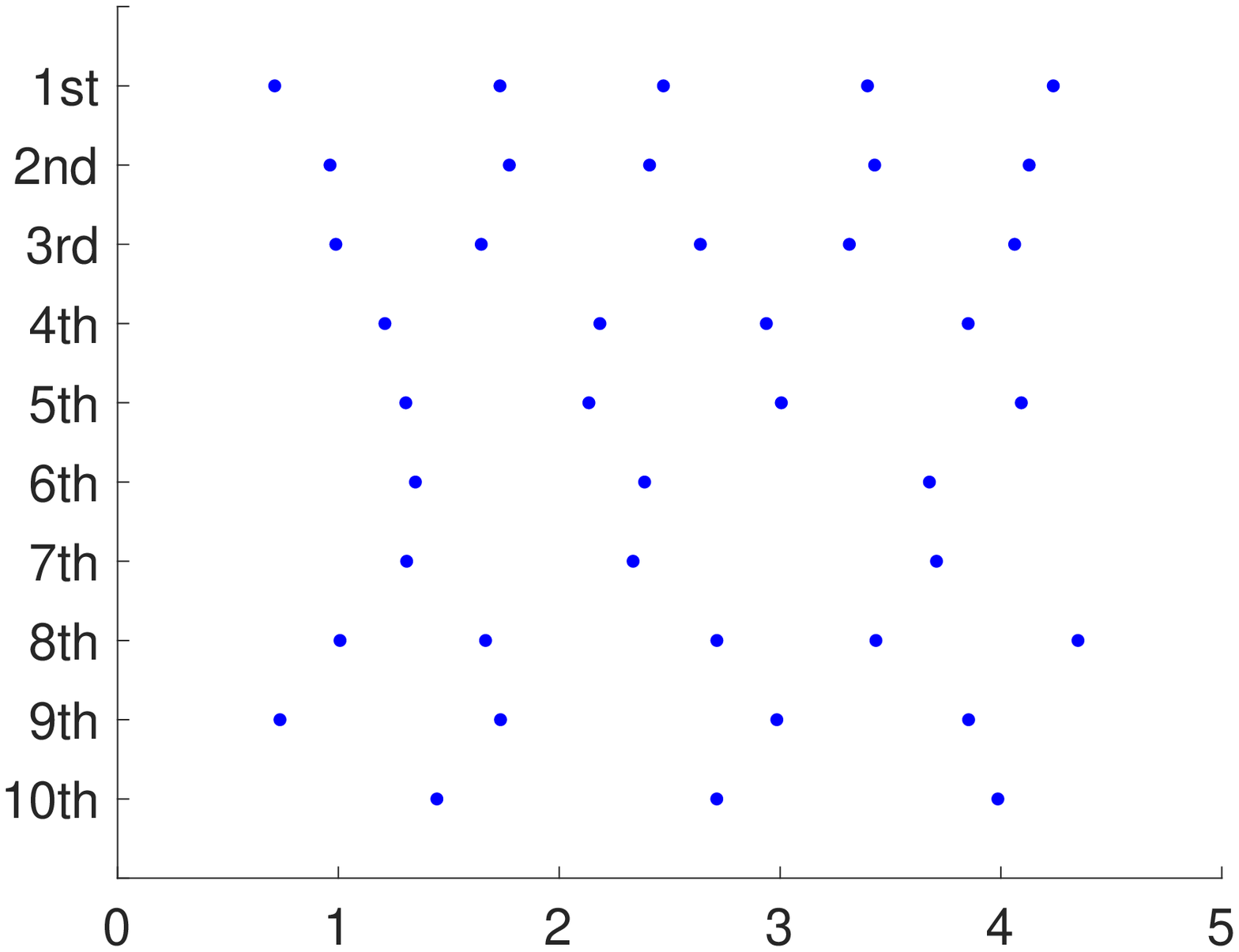}
    		\end{minipage}
    	}
	\caption{One dimensional depth and point processes with top $10$ overall depths. (a) One dimensional depth for different cardinalities in the simulated sample. (b) Each row is a realization of simulated HPP with the ranking of depth value shown in vertical axis. The depth is computed by Definition \ref{def:wholedef}, where the ILR depth is the conditional depth and the hyper-parameter $r=1$. (c) Same as (b) except that $r=0.1$. }
	\label{fig:top10}
\end{figure}

\subsubsection{Mathematical properties} \label{sec:math_prop}
In Section \ref{sec:depdef}, we have listed four basic mathematical properties that a conditional depth is expected to satisfy. All properties are easy to verify for the newly defined ILR depth except for a notion of center with respect to a specific symmetry in Property 2.  In order to make the ILR depth a proper conditional depth, a center needs to be defined corresponding to an appropriate symmetry. Previous studies \citep{zuo2000general} have summarized different types of depth symmetries for multivariate data, e.g. A-symmetry, C-symmetry and H-symmetry. However, none of them can be directly applied to the ILR depth given in Definition \ref{def:formal}. 
 In the definition of the Dirichlet depth, the center is taken as the the conditional mean of IET $\bm{u}=(\frac{T_2-T_1}{k+1},\frac{T_2-T_1}{k+1},\dots,\frac{T_2-T_1}{k+1})^T$ since the depth value reaches the maximum value at this point \citep{qi2021dirichlet}.  This ``center", however, lacks geometric interpretation with respect to a symmetry in the simplex space $\mathcal{S}^{k+1}$.
 
By Definition \ref{def:formal}, the ILR depth is defined in  the unconstrained Euclidean space $\mathbb{R}^k$ transformed from the simplicial domain $\mathcal S^{k+1}$.  In this case, we can examine possible symmetries with respect to probability distribution of the transformed IET. According to Proposition \ref{prop:ortho}, the origin is the center of the ILR depth contours in $\mathbb{R}^k$ with respect to $(k+1)!$ orthogonal transformations defined using all vertices in a regular simplex in Eqn. \eqref{eq:ortho}. Given the bijective mappings between $\mathbb{S}_k$, $\mathcal{S}^{k+1}$, and $\mathbb{R}^k$, we can map the origin back to $\mathcal{S}^{k+1}$ and $\mathbb{S}_k$, and call the two corresponding points as ``simplicial centers'' with respect to the ``orthogonal symmetry''. This result is formally given in the following proposition. 

\begin{prop} \label{prop:center}
For any HPP with $k$ events in a time domain $[T_1,T_2]$, the simplicial center with respect to the orthogonal symmetry in $\mathbb{S}_k$ is $$\bm{s}=(T_1+\frac{T_2-T_1}{k+1},T_1+\frac{2(T_2-T_1)}{k+1},\dots,T_1+\frac{k(T_2-T_1)}{k+1}),$$ and in $\mathcal{S}^{k+1}$ is $$\bm{u}=(\frac{T_2-T_1}{k+1},\frac{T_2-T_1}{k+1},\dots,\frac{T_2-T_1}{k+1})^T.$$
\end{prop}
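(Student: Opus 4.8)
The plan is to exploit the bijective correspondence among the three spaces $\mathbb{R}^k$, $\mathcal{S}^{k+1}$, and $\mathbb{S}_k$ established in Section \ref{sec:pdf}, and to transport the already-identified center from the unconstrained Euclidean space back to the simplex and then to the event-time representation. By Proposition \ref{prop:ortho} the density of $\bm{u}^*$ is invariant under each of the $(k+1)!$ orthogonal transformations $A_r$, and since every $A_r$ is linear it fixes the origin; hence $\bm{u}^*=\bm{0}\in\mathbb{R}^k$ is the center with respect to this orthogonal symmetry. It then remains only to compute the images of $\bm{0}$ under the two inverse maps, so the argument reduces to two short substitutions.

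First I would apply the ILR inverse of Eqn. \eqref{eq:bmu} at $\bm{u}^*=\bm{0}$. Since $\exp(\bm{0}^T\Psi)=\bm{1}_{k+1}^T$, the numerator reduces to $\bm{1}_{k+1}$ and the denominator to $\bm{1}_{k+1}^T\bm{1}_{k+1}=k+1$, giving the IET vector $\bm{u}=\frac{T_2-T_1}{k+1}\bm{1}_{k+1}$, i.e. every coordinate equals $\frac{T_2-T_1}{k+1}$; this is exactly the claimed center in $\mathcal{S}^{k+1}$. Next I would recover the event times through $s_i=s_0+\sum_{j=1}^{i}u_j$ with $s_0=T_1$. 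Because each $u_j$ equals $\frac{T_2-T_1}{k+1}$, the partial sums collapse to $s_i=T_1+\frac{i(T_2-T_1)}{k+1}$ for $i=1,\dots,k$, which is the stated center in $\mathbb{S}_k$; as a consistency check $s_{k+1}=T_1+(T_2-T_1)=T_2$, as it must.

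The computations themselves are routine, so there is no serious analytic obstacle here; the substantive content has already been delivered by Proposition \ref{prop:ortho}. The only point deserving a brief remark is the well-definedness of the center, namely that $\bm{0}$ is the \emph{unique} common fixed point of the $A_r$. I would note that this holds because the columns $\Psi_{:,1},\dots,\Psi_{:,k+1}$ span $\mathbb{R}^k$ (they are the vertices of a regular simplex by Proposition \ref{prop:poly}) and the group they generate realizes the standard irreducible representation of the symmetric group $S_{k+1}$, which has no nonzero fixed vector. This confirms that the center is uniquely the origin, and its images under the two inverse transformations are the two points asserted in the proposition.
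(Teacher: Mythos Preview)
Your proposal is correct and follows essentially the same approach as the paper: the paper does not give a separate proof of this proposition but rather presents it as the immediate consequence of mapping the origin in $\mathbb{R}^k$ (identified as the center via Proposition~\ref{prop:ortho}) back through the ILR inverse and the IET-to-event-time correspondence, which is exactly the computation you carry out. Your additional remark on the uniqueness of the fixed point via the irreducible $S_{k+1}$-representation is a nice extra that the paper does not make explicit.
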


Once the notion of simplicial center is given, the mathematical properties of the ILR depth can be easily verified.  This is given in the following proposition, where the detailed proof is given in Appendix \ref{app:mat}. 

\begin{prop} \label{prop:ilrdepmat}
All of the four mathematical properties in Section \ref{sec:depdef} hold in the Euclidean space $\mathbb{R}^k$ for the ILR depth in Definition \ref{def:formal}.
\end{prop}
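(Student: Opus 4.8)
The plan is to reduce all four properties to already-established facts about the density in Eqn.~\eqref{eq:density} together with elementary properties of the increasing transformation $f(\cdot)=\frac{1}{1-\log(\cdot)}$. Writing $\phi(\bm{u}^*)=\frac{c}{(\sum_{p=1}^{k+1}e^{\sum_{i=1}^{k}u_i^*\Psi_{i,p}})^{k+1}}$ for the density kernel with $c=(k+1)^{k+1}$, the ILR depth is $D_c=f\circ\phi$. Since $\phi=1$ at the origin and, by the AM--GM inequality $\prod_{i=1}^{k+1}(s_i-s_{i-1})\le\big((T_2-T_1)/(k+1)\big)^{k+1}$, one has $\phi\in(0,1]$ on the interior of $\mathbb{S}_k$, the argument of $f$ stays in $(0,1]$ where $f$ is strictly increasing with values in $(0,1]$. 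Hence $D_c$ preserves the ordering, the level sets, and the maximizer of $\phi$, so every monotonicity and uniqueness statement about $\phi$ transfers verbatim to $D_c$.

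First I would dispatch Properties~\ref{m1} and~\ref{m4}, which are computational. For Property~\ref{m1}, continuity on the interior follows from continuity of $\phi$ and $f$; on $\mathbb{B}_k$ at least one factor $s_i-s_{i-1}$ vanishes, so the product in Eqn.~\eqref{eq:ilrdepth} is $0$, the logarithm tends to $-\infty$, and $D_c\to 0$, matching the boundary value assigned in Definition~\ref{def:formal}; positivity and the range $(0,1]$ follow from the AM--GM bound above. For Property~\ref{m4}, I would substitute $\bm{s}\mapsto a\bm{s}+b$ with the domain rescaled to $[aT_1+b,aT_2+b]$: each inter-event time scales by $a$ and $T_2-T_1$ scales by $a$, so the ratio $\prod_{i=1}^{k+1}(s_i-s_{i-1})/(T_2-T_1)^{k+1}$ in Eqn.~\eqref{eq:ilrdepth} is unchanged; equivalently the log-ratio coordinates $\bm{u}^*$ are invariant under scaling and translation, so $D_c$ is invariant.

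The substantive content is Properties~\ref{m2} and~\ref{m3}, both of which I would handle in the $\mathbb{R}^k$ picture where the center is the origin. For Property~\ref{m2}, Proposition~\ref{prop:logconcave} gives that $\phi$ is log-concave and uni-modal with its unique global maximum at the origin; since $f$ is strictly increasing, $D_c$ attains its unique maximum at the origin as well. Mapping the origin back through the ILR inverse identifies $\bm{s}_c$ as the equi-spaced process of Proposition~\ref{prop:center}, and by Proposition~\ref{prop:ortho} this origin is exactly the center of the $(k+1)!$ orthogonal (simplicial) symmetries, which supplies the required specific symmetry. For Property~\ref{m3}, with $\bm{s}_c=\bm{0}$ the claim reduces to $D_c(\bm{u}^*)\le D_c(\alpha\bm{u}^*)$ for $\alpha\in[0,1]$, i.e.\ $\phi$ is non-increasing along every ray from the origin. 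Setting $g=\log\phi$, which is concave by log-concavity, the one-variable map $h(\alpha)=g(\alpha\bm{u}^*)$ is concave with $h'(0)=\nabla g(\bm{0})^T\bm{u}^*=0$ because the origin is the maximizer; a concave function with vanishing derivative at $\alpha=0$ is non-increasing on $[0,\infty)$, giving the monotonicity and hence Property~\ref{m3}.

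I expect the only real care to be needed in Property~\ref{m3}: one must verify that the maximizer being the origin forces $\nabla g(\bm{0})=\bm{0}$ (so that the ray-derivative starts at zero) and that the strictly increasing reparametrization $f$ does not destroy ray-monotonicity — both of which are immediate once the log-concavity and the location of the maximum from Proposition~\ref{prop:logconcave} are in hand. The remaining properties are routine given the closed form in Eqn.~\eqref{eq:ilrdepth} and the structural results of Propositions~\ref{prop:center} and~\ref{prop:ortho}.
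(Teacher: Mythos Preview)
Your proposal is correct and follows essentially the same route as the paper's proof in Appendix~\ref{app:mat}: both invoke Definition~\ref{def:formal} for Property~\ref{m1}, Proposition~\ref{prop:center} for Property~\ref{m2}, log-concavity (Proposition~\ref{prop:logconcave}) for Property~\ref{m3}, and the closed form in Eqn.~\eqref{eq:ilrdepth} for Property~\ref{m4}. Your write-up is in fact more careful than the paper's --- the paper dismisses Properties~\ref{m1}, \ref{m2}, \ref{m4} as ``trivial'' and says Property~\ref{m3} is ``easy to verify,'' whereas you supply the AM--GM bound, the explicit scaling cancellation, and the ray-monotonicity argument via concavity of $h(\alpha)=g(\alpha\bm{u}^*)$ with $h'(0)=0$ that actually justifies the monotonicity step.
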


\begin{remark}
It is easy to verify that Properties \ref{m1}, \ref{m2} and \ref{m4} in Section \ref{sec:depdef} also hold in $\mathbb{S}_k$, when mapping back from $\mathbb{R}^k$. However, Property \ref{m3} cannot hold as there is no linear structure in $\mathbb{S}_k$. In the remaining part of this paper, we will only discuss Property \ref{m3} in the Euclidean space $\mathbb{R}^k$.
\end{remark}

\subsubsection{A simplified ILR depth} \label{sec:opt_definition}
Based on Proposition \ref{prop:lap}, the density function of the ILR transformed data in Eqn. \eqref{eq:density} can be approximated by the standard multivariate Gaussian density, which can lead to another type of depth definition by Gaussian density function. This simplified version of ILR depth is actually given in the form of a Mahalanobis depth \citep{liu1993quality} as follows: 

\begin{definition} \label{def:opt}
Let $\bm{s}=(s_1,s_2,\dots,s_k)\in\mathbb{S}_k$ be a realization of an HPP in the time domain $[T_1,T_2]$ with $T_1< s_1< s_2 < \cdots < s_k < T_2$. Denote $s_0=T_1$, $s_{k+1}=T_2$, $g_s=(\prod_{i=1}^{k+1}(s_i-s_{i-1}))^{\frac{1}{k+1}}$, $\bm{u}=(u_1,u_2,\dots,u_{k+1})^T=(s_1-s_0,s_2-s_1,\dots,s_{k+1}-s_k)^T$ as the IET and $\bm{u}^*=(u_1^*,u_2^*,\dots,u_k^*)^T$ as the ILR transformation of $\bm{u}$. Then, a simplified version of the ILR depth of $\bm{s}$ conditioned on $|\bm{s}|=k$ is defined as:
\begin{eqnarray*}
    D_{c}(\bm{s};P_{S||S|=k})&=& \frac{1}{1+\frac{1}{2}\lVert\bm{u}^*\rVert^2} \\
    &=&\frac{1}{1+\frac{1}{2}\sum_{i=1}^{k+1}(\log\frac{s_i-s_{i-1}}{g_s})^2}
\end{eqnarray*}
If $\bm{s}\in\mathbb{B}_k$, i.e., at least two of $s_0, s_1, \dots,s_{k+1}$ are identical, then $D_{c}(\bm{s};P_{S||S|=k})=0$. 
\end{definition}

Definition \ref{def:opt} is a simplified conditional depth definition for homogeneous Poisson process.  It cannot capture the real data pattern as accurately as the ILR depth because of the approximation. However, the contours of this new depth are all hyper-spheres centered at origin.
All classical symmetries can be satisfied in this simplified depth and there is no need to introduce a simplicial center as in the ILR depth. This simplified depth represents the transformed IET using conventional Mahalanobis depth \citep{liu1993quality} and satisfies all necessary important properties.  This result is given in the following proposition (see proof in Appendix \ref{app:gaumat}).
\begin{prop}
All of the four mathematical properties in Section \ref{sec:depdef} hold for the simplified ILR depth in Definition \ref{def:opt}. 
\end{prop}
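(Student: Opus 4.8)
The plan is to exploit the fact that the depth in Definition \ref{def:opt} depends on the point process only through the single scalar $\rho(\bm{s}):=\lVert\bm{u}^*\rVert=\big(\sum_{i=1}^{k+1}(\log\frac{s_i-s_{i-1}}{g_s})^2\big)^{1/2}$, via the strictly decreasing map $t\mapsto (1+\tfrac12 t^2)^{-1}$ on $t\ge0$. Hence every one of the four properties reduces to a statement about $\rho$ together with the affine invariance of the log-ratios, and I would carry out all arguments in the Euclidean coordinate $\bm{u}^*\in\mathbb{R}^k$, transporting conclusions back to $\mathbb{S}_k$ and $\mathcal{S}^{k+1}$ through the bijections established after Proposition \ref{prop:poly}. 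Throughout I would use the identity $\lVert\bm{u}^*\rVert^2=\sum_{i=1}^{k+1}(\log\frac{u_i}{g(\bm{u})})^2$, which follows from $\Psi\Psi^T=I_k$ and $\Psi^T\Psi=I_{k+1}-\frac{1}{k+1}\bm{1}_{k+1}\bm{1}_{k+1}^T$, since the centered log-ratio vector is orthogonal to $\bm{1}_{k+1}$.

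For Property \ref{m1}, on the interior of $\mathbb{S}_k$ the depth is the composition of the homeomorphisms $\bm{s}\mapsto\bm{u}\mapsto\bm{u}^*$ with the smooth, strictly positive function $(1+\tfrac12\lVert\bm{u}^*\rVert^2)^{-1}$, hence continuous with values in $(0,1]$. The only real work is the boundary: I would show $\rho(\bm{s})\to\infty$ as $\bm{s}\to\mathbb{B}_k$. If $\bm{s}$ approaches $\mathbb{B}_k$, then the smallest inter-event time $u_{\min}\to0$, while the largest satisfies $u_{\max}\ge\frac{T_2-T_1}{k+1}$ and stays bounded below because the $u_i$ sum to $T_2-T_1$; since $\log\frac{u_{\min}}{g(\bm{u})}=\frac{1}{k+1}\sum_{i}\log\frac{u_{\min}}{u_i}\le\frac{1}{k+1}\log\frac{u_{\min}}{u_{\max}}\to-\infty$, one coordinate of the centered log-ratio diverges and $\rho\to\infty$, so $D_c\to0$. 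This matches the value $0$ assigned on $\mathbb{B}_k$, giving continuity on all of $\mathbb{S}_k$.

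Properties \ref{m2} and \ref{m3} are immediate from the radial structure. Since $\rho\ge0$ with equality iff $\bm{u}^*=\bm{0}$, the depth attains its supremum $1$ at the unique point $\bm{u}^*=\bm{0}$, i.e.\ at $u_i\equiv\frac{T_2-T_1}{k+1}$, the evenly spaced process; uniqueness follows from strict convexity of $\lVert\bm{u}^*\rVert^2$. Because $D_c$ depends only on $\lVert\bm{u}^*\rVert$, it is invariant under every orthogonal transformation of $\mathbb{R}^k$, so the origin is a center in the strong spherical sense and all of the classical C-, A-, and H-symmetries hold simultaneously, which is exactly why no bespoke ``simplicial center'' is needed here. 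For Property \ref{m3}, with center $\bm{u}^*_c=\bm{0}$ one has $\bm{u}^*_c+\alpha(\bm{u}^*-\bm{u}^*_c)=\alpha\bm{u}^*$ and $\lVert\alpha\bm{u}^*\rVert=\alpha\lVert\bm{u}^*\rVert\le\lVert\bm{u}^*\rVert$ for $\alpha\in[0,1]$, so monotonicity of $t\mapsto(1+\tfrac12 t^2)^{-1}$ gives $D_c(\bm{u}^*)\le D_c(\alpha\bm{u}^*)$.

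Finally, Property \ref{m4} follows from log-ratio invariance: under $\bm{s}\mapsto a\bm{s}+b$ every inter-event time transforms as $u_i\mapsto a u_i$ (the translation cancels in the difference), so the geometric mean becomes $a g_s$ and each ratio $\frac{u_i}{g_s}$ is unchanged; hence $\rho$ and therefore $D_c$ are invariant. The only genuinely non-mechanical step is the divergence $\rho\to\infty$ at the boundary in Property \ref{m1}; once that limit is controlled via the minimal inter-event time, the remaining three properties are direct consequences of the spherical symmetry of $D_c$ and the affine invariance of the centered log-ratio.
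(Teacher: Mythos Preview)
Your proposal is correct and follows essentially the same approach as the paper: both reduce everything to the scalar $\lVert\bm{u}^*\rVert^2=\sum_{i}(\log\frac{u_i}{g(\bm{u})})^2$, handle Property~\ref{m1} by showing a single coordinate of the centered log-ratio diverges near $\mathbb{B}_k$, and dispatch Properties~\ref{m2}--\ref{m4} via the spherical symmetry in $\mathbb{R}^k$ and the scale invariance of the ratios $u_i/g(\bm{u})$. The only cosmetic difference is that for the boundary limit the paper picks an index $p$ with $u_p$ bounded away from zero and shows $\log\frac{u_p}{g(\bm{u})}\to+\infty$, whereas you pick the minimal gap and show $\log\frac{u_{\min}}{g(\bm{u})}\to-\infty$; these are dual versions of the same argument.
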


We also illustrate the simplified ILR depth using simulations.  Using the same simulated realizations of HPP in Section \ref{sec:illustration_ILR}, the result is shown in Fig. \ref{fig:gaussian_depth_contour}. 
\begin{figure} [h!]
	\centering
	\subfigure[data points in simplex]{
		\begin{minipage}[b]{0.47\textwidth}
			\centering
			\includegraphics[scale=0.3]{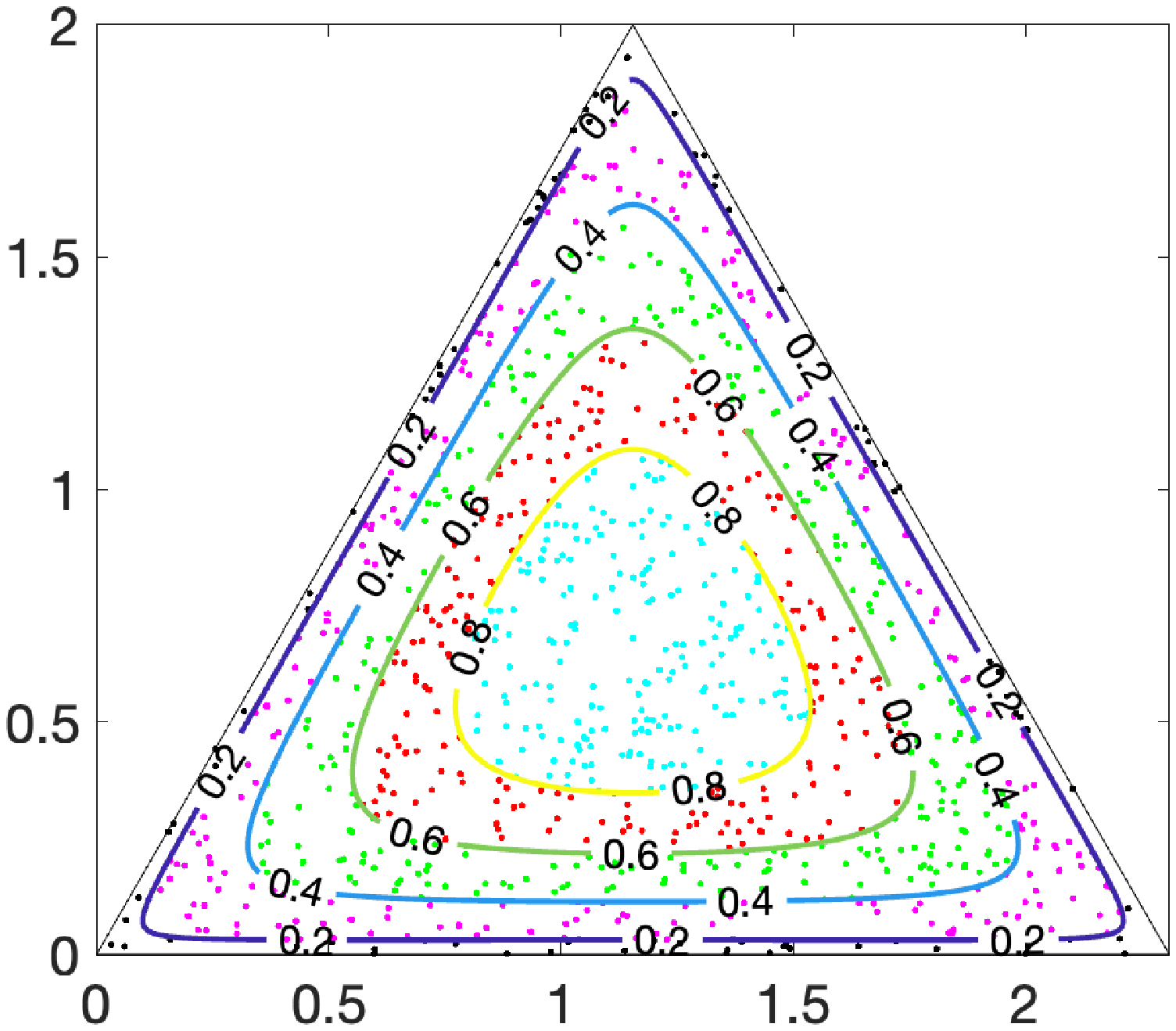}
		\end{minipage}
		\label{fig:gaussian_ternary}
	}
	\subfigure[data points after ILR]{
		\begin{minipage}[b]{0.47\textwidth}
			\centering
			\includegraphics[scale=0.3]{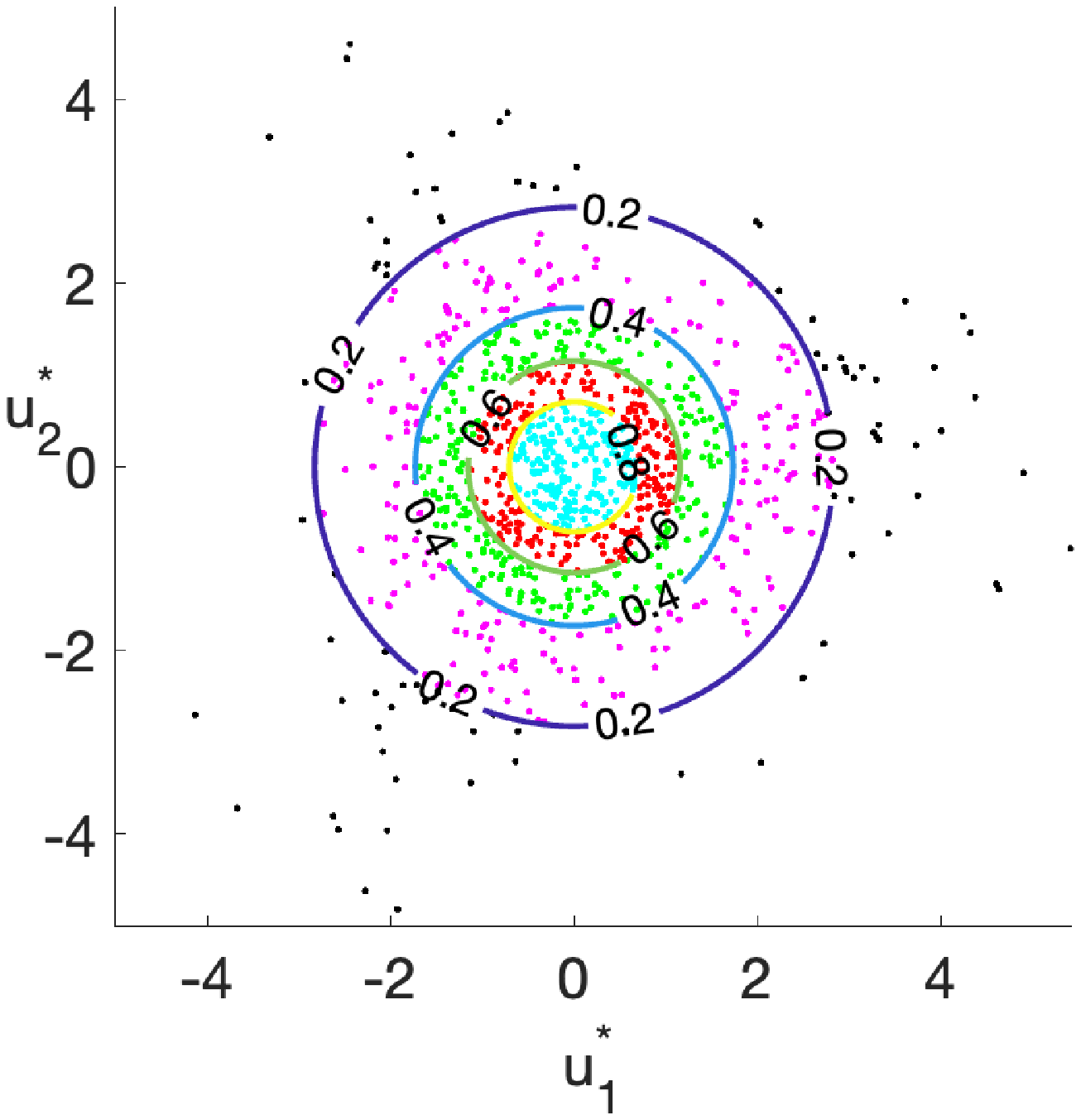}
		\end{minipage}
		\label{fig:gaussian_ILR}
	}
	\caption{Simplified ILR depth result in HPP. (a) Data points in simplex shown by ternary plot, where the points are colored with respect to ranges of the simplified ILR depth values.  The solid lines indicate the depth contours with specific values.  (b) Same as (a) except for data points in the ILR-transformed space $\mathbb R^2$.}	
	\label{fig:gaussian_depth_contour}
\end{figure}
Comparing Fig. \ref{fig:ILR_contour}(a) and Fig. \ref{fig:gaussian_depth_contour}(a), there is no obvious difference between contours in ternary plot for two depths. However, when comparing corresponding depths in Euclidean space in Fig. \ref{fig:ILR_contour}(b) and Fig. \ref{fig:gaussian_depth_contour}(b), the contours of ILR depth match the data pattern more accurately. This is reasonable as the model-based depth should rely on the accuracy of the model -- better models are expected to result in better ranks. In the remaining part of this paper, we will only use the original ILR depth to measure center-outward patterns in point processes.

\subsection{ILR depth for general point process}

\subsubsection{Definition and mathematical properties}
To derive the ILR depth for general point process, one approach is to derive the density function of the ILR transformation of the IET of a general point process.  This, in general, is highly challenging because the conditional intensity function is not a constant, but vary with respect to the event history. Consequently, we propose to adopt an alternative approach where the general point process is transformed to a homogeneous Poisson process, so that  all results in Section \ref{sec:ilrhpp} can be fully exploited. A commonly used transformation is the \textit{Time-Rescaling (TR) method}  \citep{brown2002time}. Denote $T_1 < s_1 < s_2 < \dots < s_k <  T_2$ as a realization from a point process with a conditional intensity function $\lambda(t|H_t)>0$ for all $t\in[T_1,T_2]$. Then, the sequence $\Lambda_S(s_i)=\int_{T_1}^{s_i}\lambda(t|H_t)dt$, $i=1,\dots,k$ is a Poisson process with the unit rate in $(0,\Lambda_S(T_2)]$. In this way, any point process can be transformed to an HPP as long as the conditional intensity function is known or can be estimated. Based on this result, we provide a definition of the ILR depth for general point process as follows. 

\begin{definition} \label{def:ilrgen}
For a general point process $\bm{s}\in\mathbb{S}_k$ in time domain $[T_1,T_2]$ with cardinality $k$ satisfying $T_1<s_1< s_2<\dots< s_k< T_2$, assume the conditional intensity function $\lambda(t|H_t)>0$ and denote $\Lambda_S(t)=\int_{T_1}^{t}\lambda(u|H_u)du$.  Let $s_0=T_1$ and $s_{k+1}=T_2$.  The ILR depth of $\bm{s}$ is defined as: 
\begin{eqnarray*}
    D_{c-TR}(\bm{s};P_{S||S|=k},\Lambda_S)=\frac{1}{1-\log\Big(\frac{c}{\Lambda_S(T_2)^{k+1}}\prod_{i=1}^{k+1}(\Lambda_S(s_i)-\Lambda_S(s_{i-1}))\Big)}
\end{eqnarray*}
where $c$ is a positive constant in $(0, e(k+1)^{k+1})$. We can let $c=(k+1)^{k+1}$ to have  the maximum depth value being $1$.  If $\bm{s}\in\mathbb{B}_k$, we define $D_{c-TR}(\bm{s};P_{S||S|=k})=0$. 
\end{definition}

\begin{remark}
Definition \ref{def:ilrgen} is a generalized version of Definition \ref{def:formal}. If $\bm{s}=(s_1,s_2,\dots,s_k)$ is a homogeneous Poisson process, then the conditional intensity function is a positive constant number $\lambda$. In this case, $\Lambda_S(s_i)-\Lambda_S(s_{i-1})=\lambda(s_i-s_{i-1})$ for $i=1,2,\dots,k+1$ and $\Lambda_S(T_2)=\lambda(T_2-T_1)$. With some simple algebra, one can show that the depth is the same with HPP case. In the remaining part of this paper, we fix $c=(k+1)^{k+1}$ so that the maximum depth value is $1$. 
\end{remark}

In the case of (inhomogeneous) Poisson process, the four properties listed in Section \ref{sec:depdef} can be verified for Definition \ref{def:ilrgen}.  This is summarized in the following proposition, where the proof is given in Appendix \ref{app:matipp}. 

\begin{prop}
All of the four mathematical properties in Section \ref{sec:depdef} hold for Definition \ref{def:ilrgen} for inhomogeneous Poisson process.  
\end{prop}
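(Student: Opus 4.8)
The plan is to reduce the inhomogeneous Poisson case entirely to the homogeneous Poisson case already settled in Proposition \ref{prop:ilrdepmat}, treating the time-rescaling map as a fixed change of variables. The crucial feature of an inhomogeneous Poisson process is that the conditional intensity does not depend on the history, i.e. $\lambda(t\mid H_t)=\lambda(t)$, so the compensator $\Lambda_S(t)=\int_{T_1}^{t}\lambda(u)\,du$ is a single, realization-independent function. Because $\lambda>0$ on $[T_1,T_2]$, the map $\phi=\Lambda_S$ is a strictly increasing, continuous bijection from $[T_1,T_2]$ onto $[0,\Lambda_S(T_2)]$ with continuous inverse. Writing $\tilde s_i=\Lambda_S(s_i)$ (so $\tilde s_0=0$ and $\tilde s_{k+1}=\Lambda_S(T_2)$), the rescaled events $\tilde{\bm{s}}=(\tilde s_1,\dots,\tilde s_k)$ are a realization of a unit-rate HPP on $(0,\Lambda_S(T_2)]$ by the Time-Rescaling method. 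Comparing Definition \ref{def:ilrgen} with Definition \ref{def:formal} term by term then gives the identity
\[
D_{c-TR}(\bm{s};P_{S||S|=k},\Lambda_S)=D_{c}(\tilde{\bm{s}};\tilde P_{S||S|=k}),
\]
where the right-hand side is the HPP ILR depth of the rescaled process on the domain $[0,\Lambda_S(T_2)]$ (note that the HPP depth formula is rate-free, so the unit rate causes no difficulty). This identity is the engine of the proof: each of the four properties will be transported from $\tilde{\bm{s}}$ to $\bm{s}$ through $\phi$.

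I would first dispatch Properties \ref{m1} and \ref{m2}. Since $\phi$ is a homeomorphism, the induced map $\bm{s}\mapsto\tilde{\bm{s}}$ is a homeomorphism between $\mathbb{S}_k$ and the rescaled state space, so continuity of $D_{c-TR}$ follows from continuity of the HPP depth composed with $\phi$. Strict monotonicity of $\phi$ gives $s_{i-1}=s_i\iff\tilde s_{i-1}=\tilde s_i$, hence $\bm{s}\in\mathbb{B}_k$ exactly when $\tilde{\bm{s}}$ lies on the rescaled boundary, and the boundary value $0$ transfers directly; this settles Property \ref{m1}. For Property \ref{m2}, the HPP depth attains its unique maximum at the point with equal rescaled inter-event times; pulling this back by $\phi^{-1}$ yields a unique center $\bm{s}_c$ determined by $\Lambda_S(s_i^c)=\frac{i}{k+1}\Lambda_S(T_2)$, and uniqueness is preserved because $\phi$ is a bijection.

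Properties \ref{m3} and \ref{m4} are slightly more delicate. Following the remark after Proposition \ref{prop:ilrdepmat}, Property \ref{m3} is to be read in the Euclidean coordinates $\bm{u}^*$ obtained by applying the ILR transform to the rescaled inter-event times; in these coordinates $D_{c-TR}$ coincides with the HPP depth of Definition \ref{def:formal}, whose quasi-concavity along rays from the origin (the image of $\bm{s}_c$) is precisely Property \ref{m3} for HPP and rests on the log-concavity and unimodality of Proposition \ref{prop:logconcave}, so it transfers verbatim. For Property \ref{m4} I would first fix the meaning of $P_{aS+b}$: it is the law of the affinely transformed process on $[aT_1+b,aT_2+b]$ with $a>0$, whose compensator satisfies the change-of-variables identity $\Lambda_{aS+b}(at+b)=\Lambda_S(t)$. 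Substituting into Definition \ref{def:ilrgen}, both the rescaled inter-event times $\Lambda_{aS+b}(as_i+b)-\Lambda_{aS+b}(as_{i-1}+b)=\Lambda_S(s_i)-\Lambda_S(s_{i-1})$ and the total mass $\Lambda_{aS+b}(aT_2+b)=\Lambda_S(T_2)$ are unchanged, so the depth is invariant.

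The hard part will be Property \ref{m4}: not the final substitution but correctly transporting the compensator under the affine map and pinning down $P_{aS+b}$, since all the content sits in verifying $\Lambda_{aS+b}(at+b)=\Lambda_S(t)$ (where one must use $a>0$ to keep $\phi$ order-preserving). It is also worth flagging why the restriction to inhomogeneous Poisson processes is essential: for a general point process the compensator $\Lambda_S$ is history-dependent and varies across realizations, so the single fixed homeomorphism $\phi$ underlying every step above is unavailable and the clean reduction to the HPP case breaks down.
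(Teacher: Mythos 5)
Your proof is correct and follows essentially the same route as the paper's (Appendix \ref{app:matipp}): for an inhomogeneous Poisson process the compensator $\Lambda_S$ is a deterministic, strictly increasing, continuous bijection, so the time-rescaled process reduces to the unit-rate HPP case and all four properties are inherited from Proposition \ref{prop:ilrdepmat}. You simply supply more detail than the paper does at the steps it labels ``trivial'' or ``omitted,'' in particular the explicit identity $\Lambda_{aS+b}(at+b)=\Lambda_S(t)$ for Property \ref{m4} and the pulled-back center for Property \ref{m2}.
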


\begin{remark}
If the process is not a Poisson process, then the conditional intensity function depends on the history of time events. In this case, based on Appendix \ref{app:matipp}, Property \ref{m1} and Property \ref{m4} in Section \ref{sec:depdef} still hold. However, it was pointed out in \citet{qi2021dirichlet} that there exist counter-examples such that different point process realizations will be mapped to the same HPP after the time-rescaling method.  This implies the geometric center may not be unique. Therefore, Properties \ref{m2} and \ref{m3}, in general, do not hold for definition \ref{def:ilrgen}. 
\end{remark}

Based on Definition \ref{def:ilrgen}, the conditional intensity function must be used when applying time rescaling method. However, the true conditional intensity function is often unknown and needs an estimation. In the following subsections, we will examine the estimation procedures according to two commonly used processes. One is the inhomogeneous Poisson process (IPP), and the other is a non-Poisson process with Markovian property on inter-event times.

\subsubsection{ILR depth for inhomogeneous Poisson process} \label{sec:histogram_method}

For Poisson process, the process events are independent of each other and the intensity function is deterministic.  In this subsection, we propose to use a conventional histogram or binning method to estimate the intensity. Such method has been commonly used in density estimation \citep{silverman2018density}.  Based on the definition, the intensity function in an IPP is given as 
$$\lambda(t) =  \lim_{\Delta t \rightarrow 0} \frac{\mathbb{E}[N(t+ \Delta t)-N(t)]}{\Delta t},$$ 
where $N(t)$ is the counting measure on $[0, t)$.  The intensity can be approximated when the bin size $\Delta t $ is small. 
Therefore, the intensity as well as the depth value of an IPP can be estimated by binning method.  We have shown the estimation in Algorithm \ref{alg:hist} and the large sample theory on the estimation is discussed in detail in Section \ref{sec:asym}.  

\begin{algorithm}[ht!]
\caption{ILR depth estimation for inhomogeneous Poisson process}
\begin{algorithmic}
\STATE{\textbf{Input}: $n$ independent realizations of IPP; The time domain $[T_1,T_2]$; Number of bins $M$. }
\STATE{- Evenly divide $[T_1,T_2]$ into $M$ bins with equal width;}
\FOR{each $i=1,2,\dots,n$}
	\STATE{- Denote $n_i$ as the number of events of the $i$-th realization;}
	\STATE{- Denote $\bm{s_i}=(s_{i1},s_{i2},\dots,s_{in_i})$ as the $i$-th realization;}
	\STATE{- Denote $s_{i0}=T_1$, $s_{i(n_i+1)}=T_2$;}
\ENDFOR
\FOR{each $j=1,2,\dots,M$}
	\STATE{- Denote the $j$-th bin as $B_j$;}
	\STATE{- The intensity estimator is: $\hat{\lambda}(t)=\frac{M}{n(T_2-T_1)}\sum_{i=1}^{n}\sum_{r=1}^{n_i}I(s_{ir}\in B_j)$ if $t\in B_j$;}
\ENDFOR
\FOR{each $i=1,2,\dots,n$}
	\FOR{each $j=0,1,\dots,n_i+1$}
		\STATE{- Compute $\hat{\Lambda}_S^{(n)}(s_{ij})=\int_{T_1}^{s_{ij}}\hat{\lambda}(t)dt$;}
	\ENDFOR
	\STATE{- Depth of the $i$-th realization is: 
	\begin{eqnarray*}
	\hat{D}_i=\frac{1}{1-\log\Big(\frac{(n_i+1)^{n_i+1}}{\hat{\Lambda}_S^{(n)}(T_2)^{n_i+1}}\prod_{j=1}^{n_i+1}(\hat{\Lambda}_S^{(n)}(s_{ij})-\hat{\Lambda}_S^{(n)}(s_{i(j-1)}))\Big)}
	\end{eqnarray*}
	}
\ENDFOR
\STATE{\textbf{Output}: Depth values $\hat{D}_1,\dots,\hat{D}_n.$}
\end{algorithmic}
\label{alg:hist}
\end{algorithm}

We will then use two examples to illustrate the conditional ILR depth using ternary plots in IPP based on Definition \ref{def:ilrgen}. \\

{\bf Example 1:} Let the intensity function $\lambda(t)=\cos(4t)+1$ in the time domain $[0,\frac{\pi}{2}]$.  This function decreases from $0$ to $\frac{\pi}{4}$ and increases from $\frac{\pi}{4}$ to $\frac{\pi}{2}$ with $t=\frac{\pi}{4}$ as the global minimum point.  One can generate $1000$ independent realizations with cardinality $2$ and make the ternary plot to show the contours of the ILR depth. The ILR depth values in the IET domain and Euclidean space are shown in Fig. \ref{fig:contour_different}(a) and (b), respectively.  We can clearly see the non-triangularly shaped contours in both plots, which demonstrates the inhomogeneity of the Poisson processes.  

\begin{figure} [h!]
	\centering
	\subfigure[data points in simplex]{
		\begin{minipage}[b]{0.47\textwidth}
			\centering
			\includegraphics[scale=0.3]{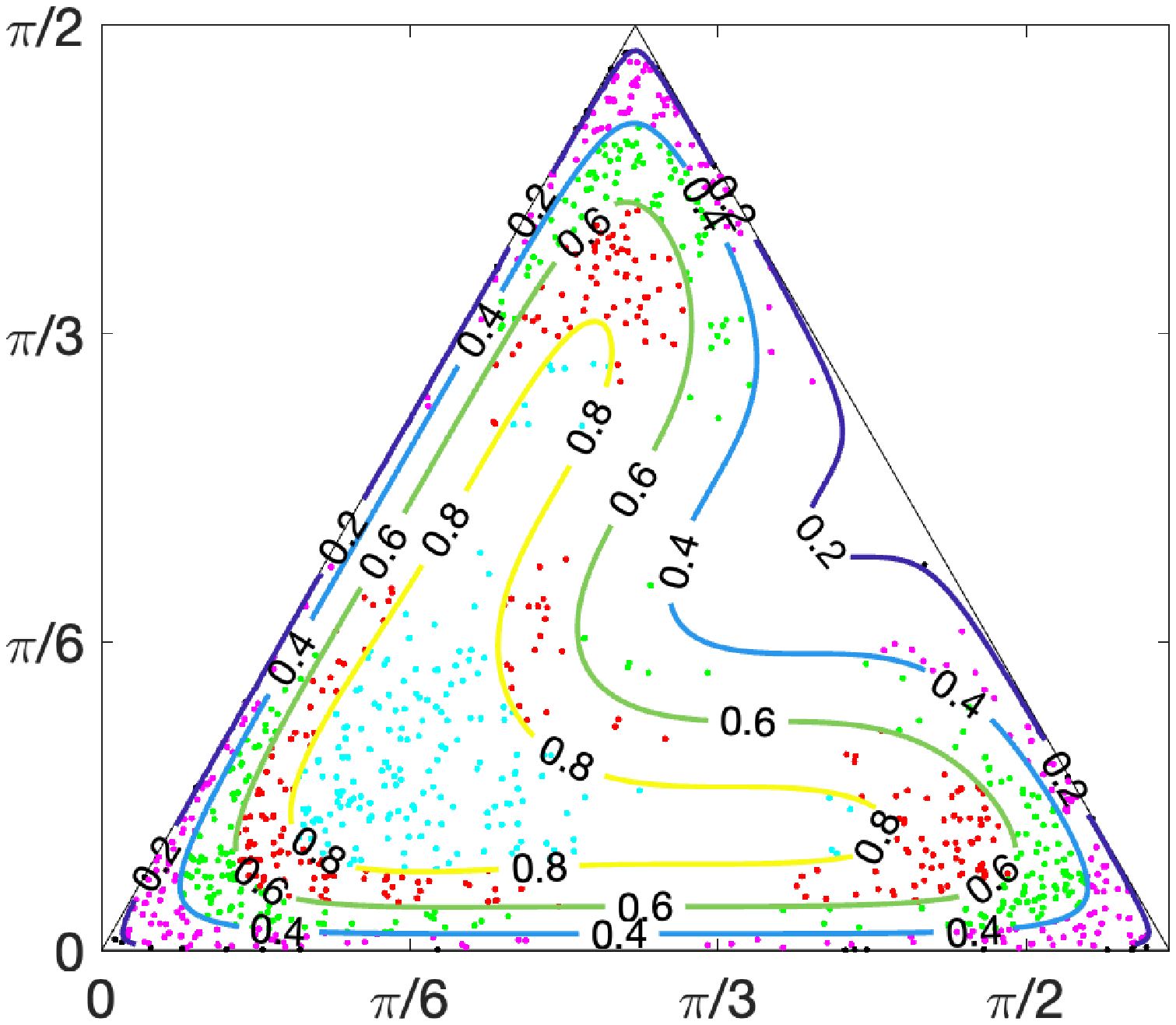}
		\end{minipage}
		\label{fig:ilrcos}
	}
	\subfigure[data points after ILR transformation]{
		\begin{minipage}[b]{0.47\textwidth}
			\centering
			\includegraphics[scale=0.3]{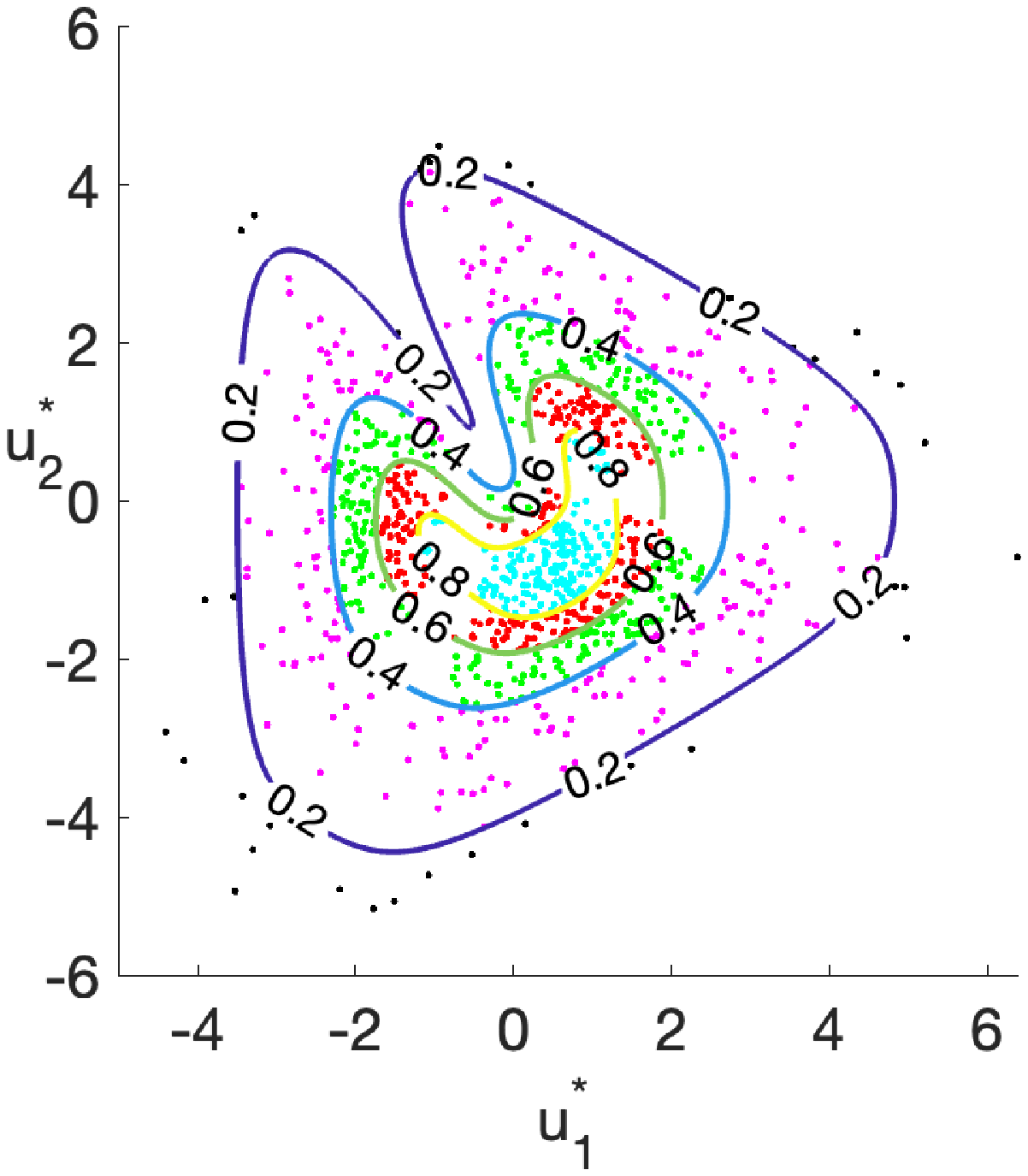}
		\end{minipage}
		\label{fig:cosilrmap}
	}
	\\
	\subfigure[data points in simplex]{
		\begin{minipage}[b]{0.47\textwidth}
			\centering
			\includegraphics[scale=0.3]{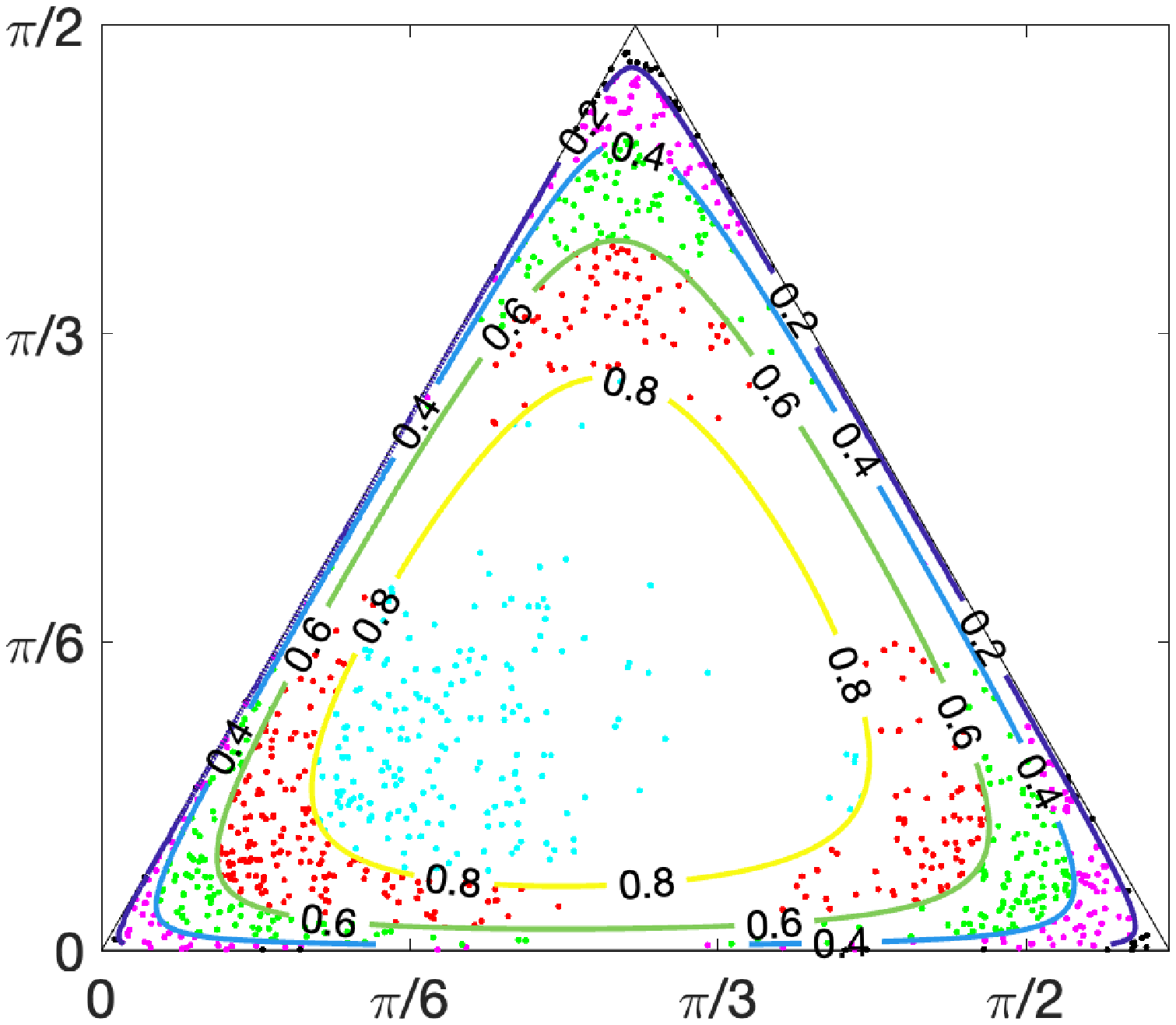}
		\end{minipage}
		\label{fig:sample_dir}
	}	
	\subfigure[data points in simplex]{
		\begin{minipage}[b]{0.47\textwidth}
			\centering
			\includegraphics[scale=0.3]{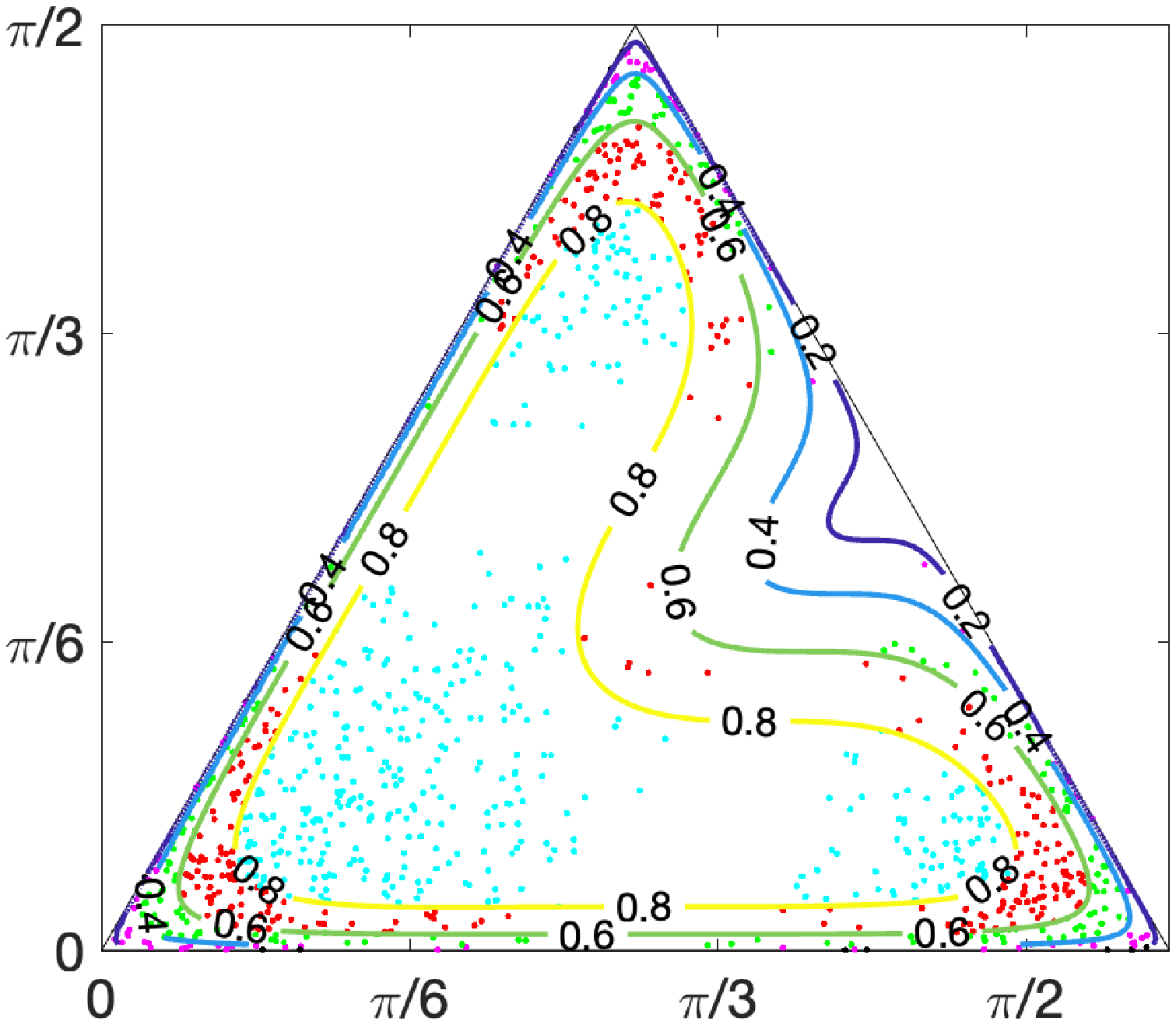}
		\end{minipage}
		\label{fig:dir_tr}
	}	
	\caption{Depth results of IPP with $\lambda(t)=\cos(4t)+1$. (a) Data points in simplex shown by ternary plot, where the points are colored with respect to ranges of the ILR depth values given in Definition \ref{def:ilrgen}. The solid lines indicate the depth contours with specific values. (b) Same as (a) except for data points in the ILR transformed space $\mathbb{R}^k$. (c) Same as (a) except by using the sample Dirichlet depth in \citet{qi2021dirichlet}. (d) Same as (a) except by using the Dirichlet depth after time-rescaling in \citet{qi2021dirichlet}.}
	\label{fig:contour_different}
\end{figure}

Next, the sample Dirichlet depth and the Dirichlet depth with time rescaling method in \citet{qi2021dirichlet} can be applied to the same IPP sample to compute conditional depth with cardinality $2$. The results are shown in Fig. \ref{fig:contour_different}(c) and (d). Given the fact that the value of intensity function is close to $0$ around the middle range, it is not likely to have a process realization such that the second IET is small, meanwhile the first and third IET are almost the same. However, in Fig. \ref{fig:contour_different}(c), this characteristic cannot be reflected. Thus, the sample Dirichlet depth \citep{qi2021dirichlet} is not a good depth choice for IPP. Next, comparing Fig. \ref{fig:contour_different}(a) with \ref{fig:contour_different}(d), the shape of the two contours are identical. Nevertheless, there are fewer realizations located around the middle range of the right side in the ternary plot, e.g. there are fewer realizations with the two time events occurred around the middle part within the time domain. A reasonable depth should capture this feature and give much lower depth value to these realizations. Hence, the ILR depth is more reasonable since it satisfies this property better than the Dirichlet depth with time rescaling. 

{\bf Example 2:} We have shown the ILR depth given that the cardinality is 2.  Now we will check the depth ranking performance for the overall depth in Definition \ref{def:wholedef}. $1000$ independent realizations are simulated with intensity function $\lambda(t)=\cos(t)+1$ in the time domain $[0,2\pi]$. Then the expected value of the number of time events of a realization is $\mathbb{E}[N(2\pi)-N(0)]=\int_{0}^{2\pi}(\cos(t)+1)dt=2\pi\approx 6$. Thus, a processes with $6$ events is expected to have a large 
one dimensional depth in Definition \ref{def:wholedef}.  In Fig. \ref{fig:costop10}, we display the one-dimensional depth with respect to different cardinalities. We also show the top 10 processes with the largest overall depths when the weight coefficient $r$ = 1 and 0.1, respectively.  

\begin{figure} [h!]
	\centering
%
	\subfigure[one dimensional depth]{
		\begin{minipage}[b]{0.3\textwidth}
			\centering
			\includegraphics[scale=0.2]{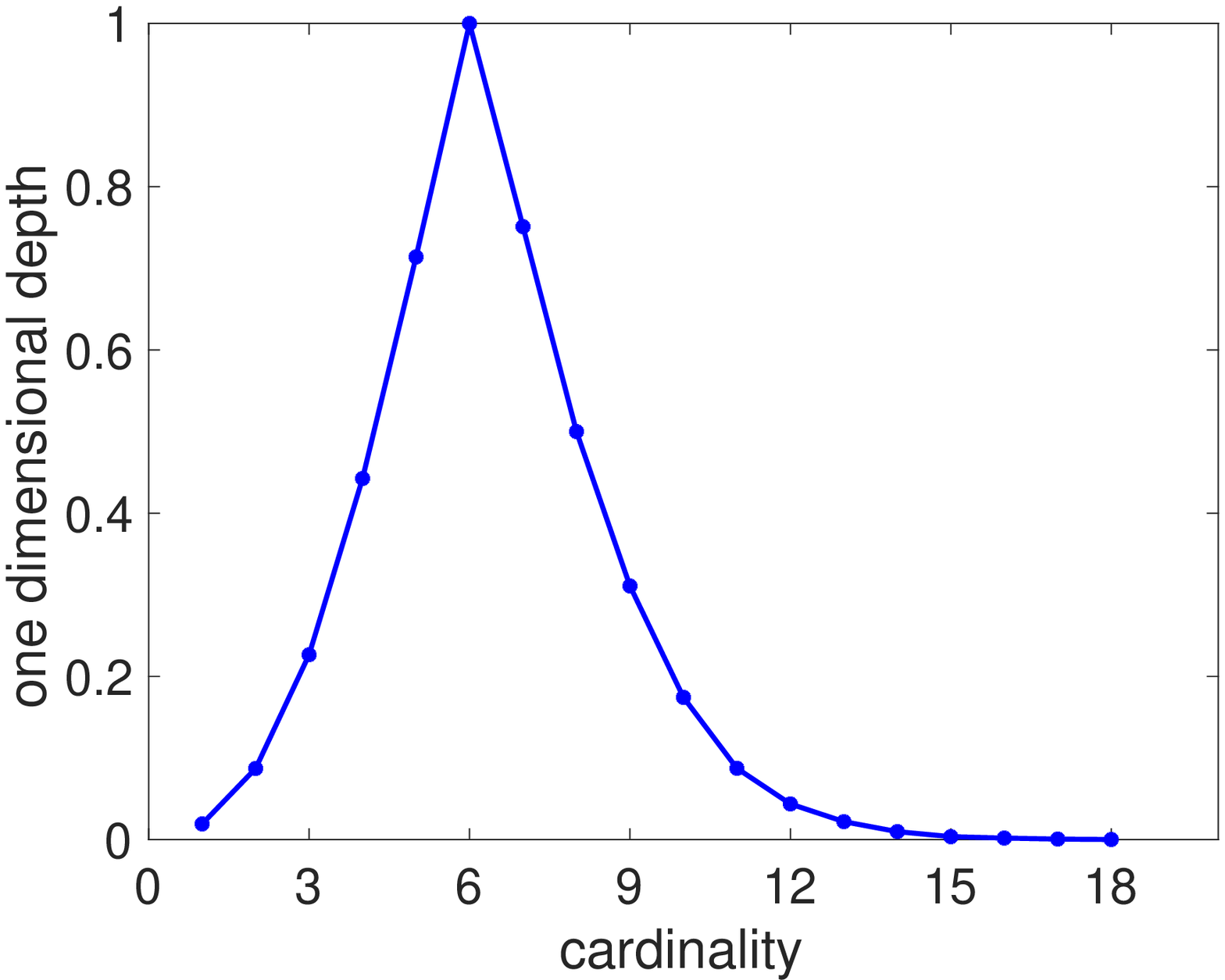}
		\end{minipage}
	}
	\subfigure[depth ranking if $r=1$]{
		\begin{minipage}[b]{0.3\textwidth}
			\centering
			\includegraphics[scale=0.2]{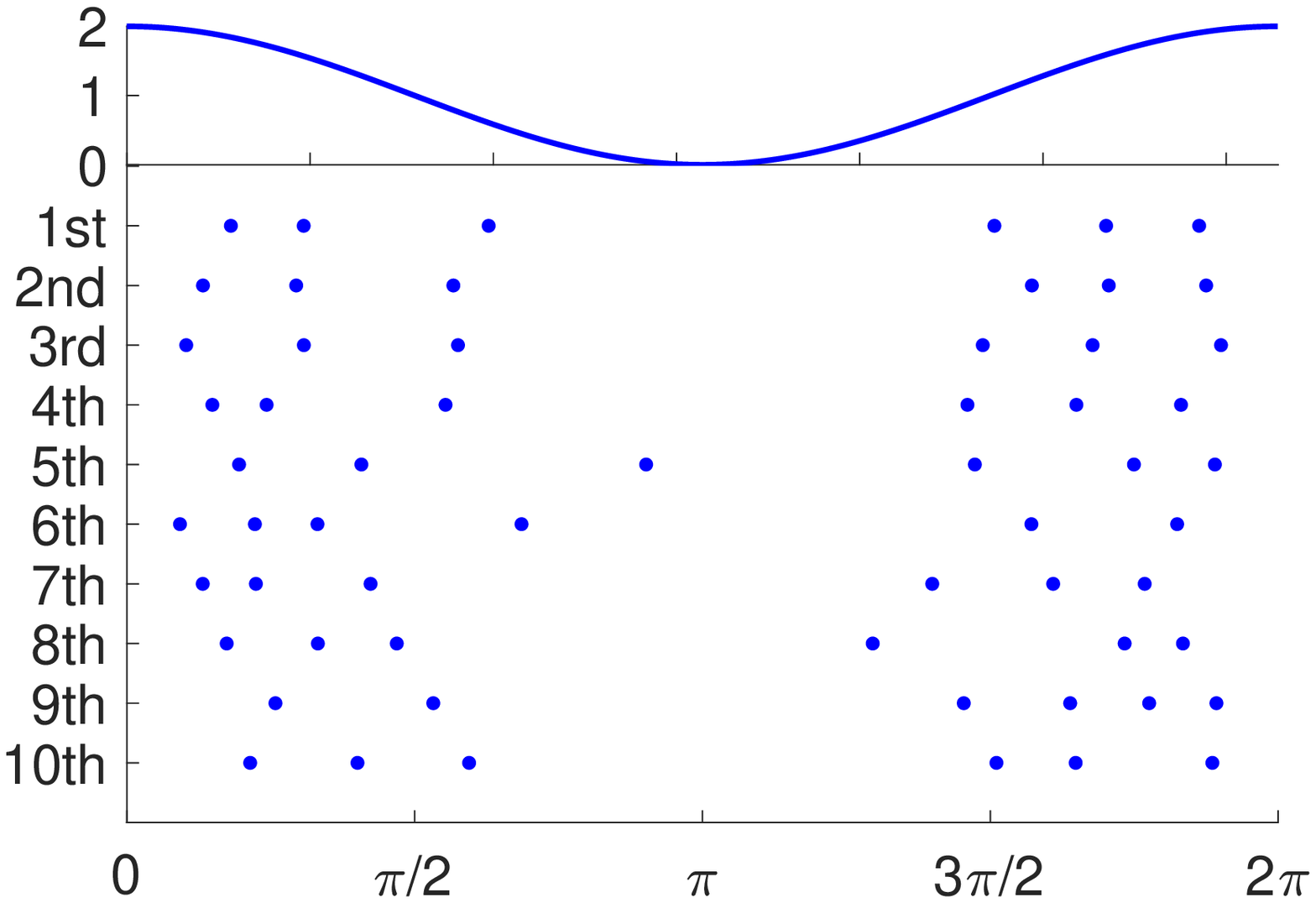}
		\end{minipage}
	}
	\subfigure[depth ranking if $r=0.1$]{
    		\begin{minipage}[b]{0.3\textwidth}
			\centering
   		 	\includegraphics[scale=0.2]{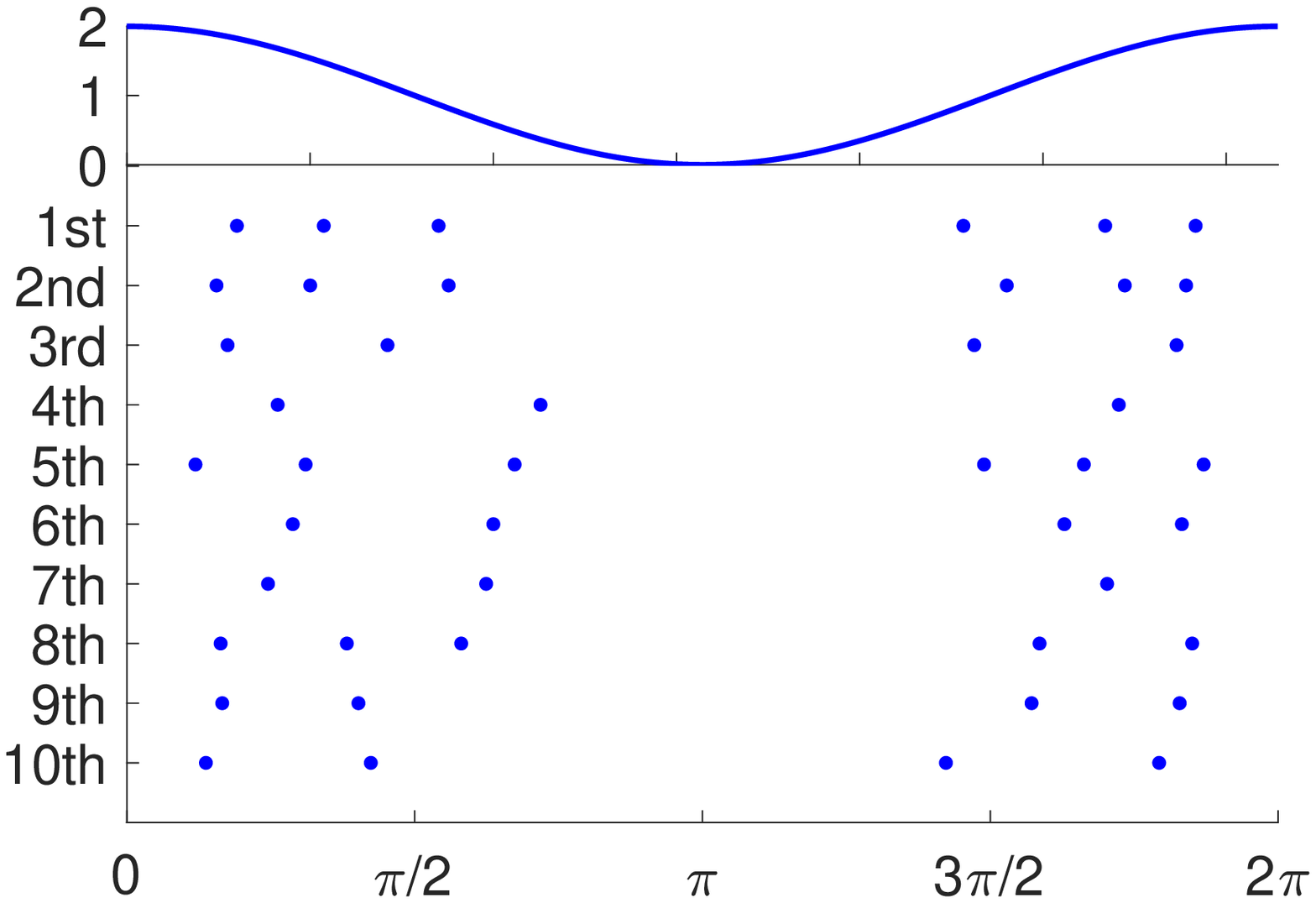}
    		\end{minipage}
    	}
	\caption{One dimensional depth and top $10$ depth-valued inhomogeneous point processes with $\lambda(t)=\cos(t)+1$. (a) One dimensional depth for different cardinalities in the simulated sample. (b) $\lambda(t)$ is displayed in the top panel.  In the bottom panel, each row is a realization of simulated IPP with the ranking of depth value shown in vertical axis. The depth is computed by Definition \ref{def:wholedef}, where the ILR depth combined with time-rescaling in Definition \ref{def:ilrgen} is the conditional depth and the hyper-parameter $r=1$. (c) Same as (b) except for $r=0.1$.}
	\label{fig:costop10}
\end{figure}

From Fig. \ref{fig:costop10}(a), one can conclude that the one dimensional depth obtains the maximum value when cardinality is $6$. 
When $r=1$, the one dimensional depth in Definition \ref{def:wholedef} is dominant, which makes realizations with cardinality $6$ rank at the top places.  This can be seen in Fig. \ref{fig:costop10}(b), where all realizations in top 10 places have $6$ time events. When $r=0.1$, the number of events is less dominant, and the distribution of event times becomes more important in the ranking process.  Fig. \ref{fig:costop10}(c) shows that the realizations with depth in top 10 places have cardinality different from 6. However, the locations of time events of top 10 depth realizations are close to the density pattern of the intensity function (i.e., there are more events when intensity is large, and fewer events when intensity is small).  



\subsubsection{ILR depth for non-Poisson process}
If the point process is non-Poisson process in time domain $[T_1,T_2]$, then there exists history dependence in conditional intensity function and the Histogram method is not applicable.  The estimation of the conditional intensity $\lambda(t|H_{t})$ is, in general, highly challenging.  A tractable simplification assumes the Markovian property in the following form \citep{kass2001spike}:
\begin{eqnarray} \label{eq:markov}
\lambda(t|H_{t})=\lambda(t,t-s_*(t))=\lambda_1(t)\lambda_2(t-s_{*}(t)),
\end{eqnarray}
where $\lambda_1(\cdot), \lambda_2(\cdot)$ are two deterministic intensity functions and $s_{*}(t)$ is the last time event preceding to $t$. If there is no time event before time $t$, then denote $s_*(t)=T_1$.  Point process with this simplified conditional intensity function is called an \textit{inhomogeneous Markov interval (IMI) process}. 

To obtain an estimation of Eqn. \eqref{eq:markov} from observed data, we adopt a non-parametric approach proposed by \citet{wojcik2009direct}. To simplify the notation, denote $\tau=t-s_*(t)$. Thus, Eqn. \eqref{eq:markov} can be rewritten as $\lambda(t,\tau)=\lambda_1(t)\lambda_2(\tau)$. In this case, $\lambda(t,\tau)$ can be considered as the product of two intensity functions, one depends on the current time $t$ and the other one is only corresponding to the inter-event time. The computation of the ILR depth with the IMI model to estimate $\lambda_1(t)$ and $\lambda_2(\tau)$ is shown in Algorithm \ref{alg:imi}.  

\begin{algorithm}[ht!]
\caption{Depth estimation for inhomogeneous Markov interval process}
\begin{algorithmic}
\STATE{\textbf{Input}: $n$ independent realizations of an IMI process on $[T_1,T_2]$; Number of bins $M_1$ to estimate $\lambda_1$; Number of bins $M_2$ to estimate $\lambda_2$.}
\STATE{- Calaulate all inter-event times in the data. Denote the largest one as $L$;}
\STATE{- Uniformly divide $L$ into $M_2$ bins and denote the bin width as $dt$;}
\FOR{each $i=1,2,\dots,n$}
	\STATE{- Denote $n_i$ as the number of events of the $i$-th realization; Denote $\bm{s_i}=(s_{i1},s_{i2},\dots,s_{in_i})$ as the $i$-th realization; Denote $s_{i0}=T_1$, $s_{i(n_i+1)}=T_2$; Denote $\bm{u_i}=(u_{i1},\dots,u_{in_i})=(s_{i1}-s_{i0},\dots,s_{in_i}-s_{i(n_i-1)})$;}
\ENDFOR
\FOR {each $i=1,2,\dots,M_2$}
	\STATE{- Denote the $i$-th bin as $B_i$;}
	\STATE{- For any $\tau\in B_i$, compute the density of IET $p(\tau)=\frac{\sum_{j=1}^{n}\sum_{k=1}^{n_j}I(u_{jk}\in B_i)}{dt\cdot\sum_{j=1}^{n}n_j}$;}
	\STATE{- Compute the conditional intensity function $\lambda_2(\tau)=\frac{p(\tau)}{1-\int_0^\tau p(\tau')d\tau'}$;}
\ENDFOR
\STATE{- Uniformly divide $[T_1,T_2]$ into $M_1$ bins and denote the bin width as $\Delta t$;}
\FOR{each $k=1,2,\dots,M_1$}
	\STATE{- Denote the $k$-th bin as $A_k$ and $t_k=(k-\frac{1}{2})\Delta t$}
	\STATE{- Denote $\tau_k^j=t_k-s_ *^j(t_k)$, where $s_ *^j(t_k)$ is the nearest events before $t_k$ in the $j$th relization, $j=1,2,\dots,n$}
	\STATE{- Probability of an event in the $k$-th bin $p_k=\frac{\sum_{i=1}^n\sum_{r=1}^{n_i}I(s_{ir}\in A_k)}{\sum_{i=1}^{n}n_i}$;}
	\STATE{- Compute $\lambda_1(t)=\frac{p_k\cdot n}{\Delta t\cdot\sum_{j=1}^n \lambda_2(\tau_k^j)}$ if $t\in A_k$;}

\ENDFOR
\FOR{each $i=1,2,\dots,n$}
	\STATE{- Compute $\hat{\Lambda}_S^{(n)}(s_{ij})=\int_{T_1}^{s_{ij}}\lambda_1(t)\lambda_2(t-s_*(t))dt$, $j=0,1,\dots,n_i+1$}
	\STATE{- 
	$\hat{D}_i=\frac{1}{1-\log\Big(\frac{(n_i+1)^{n_i+1}}{\hat{\Lambda}_S^{(n)}(T_2)^{n_i+1}}\prod_{j=1}^{n_i+1}(\hat{\Lambda}_S^{(n)}(s_{ij})-\hat{\Lambda}_S^{(n)}(s_{i(j-1)}))\Big)}$
	}
\ENDFOR
\STATE{\textbf{Output}: $\hat{D}_1,\dots,\hat{D}_n$ are the depth values.}
\end{algorithmic}
\label{alg:imi}
\end{algorithm}

We will use one example to illustrate the ILR depth for an IMI process based on Definition \ref{def:ilrgen}. Suppose the conditional intensity function is $\lambda(t|H_t)=(\sin(t)+1)\cdot(\sin(t-s_*(t)-\frac{\pi}{2})+1)$, where $s_*(t)$ is the last time event preceding to $t$.  $10000$ realizations are generated in the time domain $[0,2\pi]$.  We note that the conditional intensity function varies with respect to event history.  We use Algorithm \ref{alg:imi} to estimate the conditional intensity function as well as depth value for each realization.  The conditional intensity estimate on one typical realization is shown in Fig. \ref{fig:smooth_sin}.   We can see that the IMI model provides a much better estimate than the event-independent Histogram method.

\begin{figure} [h!]
\centering
\includegraphics[scale=0.4]{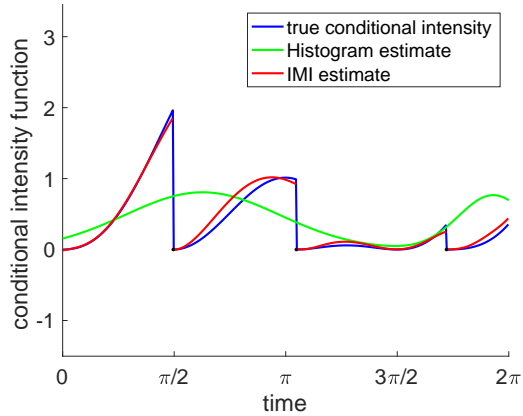}
\caption{Estimation of conditional intensity in a typical realization.  The three solid lines denote true conditional intensity (blue), Histogram estimate (green), and IMI estimate (red), respectively.}
\label{fig:smooth_sin}
\end{figure}

Similar to the IPP study, the ranking performance of the $10000$ realizations can be evaluated by selecting realizations with top 10 depth values.  The result is shown in Fig. \ref{fig:estimation}, where we display the indexed top 10 depth-valued processes using the true conditional intensity, estimated by the IMI method, and estimated by the Histogram method, respectively.   We can see the processes using true intensity (Panel (a)) and IMI method (Panel (b)) have a lot of overlaps. Indeed, 7 indices, $2200$, $7505$, $5099$, $4433$, $9423$, $9088$ and $6931$, appear in both Fig. \ref{fig:estimation}(a) and \ref{fig:estimation}(b) out of the top 10 realizations. However, in Fig. \ref{fig:estimation}(c), none of the realizations depth are ranked as top 10 in Fig. \ref{fig:estimation}(a). This result shows that the depth calculated by the IMI model is more accurate for general point process as compared to the Histogram method. 

\begin{figure} [h!]
	\centering
	\subfigure[true intensity]{
		\begin{minipage}[b]{0.3\textwidth}
			\centering
			\includegraphics[scale=0.2]{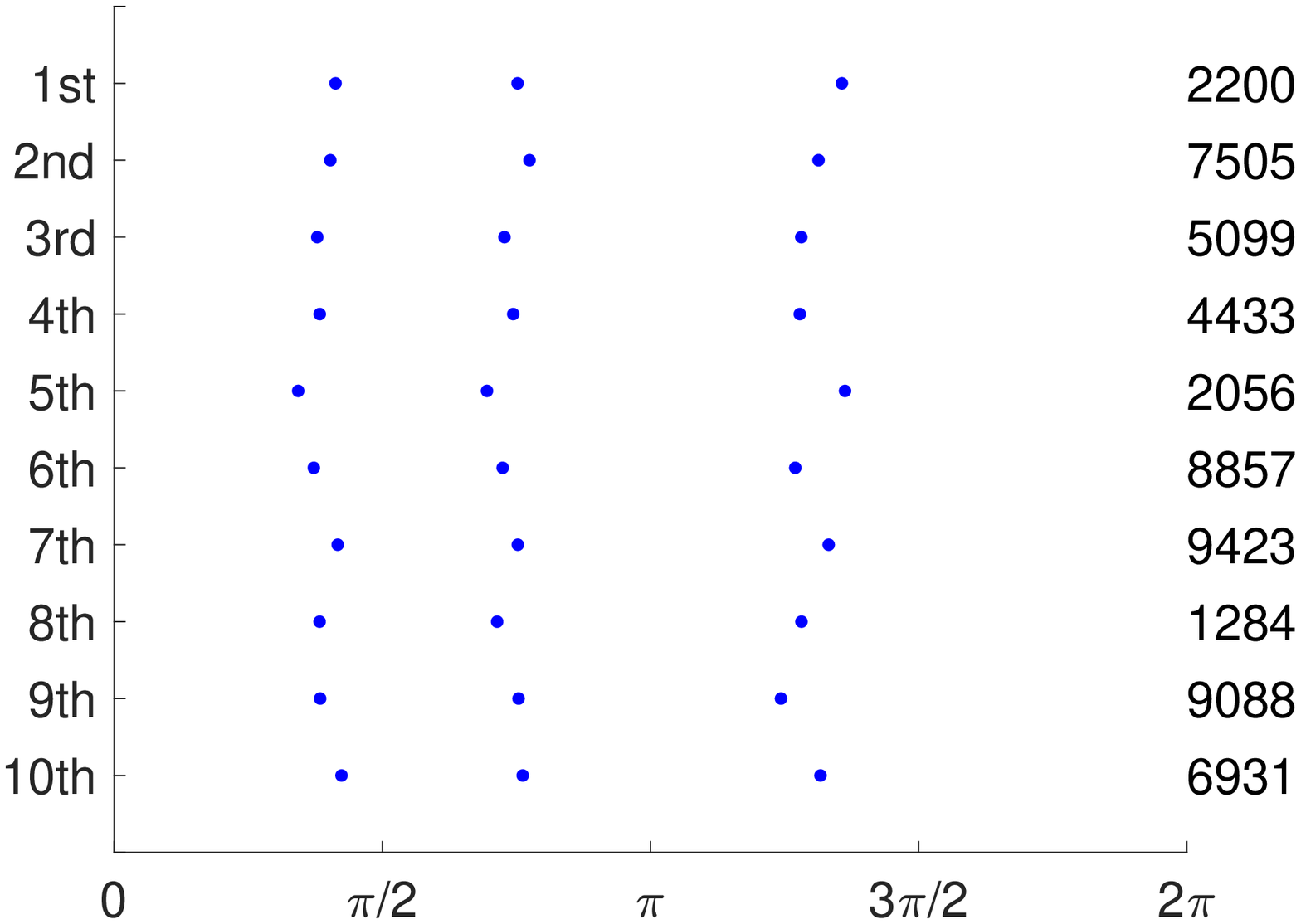}
		\end{minipage}
		\label{fig:sincom1}
	}
    	\subfigure[IMI model]{
    		\begin{minipage}[b]{0.3\textwidth}
			\centering
   		 	\includegraphics[scale=0.2]{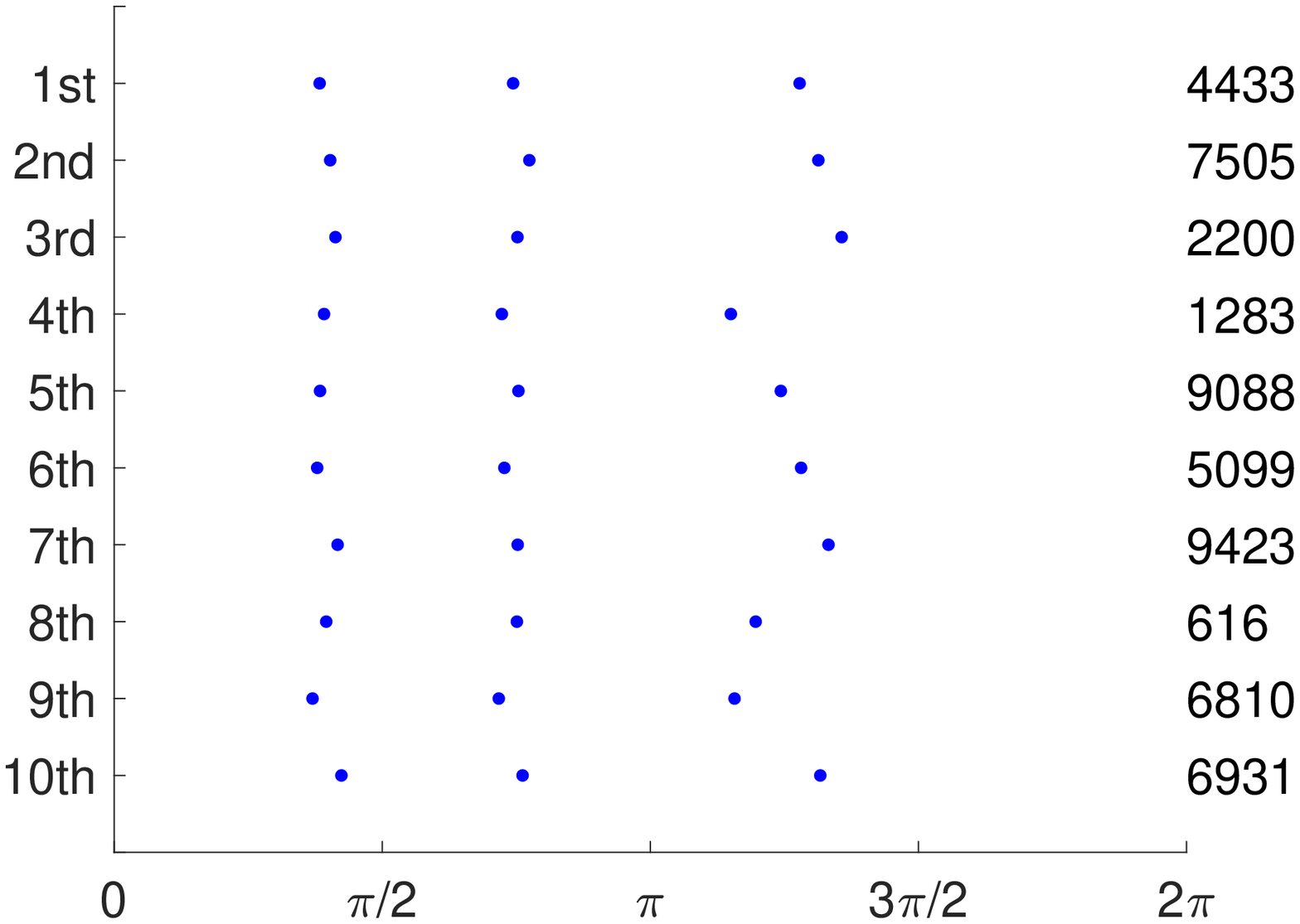}
    		\end{minipage}
		\label{fig:sincom2}
    	}
	\subfigure[Histogram method]{
    		\begin{minipage}[b]{0.3\textwidth}
			\centering
   		 	\includegraphics[scale=0.2]{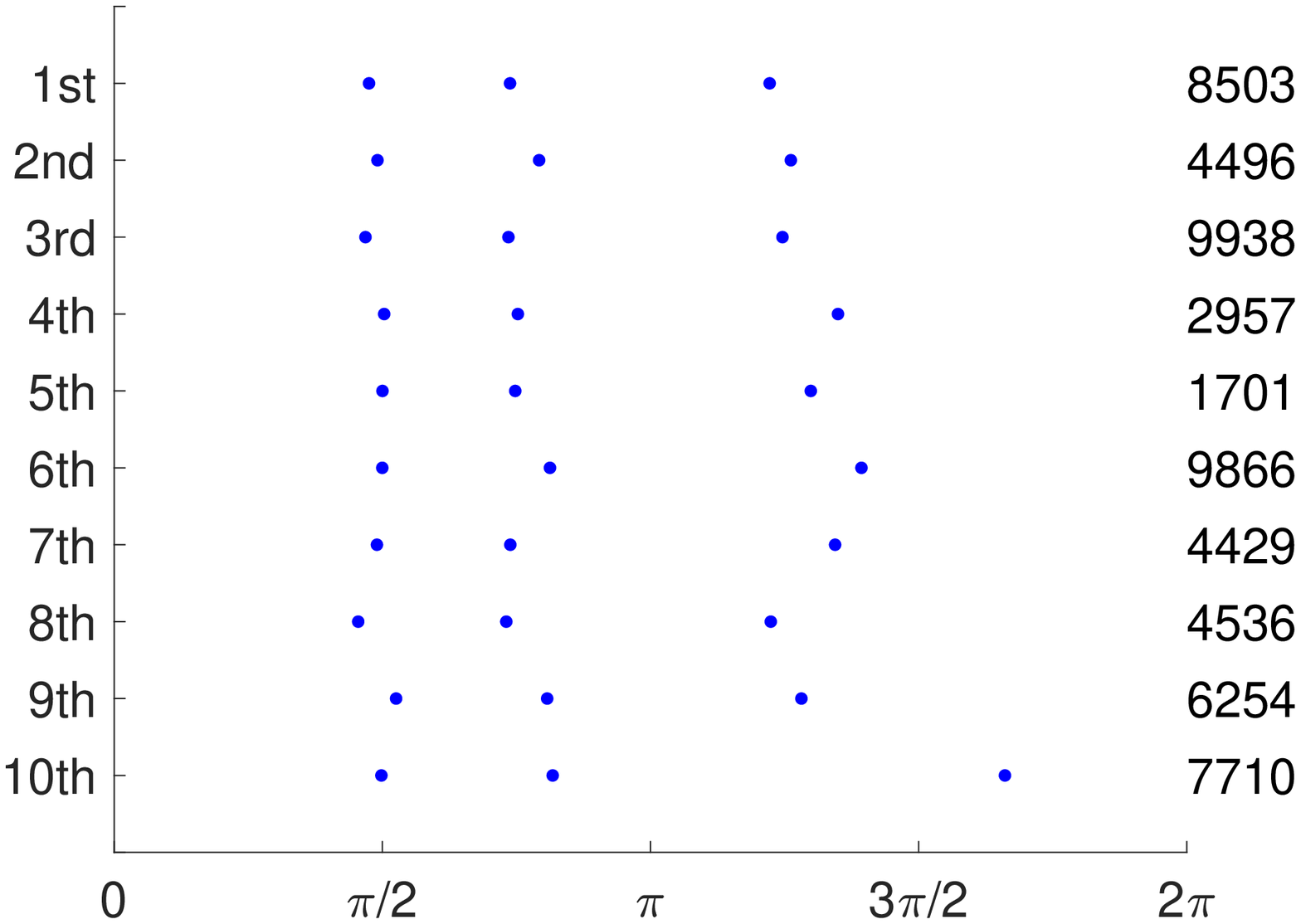}
    		\end{minipage}
		\label{fig:sincom3}
    	}
	\caption{Top $10$ depth-valued processes for different conditional intensity function estimations. (a) Each row is a realization of simulated point process with the rankings of depth value shown on left vertical axis and the index shown on right vertical axis. The depth is computed by Definition \ref{def:wholedef}, where the conditional ILR depth is estimated using true conditional intensity in the time-rescaling transformation and the hyper-parameter $r=1$. (b) Same as (a) except that the IMI model is used in the time-rescaling. (c) Same as (a) except that the Histogram method is  in the time-rescaling.}
	\label{fig:estimation}
\end{figure}

\section{Asymptotic Theory} \label{sec:asym}
In this section, we will study the asymptotics of the estimated depth values using Algorithm  \ref{alg:hist}.  We will prove that the sample depth value computed with estimated $\hat{\lambda}(t)$ converges to the population depth value computed by using the true intensity function $\lambda(t)$. Notice that Algorithm \ref{alg:hist} adopts the Histogram method, which is only applicable for Poisson process. We at first have a uniform convergence result about the estimated integrated intensity function in the following lemma, where the proof is given in Appendix \ref{app:uniformproof}. 
\begin{lemma} \label{lemma:uniform} 
Suppose $\lambda(t)$ is the true intensity function of a sample of IPP in $\mathbb{S}$ with sample size $n$. Assume $\sup_t\lambda(t)\leq R$ ($R$ is a positive finite number) and $\lambda(t)$ is $L$-Lipschitz continuous on $[T_1,T_2]$, i.e., for any $x,y\in [T_1,T_2]$, $|\lambda(x)-\lambda(y)|\leq L|x-y|$ for a finite $L$. Denote $\hat{\lambda}(t)$ as the estimated intensity function based on Algorithm \ref{alg:hist}. Let $M$ be the number of bins in Algorithm \ref{alg:hist}, and $\Lambda_S(x)=\int_{T_1}^x\lambda(t)dt$, $\hat{\Lambda}_S^{(n)}(x)=\int_{T_1}^x\hat{\lambda}(t)dt$, $T_1\leq x\leq T_2$.  Then the following uniform convergence rate holds: 
\begin{eqnarray*}
\sup_x|\hat{\Lambda}_S^{(n)}(x)-\Lambda_S(x)|=O\Big(\frac{1}{M}\Big)+O_P\Big(\sqrt{\frac{M^2}{n}}\Big)
\end{eqnarray*}
\end{lemma}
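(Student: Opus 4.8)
The plan is to decompose the error $\hat{\Lambda}_S^{(n)}(x) - \Lambda_S(x) = \int_{T_1}^x (\hat\lambda(t) - \lambda(t))\,dt$ into a deterministic bias term and a stochastic fluctuation term, and to bound each uniformly in $x$. Writing $h = (T_2-T_1)/M$ for the common bin width and $B_j$ for the $j$-th bin, I would first record the two elementary facts that drive the argument: for $t \in B_j$ the estimator has mean $\mathbb{E}[\hat\lambda(t)] = \frac{1}{h}\int_{B_j}\lambda(s)\,ds$ (the bin average of $\lambda$), and, since the underlying process is Poisson, the pooled bin count $N_j = \sum_{i,r} I(s_{ir}\in B_j)$ is $\mathrm{Poisson}(n\mu_j)$ with $\mu_j = \int_{B_j}\lambda \le Rh$, so that $\mathrm{Var}(\hat\lambda(t)) = \mu_j/(nh^2) \le R/(nh)$.

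For the bias I would use the splitting $\int_{T_1}^x(\mathbb{E}\hat\lambda - \lambda)$ together with the $L$-Lipschitz assumption: for $t\in B_j$ we have $|\mathbb{E}\hat\lambda(t) - \lambda(t)| = |\frac{1}{h}\int_{B_j}(\lambda(s)-\lambda(t))\,ds| \le Lh$, because $|s-t|\le h$ on a single bin. Integrating over $[T_1,x]$ and taking the supremum over $x$ then gives $\sup_x |\int_{T_1}^x(\mathbb{E}\hat\lambda - \lambda)| \le (T_2-T_1)Lh = O(1/M)$, uniformly in $x$.

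For the stochastic term the key reduction is to pass from the supremum over the continuum of $x$ to a maximum over the $M$ bins: bounding the integral by its total-variation majorant,
\[
\sup_x \Big| \int_{T_1}^x (\hat\lambda - \mathbb{E}\hat\lambda) \Big| \le \int_{T_1}^{T_2}|\hat\lambda - \mathbb{E}\hat\lambda| \le (T_2-T_1)\max_{1\le j \le M} |\hat\lambda_j - \mathbb{E}\hat\lambda_j|,
\]
where $\hat\lambda_j$ denotes the constant value of $\hat\lambda$ on $B_j$. I would then control the maximum by Chebyshev combined with a union bound over the $M$ bins: $P(\max_j|\hat\lambda_j - \mathbb{E}\hat\lambda_j| > \epsilon) \le \sum_j \mathrm{Var}(\hat\lambda_j)/\epsilon^2 = \frac{1}{\epsilon^2 nh^2}\sum_j\mu_j \le \frac{R(T_2-T_1)}{\epsilon^2 nh^2} = \frac{RM^2}{\epsilon^2 n(T_2-T_1)}$. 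Choosing $\epsilon$ of order $\sqrt{M^2/n}$ makes the right-hand side an arbitrarily small constant, which is precisely the statement $\max_j|\hat\lambda_j - \mathbb{E}\hat\lambda_j| = O_P(\sqrt{M^2/n})$; multiplying by the factor $(T_2-T_1)$ propagates this rate to the stochastic term.

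Combining the two pieces through the triangle inequality yields $\sup_x|\hat\Lambda_S^{(n)}(x)-\Lambda_S(x)| = O(1/M) + O_P(\sqrt{M^2/n})$. The step I expect to require the most care is the uniform-in-$x$ control of the stochastic term: the integral-to-maximum reduction must be carried out cleanly so that no dependence on the location of $x$ within a partial bin survives, and the $\sqrt{M^2/n}$ rate arises only from the crude union bound, which is a factor $\sqrt{M}$ worse than the pointwise fluctuation of order $\sqrt{M/n}$. I would therefore present this bound explicitly as an upper bound rather than a sharp rate. A secondary point worth verifying is that the Poisson structure (independence of counts across disjoint bins and across the $n$ realizations) is exactly what legitimizes $N_j \sim \mathrm{Poisson}(n\mu_j)$ and hence the variance identity used above; this is the precise place where the restriction of the Histogram method to Poisson processes enters.
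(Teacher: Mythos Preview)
Your proposal is correct and follows essentially the same route as the paper: a bias--variance decomposition of $\hat\lambda-\lambda$, Lipschitz continuity to control the bias at rate $O(1/M)$, and a union bound over the $M$ bins combined with Chebyshev and the Poisson variance identity to get $O_P(\sqrt{M^2/n})$ for the stochastic part, with the passage to $\hat\Lambda_S^{(n)}-\Lambda_S$ via the crude integral bound $\sup_x|\int_{T_1}^x(\cdot)|\le\int_{T_1}^{T_2}|\cdot|$. The only cosmetic difference is that the paper first establishes the rate for $\sup_t|\hat\lambda(t)-\lambda(t)|$ and integrates at the very end, whereas you integrate the bias and the fluctuation separately; the arithmetic is identical.
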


Based on Lemma \ref{lemma:uniform}, with some simple algebra, it is straightforward to conclude $M_{opt}=O(n^{\frac{1}{4}})$ is the optimal choice of $M$. Therefore, if $n\to\infty$ and $M\propto n^{\frac{1}{4}}$, $\hat{\Lambda}_S^{(n)}(x)$ uniformly converges to $\Lambda_S(x)$ in probability. 

Using this result, we can obtain the main conclusion on the convergence of the sample ILR depth in the following theorem. 

\begin{thm} \label{thm:final_convergence} 
Under the same assumptions as given in Lemma \ref{lemma:uniform}, denote $D_{c-TR}(s;P_{S||S|=k},\hat{\Lambda}_S^{(n)})$ as the sample ILR conditional depth computed by Algorithm \ref{alg:hist} with estimated intensity function. Also denote $D_{c-TR}(s;P_{S||S|=k},\Lambda_S)$ as the population ILR conditional depth computed by using the true conditional intensity function. Then we have,
\begin{eqnarray*}
\sup_{\bm{s}\in\mathbb{S}_k}|D_{c-TR}(\bm{s};P_{S||S|=k},\hat{\Lambda}_S^{(n)})-D_{c-TR}(\bm{s};P_{S||S|=k},\Lambda_S)|\to 0 
\end{eqnarray*}
in probability as $n\to\infty$.
\end{thm}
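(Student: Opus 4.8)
The plan is to view the conditional depth as a smooth, monotone function of the increments of the integrated intensity and to push the uniform convergence of Lemma \ref{lemma:uniform} through this function, treating the non-uniform behaviour of the logarithm near the boundary $\mathbb{B}_k$ separately. For an increasing function $\Lambda$ with $\Lambda(T_1)=0$ I would write
\[
G(\bm{s};\Lambda)=\frac{(k+1)^{k+1}}{\Lambda(T_2)^{k+1}}\prod_{i=1}^{k+1}\big(\Lambda(s_i)-\Lambda(s_{i-1})\big),\qquad \phi(\bm{s};\Lambda)=\log G(\bm{s};\Lambda),
\]
so that, by Definition \ref{def:ilrgen} with $c=(k+1)^{k+1}$, the depth equals $h(\phi(\bm{s};\Lambda))$ with $h(x)=1/(1-x)$. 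Since the $k+1$ increments are nonnegative and sum to $\Lambda(T_2)$, the AM--GM inequality gives $G\le 1$, hence $\phi\le 0$ for both $\Lambda_S$ and $\hat\Lambda_S^{(n)}$ (the latter is increasing because $\hat\lambda\ge 0$). On $(-\infty,0]$ one has $0<h'\le 1$, so $h$ is $1$-Lipschitz and both depths lie in $(0,1]$. Away from the boundary this reduces the claim to controlling $|\phi(\bm{s};\hat\Lambda_S^{(n)})-\phi(\bm{s};\Lambda_S)|$.

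Set $\eta_n=\sup_{x\in[T_1,T_2]}|\hat\Lambda_S^{(n)}(x)-\Lambda_S(x)|$, which tends to $0$ in probability for $M\propto n^{1/4}$ by Lemma \ref{lemma:uniform}. I would fix $\epsilon>0$ and split $\mathbb{S}_k$ according to the smallest true increment $m(\bm{s})=\min_{1\le i\le k+1}(\Lambda_S(s_i)-\Lambda_S(s_{i-1}))$. On the near-boundary region $\{m(\bm{s})\le\delta\}$ I bound every factor of the product by $\Lambda_S(T_2)$ and the smallest by $\delta$, giving $G(\bm{s};\Lambda_S)\le (k+1)^{k+1}\delta/\Lambda_S(T_2)$; the analogous estimate for $\hat\Lambda_S^{(n)}$ uses $m(\bm{s})+2\eta_n$ in place of $\delta$ and $\Lambda_S(T_2)-\eta_n$ in the denominator. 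Hence both $\phi$'s are very negative and both depths fall below $\epsilon/2$ once $\delta$ and $\eta_n$ are small, so their difference is at most $\epsilon$ on this region.

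On the complementary region $\{m(\bm{s})>\delta\}$, every true increment exceeds $\delta$ and, for $\eta_n<\delta/4$, every estimated increment exceeds $\delta/2$; likewise $\Lambda_S(T_2)$ and $\hat\Lambda_S^{(n)}(T_2)$ are bounded below. Here $\log$ is Lipschitz with constant $2/\delta$, so each of the $k+2$ logarithmic terms in $\phi(\bm{s};\hat\Lambda_S^{(n)})-\phi(\bm{s};\Lambda_S)$ is at most $(2/\delta)\cdot 2\eta_n$ uniformly in $\bm{s}$, yielding $\sup_{\bm{s}}|\phi(\bm{s};\hat\Lambda_S^{(n)})-\phi(\bm{s};\Lambda_S)|\le C(\delta,k)\,\eta_n$ and, through the $1$-Lipschitz $h$, the same bound for the depth difference. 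Combining the two regions and sending $\eta_n\to 0$ in probability finishes the proof: given $\epsilon$, I first choose $\delta$ to control the boundary region and then take $n$ large enough that $C(\delta,k)\eta_n<\epsilon$ with high probability. I expect the main obstacle to be precisely this boundary behaviour: because $\log$ is unbounded near zero, the difference of the $\phi$'s cannot be dominated uniformly by $\eta_n$ near $\mathbb{B}_k$, so the argument must instead exploit that both depths vanish there, and making the two thresholds---$\delta$ for the boundary and $\eta_n$ for the interior---interlock cleanly is the delicate point.
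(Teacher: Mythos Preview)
Your argument is correct. It differs from the paper's proof, which is much shorter and more abstract: the paper observes that the rescaled domain $\Lambda_S(\mathbb{S}_k)$ is compact (Heine--Borel), notes that the depth---viewed as a function of the rescaled coordinates $(\Lambda_S(s_1),\dots,\Lambda_S(s_k))$---is continuous on this compact set (with value $0$ on the boundary), hence uniformly continuous, and then concludes via a continuous-mapping argument together with the uniform convergence $\hat\Lambda_S^{(n)}\to\Lambda_S$ from Lemma~\ref{lemma:uniform}. In effect, the paper absorbs your entire boundary/interior split into the single sentence ``continuous on a compact set implies uniformly continuous.''

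Your route is a direct $\epsilon$--$\delta$ proof that makes explicit what the paper's compactness argument hides: the singularity of the logarithm at $\mathbb{B}_k$ is exactly why one cannot simply Lipschitz-control $\phi$ globally, and you resolve this by showing both depths are small near the boundary rather than close to each other. Two concrete advantages of your version: (i) it is more self-contained---the paper's invocation of the ``continuous mapping theorem'' is a little loose, since what is actually used is that uniform continuity plus uniform convergence of inputs gives uniform convergence of outputs, and one must also check that the estimated rescaled points remain in a set where $F$ is defined; your argument handles this explicitly via the $\eta_n<\delta/4$ condition; (ii) your quantitative bound $C(\delta,k)\eta_n$ on the interior region could, with a little more work, be turned into an explicit rate, which the pure compactness route does not provide. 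The paper's approach, on the other hand, is cleaner conceptually and avoids the bookkeeping of thresholds.
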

\begin{proof}
Since $\bm{s}\in\mathbb{S}_k$ and $\mathbb{S}_k=\{\bm{s}=(s_1,s_2,\dots,s_k)|T_1\leq s_1\leq s_2\leq\dots\leq s_k\leq T_2\}$, the time-rescaling result $\Lambda_S(\bm{s})$ belongs to a bounded and closed set $\Lambda_S(\mathbb{S}_k)=\{(s_1,s_2,\dots,s_k)|\Lambda_S(T_1)\leq s_1\leq s_2\leq\dots\leq s_k \leq \Lambda_S(T_2)\}$. Thus, $\Lambda_S(\mathbb{S}_k)$ is a bounded and closed subset of Euclidean space $\mathbb{R}^{k}$. From \textit{Heine–Borel theorem}, $\Lambda_S(\mathbb{S}_k)$ is a compact set and a continuous function defined on $\Lambda_S(\mathbb{S}_k)$ is a uniform continuous function. Consequently, $D_{c-TR}(s;P_{S||S|=k},\Lambda_S)$ is a uniform continuous function on $\Lambda_S(\mathbb{S}_k)$. Therefore, from the \textit{continuous mapping theorem}, $\sup_{s\in\mathbb{S}_k}|D_{c-TR}(s;P_{S||S|=k},\hat{\Lambda}_S^{(n)})-D_{c-TR}(s;P_{S||S|=k},\Lambda_S)|\to 0$ in probability as $n\to\infty$. 
\end{proof}

\section{Real Data Application} \label{sec:real_data_app}

In this section, we will illustrate the proposed depth method in a dataset from the real world. 
We consider the occurrence times of car accidents from 2016 to 2020 in Tallahassee, Florida. The data can be retrieved at the link \textit{https://www.kaggle.com/sobhanmoosavi/us-accidents} and was previously used in \citet{moosavi2019countrywide,moosavi2019accident}. Tallahassee is the capital city of Florida and there is a highway \textit{I-10} located at the northern region of the city.  Majority of people live in Tallahassee commute via local roads and the highway is mainly used by long-distance travelers. Our dataset includes accident occurrence times in two types of roads: highway (i.e., I-10) and local roads. For each type, the occurrence times are recorded in the time domain $[0,24]$ in the units of hours.

If at least one car accident was recorded for a specific day, the accident occurrence times in that day are treated as a realization of a point process. Since each accident can be assumed to be independent of each other, we will consider the point process as an inhomogeneous Poisson process and focus on the accident occurrence pattern for both local roads and the highway I-10. Hence, the histogram method introduced in Section \ref{sec:histogram_method} can be used to both data groups to estimate the intensity functions, and the result is shown in Fig. \ref{fig:car_inten_esti}.

\begin{figure} [h!]
	\centering
	\subfigure[local roads]{
		\begin{minipage}[b]{0.47\textwidth}
			\centering
			\includegraphics[scale=0.3]{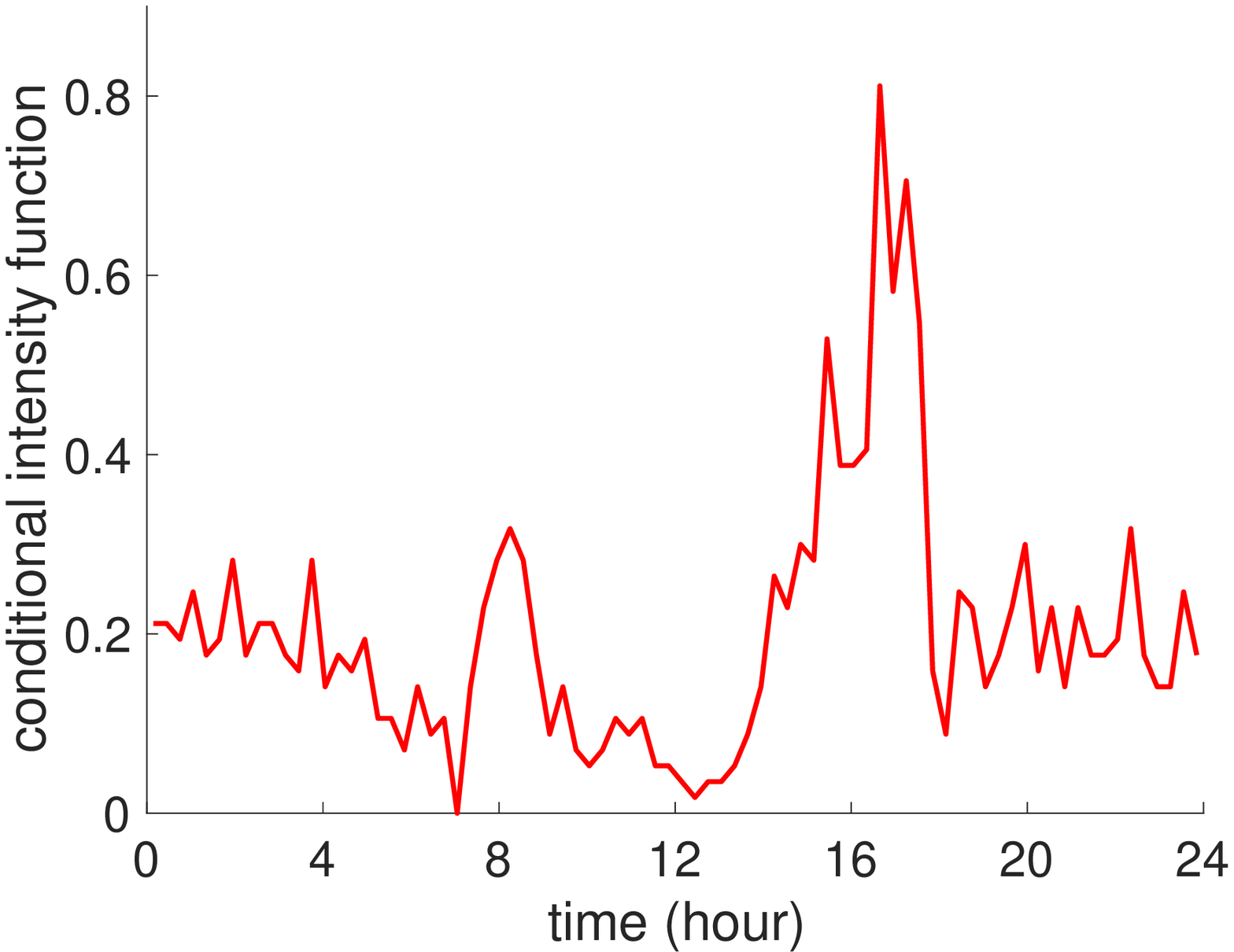}
		\end{minipage}
		\label{fig:local_inten}
	}
    	\subfigure[highway]{
    		\begin{minipage}[b]{0.47\textwidth}
			\centering
   		 	\includegraphics[scale=0.3]{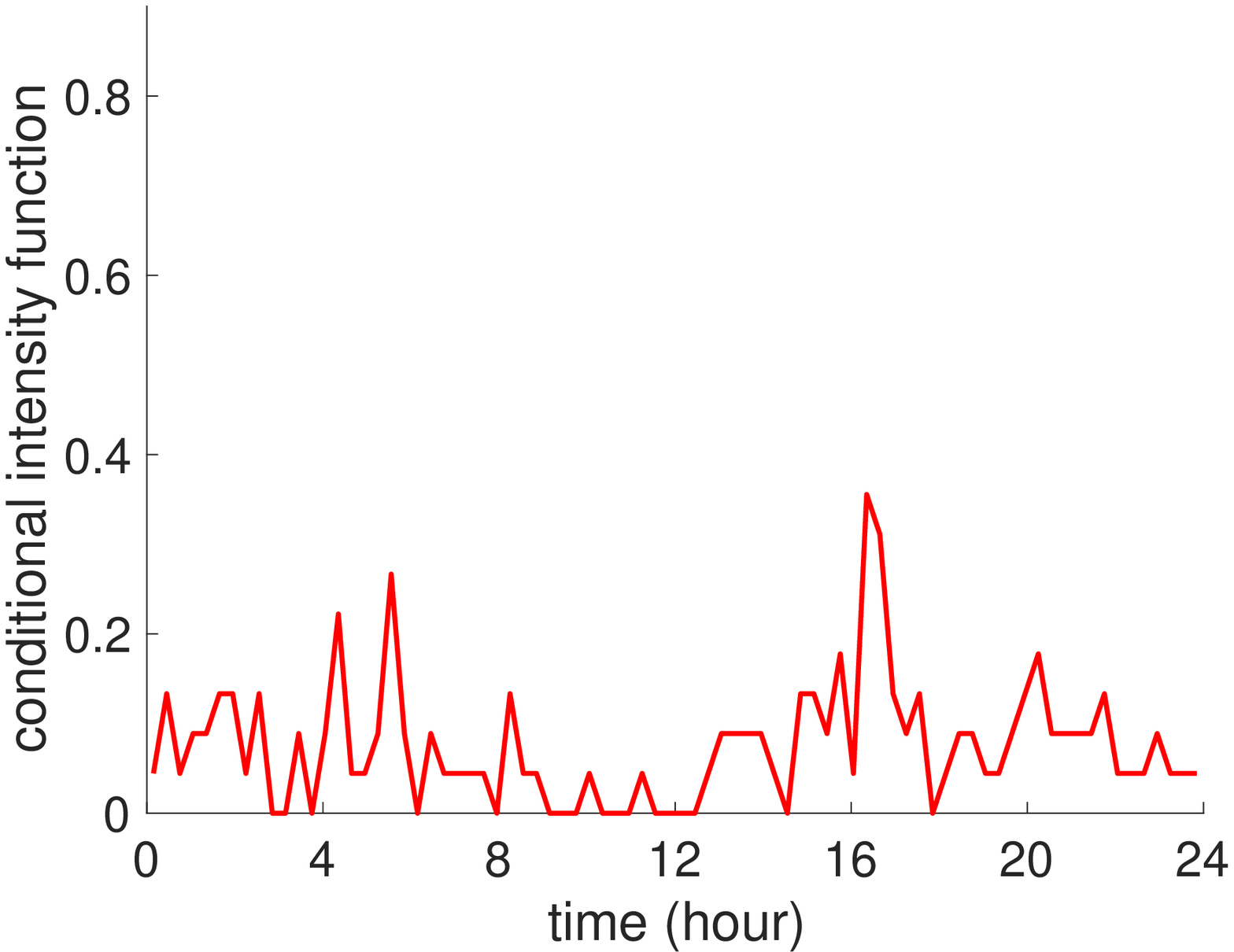}
    		\end{minipage}
		\label{fig:highway_inten}
    	}
	\caption{Estimation result. (a) Estimated intensity function of local roads. (b) Estimated intensity function of highway. }
	\label{fig:car_inten_esti}
\end{figure}

From Fig. \ref{fig:car_inten_esti}(a), we can see there exist two large peak regions for the estimated conditional intensity function for local roads: One is a global maximum region at around 5pm, which is the rush hour in the afternoon. The other is a local maximum region at around 8am, which is the rush hour in the morning.  It is also interesting to notice that the value of intensity function is even higher at night than at noon in local roads, which indicates noon may be the safest time to drive in local roads of Tallahassee. In contrast, the peak of the intensity function in the highway, shown in Fig. \ref{fig:car_inten_esti}(b), does not have any obvious pattern. There is a global maximum region around 4pm (starting time of rush hour in the afternoon), but there is no apparent peak in morning rush hours.  This result clearly shows that accidents in the local roads are mainly affected by the rush hour traffic, whereas those on the highway are not very related to it.  

\begin{figure} [h!]
	\centering
	\subfigure[one dimensional depth for local roads]{
		\begin{minipage}[b]{0.47\textwidth}
			\centering
			\includegraphics[scale=0.3]{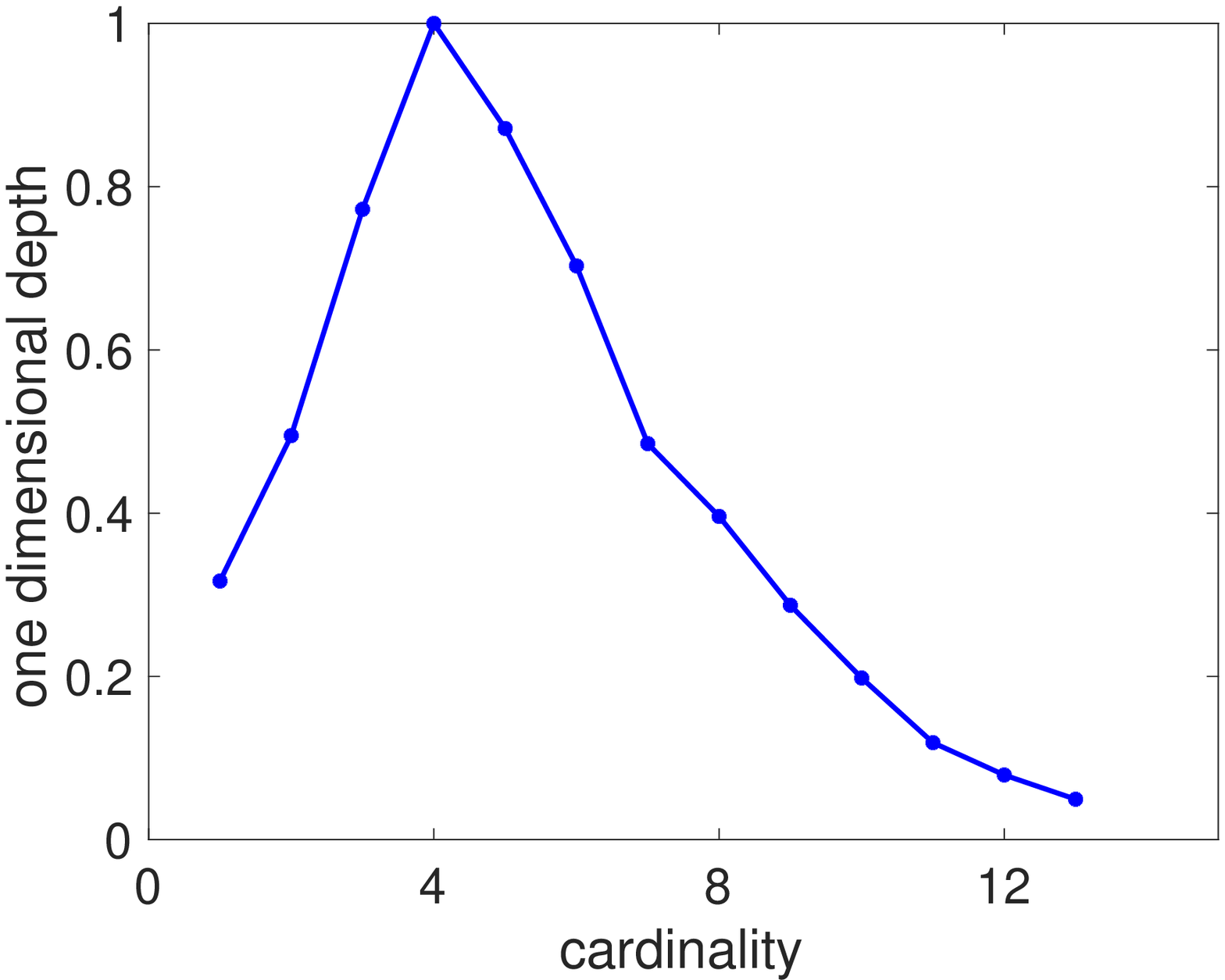}
		\end{minipage}
	}
    	\subfigure[one dimensional depth for highway]{
    		\begin{minipage}[b]{0.47\textwidth}
			\centering
   		 	\includegraphics[scale=0.3]{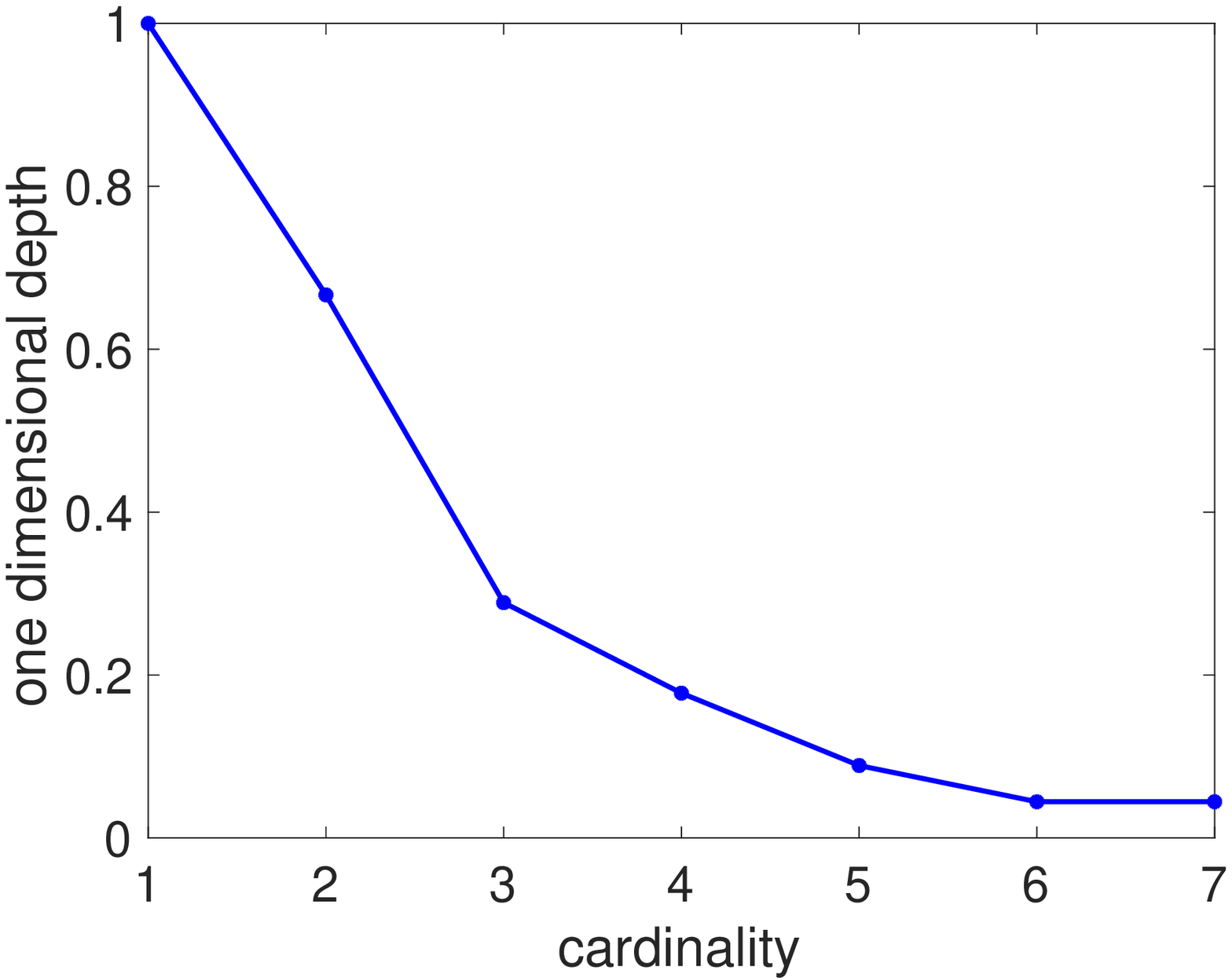}
    		\end{minipage}
		\label{fig:highway_top10}
    	}
	\\
	\subfigure[depth ranking of local roads]{
		\begin{minipage}[b]{0.47\textwidth}
			\centering
			\includegraphics[scale=0.3]{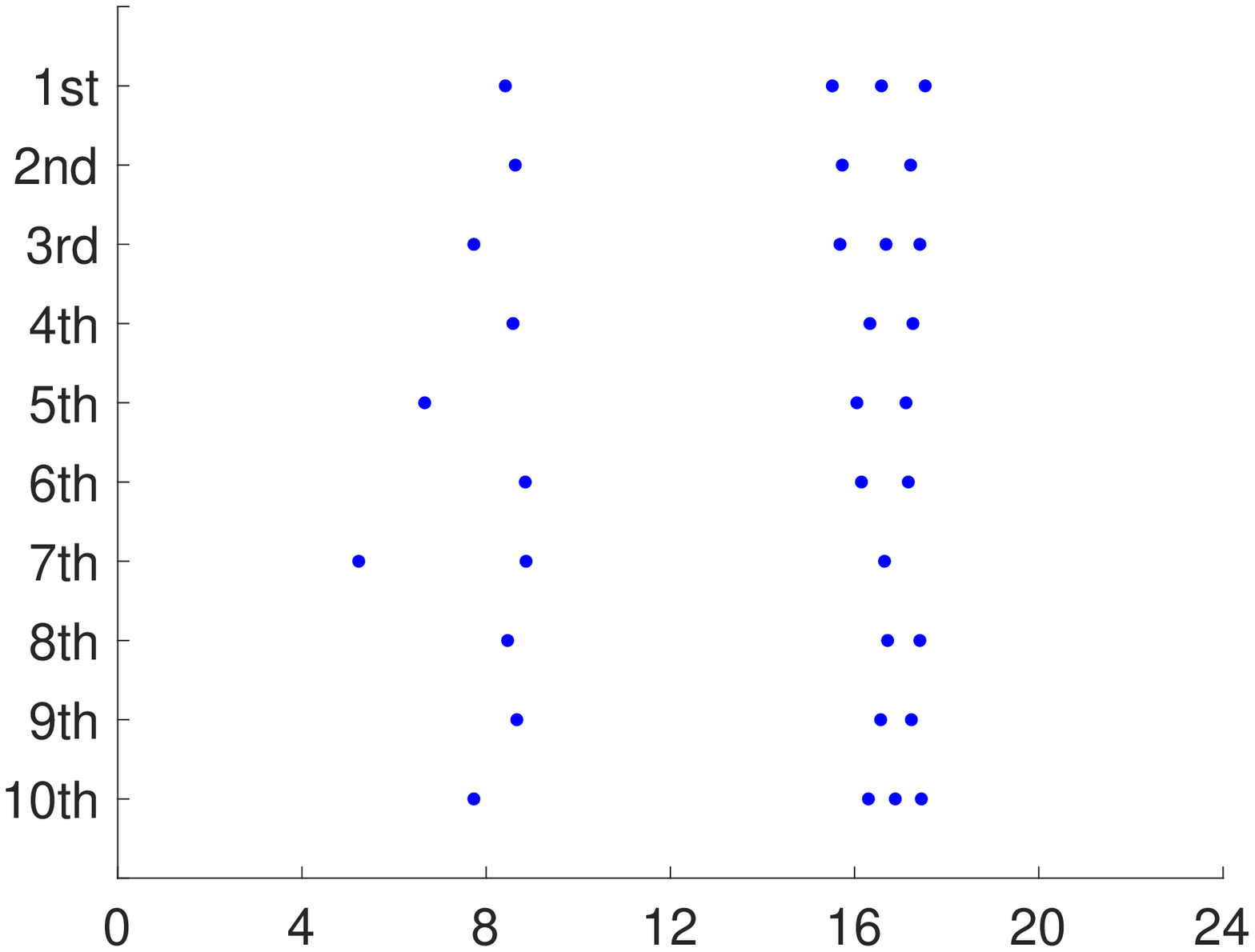}
		\end{minipage}
		\label{fig:local_top10}
	}
    	\subfigure[depth ranking of highway]{
    		\begin{minipage}[b]{0.47\textwidth}
			\centering
   		 	\includegraphics[scale=0.3]{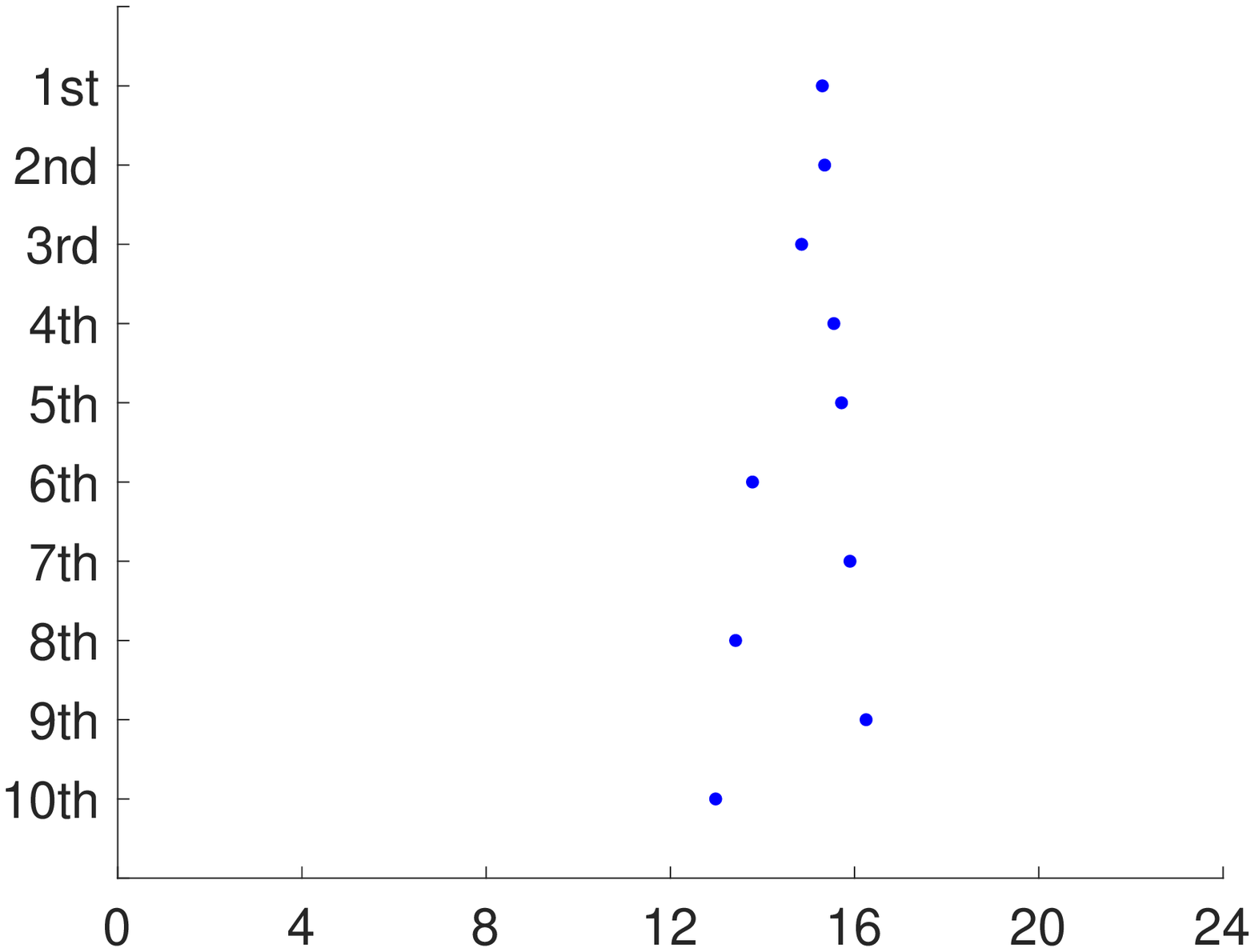}
    		\end{minipage}
		\label{fig:highway_top10}
    	}
	\caption{One dimensional depth and processes with top depth ranks. (a) One dimensional depth in Definition \ref{def:wholedef} for different number of time events for local roads sample. (b) Same as (a) except for highway sample. (c) Each row is a real case of one day car accidents occurred in local roads of Tallahassee with the ranking of depth value shown on vertical axis. The depth is computed by Definition \ref{def:wholedef}, where the ILR depth is used as the conditional depth and the hyper-parameter $r=1$. (d) Same as (c) except for accidents occurred in the highway. }
	\label{fig:car_top10}
\end{figure}

Similar to the previous simulation studies, the realizations with top 10 depth values can be collected for each group (with hyper-parameter $r=1$), and the result is shown in Fig. \ref{fig:car_top10}. Based on the estimated intensity function, the expected numbers of time events in a single realization are 4.25 and 1.38 for local road and highway, respectively, which means the frequency of realizations with 4 and 1 time events may be relatively larger in the local roads sample and highway sample, respectively. From Fig. \ref{fig:car_top10}(a), one can find the one dimensional depth in Definition \ref{def:wholedef} obtains larger values when the cardinality is around 3, 4 and 5 for local roads. This result can also be seen in Fig. \ref{fig:car_top10}(c) where the depths of realizations with 3 or 4 time events are ranked in the top 10. Similarly, from Fig. \ref{fig:car_top10}(b), one can find the one dimensional depth obtains a significant larger value when the cardinality is 1 for highway, which can also be seen in  \ref{fig:car_top10}(d) where  the depths of realizations with 1 time event are ranked in the top 10. Moreover, we can see that the patterns of the top 10 depth-valued realizations in local roads and highway perfectly match the estimated conditional intensity functions, respectively. That is, for local roads, the high rank processes have two or three events during the afternoon rush hour (4-6pm), and one event during the morning rush hour (8-9am).  For highway, there is only one event in top processes at around 4pm.  In summary, the ILR depth method provides a reasonable center-outward rank to summarize the given point process data.


\section{Summary and Future Work} \label{sec:future}
In this paper, we have introduced a new framework to define depth for point process. The definition can be divided into two parts: (1) One dimensional depth that characterizes the depth for point process cardinality. (2) The ILR depth that measures the center-outward ranks for location of time events conditioned on cardinality. In the formal definition, we adopt the approach proposed in \citet{qi2021dirichlet} to define the one dimensional depth and we focus on the derivation of the ILR depth. We at first develop the ILR depth for homogeneous Poisson process.  The approach is based on the ILR transformation of the inter-event time from a simplex space to a Euclidean space, and the depth is defined using the probability distribution of the ILR transformed data. We also conduct a thorough study on the mathematical properties of the ILR depth, in which we introduce a new notion called \textit{simplicial center} to satisfy the desirable properties. In addition, we introduce a simplified version of the ILR depth based on Gaussian approximation and examine its mathematical properties. We then extend the ILR depth to non-Poisson process via the time rescaling method. We examine all mathematical properties and conduct the asymptotic study when the process is a Poisson process. We also compare the ILR depth with two previous mehods in literature, the generalized Mahalanobis depth \citep{liu2017generalized} and the Dirichlet depth \citep{qi2021dirichlet}, and demonstrate the superiority of the new mehod. Finally, we use a real world dataset on car accidents to show the effectiveness of our new framework. 

The ILR depth is a novel and rigorous approach to define depth of point process conditioned on cardinality. The apparent advantage is that the ILR depth is a distribution-based depth that can capture the real data pattern and provide geometric interpretation in Euclidean space. There are clear topics for us to further investigate in the future.  At first, the density function in Eqn. \eqref{eq:density} is under-explored, and more work can be conducted to examine its shape in Euclidean space when the norm of $\bm{u}^*$ gets larger. Moreover, difference between the density in Eqn. \eqref{eq:density} and Gaussian density can be studied via the Kullback-Leibler divergence or Fisher-Rao distance \citep{srivastava2011registration} to obtain a more comprehensive understanding of the density function of $\bm{u}^*$. Finally, other than the time rescaling, we can explore more generalization methods of the ILR depth from homogeneous Poisson process to general point process. 

\newpage

\bibliographystyle{apalike}
\bibliography{references}

\newpage 

\appendix

\section{Derivation about the pdf of ILR transformation of IET} \label{app:der}
In Eqn. \eqref{eq:ilrpdf}, if the cardinality of a given homogeneous Poisson process is $k$, to figure out the kernel of the density function, $\det(J)$ need to be solved in terms of $\bm{u}^*$. For any $j, s=1,\dots,k$, the element of $J$ is: 
\begin{eqnarray*}
\frac{du_{j}}{du_{s}^{*}}=&(T_2-T_1)\cdot\frac{\Psi_{s,j}\exp(\sum_{i=1}^{k}u_{i}^{*}\Psi_{i,j})(\sum_{p=1}^{k+1}\exp(\sum_{i=1}^{k}u_{i}^{*}\Psi_{i,p}))}{(\sum_{p=1}^{k+1}\exp(\sum_{i=1}^{k}u_{i}^{*}\Psi_{i,p}))^{2}} \\
&-(T_2-T_1)\cdot\frac{\exp(\sum_{i=1}^{k}u_{i}^{*}\Psi_{i,j})(\sum_{p=1}^{k+1}\Psi_{s,p}\exp(\sum_{i=1}^{k}u_{i}^{*}\Psi_{i,p}))}{(\sum_{p=1}^{k+1}\exp(\sum_{i=1}^{k}u_{i}^{*}\Psi_{i,p}))^2}
\end{eqnarray*}
Therefore, $\det(J)$ can be expressed as follows. To save space, denote $A_{i,j,p}=\Psi_{i,j}e^{\sum_{i=1}^{k}u_{i}^{*}\Psi_{i,p}}$. 

\begin{eqnarray*}
\det(J)&=&\frac{(T_2-T_1)^k}{(\sum_{p=1}^{k+1}e^{\sum_{i=1}^{k}u_{i}^{*}\Psi_{i,p}})^{2k}}\times \\
    &\;&\begin{vmatrix}
e^{\sum_{i=1}^{k}u_{i}^{*}\Psi_{i,1}}\Big(\sum_{p=1}^{k+1}(A_{1,1,p}-A_{1,p,p})\Big) & \dots & e^{\sum_{i=1}^{k}u_{i}^{*}\Psi_{i,1}}\Big(\sum_{p=1}^{k+1}(A_{k,1,p}-A_{k,p,p})\Big)\\
e^{\sum_{i=1}^{k}u_{i}^{*}\Psi_{i,2}}\Big(\sum_{p=1}^{k+1}(A_{1,2,p}-A_{1,p,p})\Big)  & \dots & e^{\sum_{i=1}^{k}u_{i}^{*}\Psi_{i,2}}\Big(\sum_{p=1}^{k+1}(A_{k,2,p}-A_{k,p,p})\Big)\\ \vdots & \ddots & \vdots\\
e^{\sum_{i=1}^{k}u_{i}^{*}\Psi_{i,k}}\Big(\sum_{p=1}^{k+1}(A_{1,k,p}-A_{1,p,p})\Big)  & \dots & e^{\sum_{i=1}^{k}u_{i}^{*}\Psi_{i,k}}\Big(\sum_{p=1}^{k+1}(A_{k,k,p}-A_{k,p,p})\Big)

\end{vmatrix} \\
&=&\frac{(T_2-T_1)^k\prod_{p=1}^{k}e^{\sum_{i=1}^{k}u_{i}^{*}\Psi_{i,p}}}{(\sum_{p=1}^{k+1}e^{\sum_{i=1}^{k}u_{i}^{*}\Psi_{i,p}})^{2k}}
\begin{vmatrix}
\sum_{p=1}^{k+1}(A_{1,1,p}-A_{1,p,p}) & \dots &
\sum_{p=1}^{k+1}(A_{k,1,p}-A_{k,p,p})\\
\sum_{p=1}^{k+1}(A_{1,2,p}-A_{1,p,p}) & \dots &\sum_{p=1}^{k+1}(A_{k,2,p}-A_{k,p,p}) \\ \vdots & \ddots & \vdots\\
\sum_{p=1}^{k+1}(A_{1,k,p}-A_{1,p,p}) & \dots &
\sum_{p=1}^{k+1}(A_{k,k,p}-A_{k,p,p})
\end{vmatrix} \\
&=&\frac{(T_2-T_1)^k\prod_{p=1}^{k}e^{\sum_{i=1}^{k}u_{i}^{*}\Psi_{i,p}}}{(\sum_{p=1}^{k+1}e^{\sum_{i=1}^{k}u_{i}^{*}\Psi_{i,p}})^{2k}}
\begin{vmatrix}
\sum_{p=1}^{k+1}(A_{1,1,p}-A_{1,p,p}) & \dots &
\sum_{p=1}^{k+1}(A_{k,1,p}-A_{k,p,p})\\
\sum_{p=1}^{k+1}(A_{1,2,p}-A_{1,1,p}) & \dots &
\sum_{p=1}^{k+1}(A_{k,2,p}-A_{k,1,p})\\ \vdots & \ddots & \vdots\\
\sum_{p=1}^{k+1}(A_{1,k,p}-A_{1,1,p}) & \dots &
\sum_{p=1}^{k+1}(A_{k,k,p}-A_{k,1,p})
\end{vmatrix}\\
&=&\frac{(T_2-T_1)^k\prod_{p=1}^{k}e^{\sum_{i=1}^{k}u_{i}^{*}\Psi_{i,p}}}{(\sum_{p=1}^{k+1}e^{\sum_{i=1}^{k}u_{i}^{*}\Psi_{i,p}})^{k+1}}
\begin{vmatrix}
\sum_{p=1}^{k+1}(A_{1,1,p}-A_{1,p,p}) & \dots &
\sum_{p=1}^{k+1}(A_{k,1,p}-A_{k,p,p})\\
\Psi_{1,2}-\Psi_{1,1} & \dots &
\Psi_{k,2}-\Psi_{k,1}\\ \vdots & \ddots & \vdots\\
\Psi_{1,k}-\Psi_{1,1} & \dots &
\Psi_{k,k}-\Psi_{k,1}
\end{vmatrix} \\
&=&\frac{(T_2-T_1)^k\prod_{p=1}^{k}e^{\sum_{i=1}^{k}u_{i}^{*}\Psi_{i,p}}}{(\sum_{p=1}^{k+1}e^{\sum_{i=1}^{k}u_{i}^{*}\Psi_{i,p}})^{k+1}}\Bigg(\sum_{p=1}^{k+1}
\begin{vmatrix}
A_{1,1,p}-A_{1,p,p}& \dots &
A_{k,1,p}-A_{k,p,p}\\
\Psi_{1,2}-\Psi_{1,1} & \dots &
\Psi_{k,2}-\Psi_{k,1}\\ \vdots & \ddots & \vdots\\
\Psi_{1,k}-\Psi_{1,1} & \dots &
\Psi_{k,k}-\Psi_{k,1}
\end{vmatrix}\Bigg) \\
&=&\frac{(T_2-T_1)^k\prod_{p=1}^{k}e^{\sum_{i=1}^{k}u_{i}^{*}\Psi_{i,p}}}{(\sum_{p=1}^{k+1}e^{\sum_{i=1}^{k}u_{i}^{*}\Psi_{i,p}})^{k+1}}\Bigg(\sum_{p=1}^{k+1}e^{\sum_{i=1}^{k}u_{i}^{*}\Psi_{i,p}}
\begin{vmatrix}
\Psi_{1,1}-\Psi_{1,p} & \dots & 
\Psi_{k,1}-\Psi_{k,p}\\
\Psi_{1,2}-\Psi_{1,1} & \dots &
\Psi_{k,2}-\Psi_{k,1}\\ \vdots & \ddots & \vdots\\
\Psi_{1,k}-\Psi_{1,1} & \dots &
\Psi_{k,k}-\Psi_{k,1}
\end{vmatrix}\Bigg) \\
\end{eqnarray*}
Next, denote $D_{p}=\begin{vmatrix}
\Psi_{1,1}-\Psi_{1,p} & \dots & 
\Psi_{k,1}-\Psi_{k,p}\\
\Psi_{1,2}-\Psi_{1,1} & \dots &
\Psi_{k,2}-\Psi_{k,1}\\ \vdots & \ddots & \vdots\\
\Psi_{1,k}-\Psi_{1,1} & \dots &
\Psi_{k,k}-\Psi_{k,1}
\end{vmatrix}$ for $p=1,\dots,k$. If $p=1$, 
\begin{eqnarray*}
    D_{1}=\begin{vmatrix}
0 & \dots & 
0\\
\Psi_{1,2}-\Psi_{1,1} & \dots &
\Psi_{k,2}-\Psi_{k,1}\\ \vdots & \ddots & \vdots\\
\Psi_{1,k}-\Psi_{1,1} & \dots &
\Psi_{k,k}-\Psi_{k,1}
\end{vmatrix}=0
\end{eqnarray*}
If $p=2,\dots,k$, when calculating $D_{p}$, the first row can be added with the $p$-th row, then the first row will become $0$ and the determinant is unchanged. Thus, if $p=2,\dots,k$, 
\begin{eqnarray*}
D_{p}=\begin{vmatrix}
0 & \dots & 
0\\
\Psi_{1,2}-\Psi_{1,1} & \dots &
\Psi_{k,2}-\Psi_{k,1}\\ \vdots & \ddots & \vdots\\
\Psi_{1,k}-\Psi_{1,1} & \dots &
\Psi_{k,k}-\Psi_{k,1}
\end{vmatrix}=0
\end{eqnarray*}
Therefore, apply this result to the formula of $\det(J)$, 
\begin{eqnarray*}
\det(J)&=&\frac{(T_2-T_1)^k\prod_{p=1}^{k}e^{\sum_{i=1}^{k}u_{i}^{*}\Psi_{i,p}}}{(\sum_{p=1}^{k+1}e^{\sum_{i=1}^{k}u_{i}^{*}\Psi_{i,p}})^{k+1}}\Bigg(e^{\sum_{i=1}^{k}u_{i}^{*}\Psi_{i,k+1}}
\begin{vmatrix}
\Psi_{1,1}-\Psi_{1,k+1} & \dots & 
\Psi_{k,1}-\Psi_{k,k+1}\\
\Psi_{1,2}-\Psi_{1,1} & \dots &
\Psi_{k,2}-\Psi_{k,1}\\ \vdots & \ddots & \vdots\\
\Psi_{1,k}-\Psi_{1,1} & \dots &
\Psi_{k,k}-\Psi_{k,1}
\end{vmatrix}\Bigg) \\
&=&\frac{(T_2-T_1)^k\prod_{p=1}^{k+1}e^{\sum_{i=1}^{k}u_{i}^{*}\Psi_{i,p}}}{(\sum_{p=1}^{k+1}e^{\sum_{i=1}^{k}u_{i}^{*}\Psi_{i,p}})^{k+1}}
\begin{vmatrix}
\Psi_{1,1}-\Psi_{1,k+1} & \dots & 
\Psi_{k,1}-\Psi_{k,k+1}\\
\Psi_{1,2}-\Psi_{1,1} & \dots &
\Psi_{k,2}-\Psi_{k,1}\\ \vdots & \ddots & \vdots\\
\Psi_{1,k}-\Psi_{1,1} & \dots &
\Psi_{k,k}-\Psi_{k,1}
\end{vmatrix} \\
&=&\frac{(T_2-T_1)^ke^{\sum_{i=1}^{k}u_{i}^{*}(\sum_{p=1}^{k+1}\Psi_{i,p})}}{(\sum_{p=1}^{k+1}e^{\sum_{i=1}^{k}u_{i}^{*}\Psi_{i,p}})^{k+1}} 
\begin{vmatrix}
\Psi_{1,1}-\Psi_{1,k+1} & \dots & 
\Psi_{k,1}-\Psi_{k,k+1}\\
\Psi_{1,2}-\Psi_{1,1} & \dots &
\Psi_{k,2}-\Psi_{k,1}\\ \vdots & \ddots & \vdots\\
\Psi_{1,k}-\Psi_{1,1} & \dots &
\Psi_{k,k}-\Psi_{k,1}
\end{vmatrix} \\
&=&\frac{(T_2-T_1)^k}{(\sum_{p=1}^{k+1}e^{\sum_{i=1}^{k}u_{i}^{*}\Psi_{i,p}})^{k+1}}
\begin{vmatrix}
\Psi_{1,1}-\Psi_{1,k+1} & \dots & 
\Psi_{k,1}-\Psi_{k,k+1}\\
\Psi_{1,2}-\Psi_{1,1} & \dots &
\Psi_{k,2}-\Psi_{k,1}\\ \vdots & \ddots & \vdots\\
\Psi_{1,k}-\Psi_{1,1} & \dots &
\Psi_{k,k}-\Psi_{k,1}
\end{vmatrix}
\end{eqnarray*}
Therefore, the pdf of the ILR transformation for k-event HPP is: 
\begin{eqnarray*}
    f_{\bm{u^{*}}}(u_{1}^{*},\dots u_{k}^{*})&=&\frac{k!}{(\sum_{p=1}^{k+1}e^{\sum_{i=1}^{k}u_{i}^{*}\Psi_{i,p}})^{k+1}} 
    \begin{vmatrix}
\Psi_{1,1}-\Psi_{1,k+1} & \dots & 
\Psi_{k,1}-\Psi_{k,k+1}\\
\Psi_{1,2}-\Psi_{1,1} & \dots &
\Psi_{k,2}-\Psi_{k,1}\\ \vdots & \ddots & \vdots\\
\Psi_{1,k}-\Psi_{1,1} & \dots &
\Psi_{k,k}-\Psi_{k,1}
\end{vmatrix}
    \\
    &=& \frac{c}{(\sum_{p=1}^{k+1}e^{\sum_{i=1}^{k}u_{i}^{*}\Psi_{i,p}})^{k+1}}
\end{eqnarray*}
where $c$ is the positive constant that guarantees the integral of the density as $1$.

\section{Proof of log-concavity of the density in Eqn. \eqref{eq:density}} \label{app:logcon}
By the property of probability density function, if the density is log-concave, then it is a uni-modal shape curve. Therefore, the remaining task is to prove the density function in Eqn. \eqref{eq:density} is log-concave for any positive integer $k$.  

To prove $\log(f_{\bm{u^{*}}}(u_{1}^{*},\dots u_{k}^{*}))$ is a concave function, the Hessian matrix need to be found in closed form. In this way, the first and second order partial derivative of $\log(f_{\bm{u^{*}}}(u_{1}^{*},\dots u_{k}^{*}))$ can be computed as follows: 
\begin{eqnarray*}
\frac{\partial}{\partial u_{s}^{*}}\log(f_{\bm{u^{*}}}) &=&-(k+1)\cdot \frac{\sum_{p=1}^{k+1}\Psi_{s,p}e^{\sum_{i=1}^{k}u_{i}^{*}\Psi_{i,p}}}{\sum_{p=1}^{k+1}e^{\sum_{i=1}^{k}u_{i}^{*}\Psi_{i,p}}} \\
\frac{\partial^2}{\partial {u_{s}^*}^2}\log(f_{\bm{u^{*}}}) &=& -(k+1)\cdot \frac{\sum_{p=1}^{k}\sum_{q>p}(\Psi_{s,p}-\Psi_{s,q})^2e^{\sum_{i=1}^{k}u_{i}^{*}\Psi_{i,p}}e^{\sum_{i=1}^{k}u_{i}^{*}\Psi_{i,q}}}{\big(\sum_{p=1}^{k+1}e^{\sum_{i=1}^{k}u_{i}^{*}\Psi_{i,p}}\big)^2} \\
\frac{\partial^2}{\partial u_{s}^{*}du_{t}^{*}}\log(f_{\bm{u^{*}}})&=&-(k+1)\cdot \frac{\sum_{p=1}^{k}\sum_{q>p}(\Psi_{s,p}-\Psi_{s,q})(\Psi_{t,p}-\Psi_{t,q})e^{\sum_{i=1}^{k}u_{i}^{*}\Psi_{i,p}}e^{\sum_{i=1}^{k}u_{i}^{*}\Psi_{i,q}}}{\big(\sum_{p=1}^{k+1}e^{\sum_{i=1}^{k}u_{i}^{*}\Psi_{i,p}}\big)^2}
\end{eqnarray*}
Next, denote the Hessian matrix $H$ as: 
\begin{eqnarray*}
H=\begin{pmatrix}
\frac{\partial^2}{\partial {u_{1}^*}^2}\log(f_{\bm{u^{*}}}) & \frac{\partial^2}{\partial u_{1}^{*}\partial u_{2}^{*}}\log(f_{\bm{u^{*}}}) & \dots &\frac{\partial^2}{\partial u_{1}^{*}\partial u_{k}^{*}}\log(f_{\bm{u^{*}}}) \\
\frac{\partial^2}{\partial u_{2}^{*}\partial u_{1}^{*}}\log(f_{\bm{u^{*}}}) & \frac{\partial^2}{\partial {u_{2}^*}^2}\log(f_{\bm{u^{*}}}) & \dots & \frac{\partial^2}{\partial u_{2}^{*}\partial u_{k}^{*}}\log(f_{\bm{u^{*}}}) \\ \vdots & \vdots & \ddots & \vdots \\
\frac{\partial^2}{\partial u_{k}^{*}\partial u_{1}^{*}}\log(f_{\bm{u^{*}}}) & \frac{\partial^2}{\partial {u_{k}^*}\partial u_{2}^{*}}\log(f_{\bm{u^{*}}}) & \dots & \frac{\partial^2}{\partial {u_{k}^*}^2}\log(f_{\bm{u^{*}}})    
\end{pmatrix}
\end{eqnarray*}
Due to the property of concavity, a multivariate function is concave if and only if its Hessian matrix is negative definite. Denote $A_{s,t}=\sqrt{e^{\sum_{i=1}^{k}u_{i}^{*}\Psi_{i,s}}e^{\sum_{i=1}^{k}u_{i}^{*}\Psi_{i,t}}}$ and a $(k\times \frac{k(k+1)}{2})$ matrix $B$ as: 
\begin{eqnarray*}
    B=
    \begin{pmatrix}
    (\Psi_{1,1}-\Psi_{1,2})A_{1,2} & (\Psi_{1,1}-\Psi_{1,3})A_{1,3}
     & \dots & (\Psi_{1,k}-\Psi_{1,k+1})A_{k,k+1} \\ 
    (\Psi_{2,1}-\Psi_{2,2})A_{1,2}  &   (\Psi_{2,1}-\Psi_{2,3})A_{1,3} 
    &  \dots & (\Psi_{2,k}-\Psi_{2,k+1})A_{k,k+1} \\ \vdots &  \vdots & \ddots & \vdots \\
    (\Psi_{k,1}-\Psi_{k,2})A_{1,2} & (\Psi_{k,1}-\Psi_{k,3})A_{1,3}
    & \dots & (\Psi_{k,k}-\Psi_{k,k+1})A_{k,k+1}
    \end{pmatrix}
\end{eqnarray*}
After some algebra, $H$ can be expressed as: 
\begin{eqnarray*}
H=-\frac{k+1}{\big(\sum_{p=1}^{k+1}e^{\sum_{i=1}^{k}u_{i}^{*}\Psi_{i,p}}\big)^2}\cdot BB^{T}
\end{eqnarray*}
Since $BB^{T}$ is a positive definite matrix, $H$ is negative definite. Thus, the pdf in Eqn. \eqref{eq:density} is log-concave and uni-modal. Finally, when $\bm{u}^*=(0,0,\dots,0)^T$, take into account that the sum of each row of $\Psi$ is $0$, then, the first partial derivative $\frac{\partial}{\partial u_{s}^{*}}\log(f_{\bm{u^{*}}})$ equals to $0$ for each $s=1,2,\dots,k$. Therefore, the origin in Euclidean space $\mathbb{R}^k$ is the global maximum point of the density in Eqn. \eqref{eq:density}. 

\section{Proof of contour} \label{app:contour}
For the density in Eqn. \eqref{eq:density}, define the contour as $\sum_{p=1}^{k+1}e^{\sum_{i=1}^{k}u_{i}^{*}\Psi_{i,p}}=c$, where $c$ is a positive constant. Let $\bm{u}^*=(u_1^*,u_2^*,\dots,u_k^*)^T$ as a $k$ dimensional column vector, then the contour becomes $\sum_{p=1}^{k+1}e^{{\bm{u}^*}^T\Psi_{:,p}}=c$. If $\lVert\bm{u}^*\rVert$ is small, consider Taylor expansion up to the second order term, the contour has the approximated form: 
\begin{eqnarray*}
 & & \sum_{p=1}^{k+1}\big(1+{\bm{u}^*}^T\Psi_{:,p}+\frac{1}{2}({\bm{u}^*}^T\Psi_{:,p})^2\big) = c \\
\text{(Property of }\Psi)&\Longrightarrow&  \sum_{p=1}^{k+1}\big({\bm{u}^*}^T\Psi_{:,p}\big)^2 = c_1 \ \text{(another constant)} \\
&\Longrightarrow&  \sum_{p=1}^{k+1}\Big({\bm{u}^*}^T\Psi_{:,p}\Psi_{:,p}^T\bm{u}^*\Big) = c_1 \\
&\Longrightarrow&  {\bm{u}^*}^T\Big(\sum_{p=1}^{k+1}\Psi_{:,p}\Psi_{:,p}^T\Big)\bm{u}^*= c_1 \\
\text{(Property of }\Psi)&\Longrightarrow& {\bm{u}^*}^T\bm{u}^*=c_1
\end{eqnarray*}
Therefore, the contour becomes the formula of hyper-sphere in Euclidean space.

\section{Proof of normal approximation result} \label{app:gau}
According to Proposition \ref{prop:logconcativity}, the global maximum point of the density in Eqn. \eqref{eq:density} is the origin. Thus, with Taylor series expansion, $\log(f_{\bm{u}^*})$ can be rewritten as: 
\begin{eqnarray}
\log(f_{\bm{u}^*}(u_1^*,\dots,u_k^*))\approx\log(f_{\bm{u}^*}(0,0,\dots,0))+D^T\bm{u}^*-\frac{1}{2}{\bm{u}^*}^TH\bm{u}^* \label{eq:tay}
\end{eqnarray}
where $D$ is the first derivative of $\log(f_{\bm{u}^*})$ evaluated at origin and $H$ is the negative of Hessian matrix of $\log(f_{\bm{u}^*})$ evaluated at origin. Considering that origin is the global maximum point, $D$ is a column vector with all entries $0$ and the second term at the right hand side of Eqn. \eqref{eq:tay} can be omitted. The remaining task is to figure out the closed form of $H$. According to Appendix \ref{app:logcon}, given the properties of the matrix $\Psi$ in ILR transformation such that $\sum_{p=1}^{k}\Psi_{s,p}=0$, $\sum_{p=1}^{k}\Psi_{s,p}^2=1$ and $\sum_{p=1}^{k}\Psi_{s,p}\Psi_{t,p}=0$ for each $s,t=1,\dots,k$, the following result can be easily obtained for each $s,t=1,\dots,k$. 
\begin{eqnarray*}
-\frac{\partial^2}{\partial {u_{s}^*}^2}\log(f_{\bm{u^{*}}})(0,0,\dots,0)&=&(k+1)\frac{(k+1)\sum_{p=1}^{k+1}\Psi_{s,p}^2-(\sum_{p=1}^{k+1}\Psi_{s,p})^2}{(k+1)^2} \\
&=& 1 \\
-\frac{\partial^2}{\partial u_{s}^{*}du_{t}^{*}}\log(f_{\bm{u^{*}}})(0,0,\dots,0) &=& (k+1)\frac{(k+1)\sum_{p=1}^{k+1}\Psi_{s,p}\Psi_{t,p}-(\sum_{p=1}^{k+1}\Psi_{s,p})(\sum_{p=1}^{k+1}\Psi_{t,p})}{(k+1)^2} \\
&=& 0
\end{eqnarray*}
Therefore, $H$ is a $k\times k$ identity matrix. Finally, take exponential on both side of Eqn. \eqref{eq:tay}, the result is: 
\begin{eqnarray*}
f_{\bm{u}^*}(u_1^*,\dots,u_k^*) \approx C\cdot e^{-\frac{1}{2}{\bm{u}^*}^TH\bm{u}^*}
\end{eqnarray*}
where $C$ is a positive constant and $e^{-\frac{1}{2}{\bm{u}^*}^TH\bm{u}^*}$ is the kernel of a multivariate normal distribution with mean as the origin and covariance matrix as the inverse of $H$, which is the $k\times k$ identity matrix.

\section{Proof of mathematical properties of ILR depth} \label{app:mat}
\begin{enumerate}
\item Based on Definition \ref{def:formal}, this part is trivial. 
\item Based on Proposition \ref{prop:center}, this part is trivial. 
\item Denote the center of ILR depth as $\bm{u}_c^*$ in $\mathbb{R}^k$. Based on Proposition \ref{prop:logconcativity}, the density of $\bm{u}^*$ is log-concave in $\mathbb{R}^k$. Since the contour of ILR depth takes the same shape as the density of $\bm{u}^*$, denote $D_c(\bm{u}^*)$ as the ILR depth function for any $\bm{u}^*\in\mathbb{R}^k$, it is easy to verify that $D_c(\bm{u}^*)\leq D_c(\bm{u}_c^*+\alpha(\bm{u}^*-\bm{u}_c^*))$ for any $\bm{u}^*\in\mathbb{R}^k$ and $\alpha\in [0,1]$. 
\item According to Eqn. \eqref{eq:ilrdepth}, ILR depth remains invariant to scaling and translation. Thus, this part is verified. 
\end{enumerate}

\section{Proof of mathematical properties of the simplified version of ILR depth in Definition \ref{def:opt}} \label{app:gaumat}
\begin{enumerate}
\item Based on the definition, the depth value is continuous if $\bm{s}\notin\mathbb{B}_k$. Thus, the remaining task is to prove the depth function is continuous at boundary set $\mathbb{B}_k$, which is equivalent to prove the depth value approaches $0$ if the point process approaches $\mathbb{B}_k$. For a given $\bm{s}=(s_1,s_2,\dots,s_k)\in\mathbb{S}_k$, if there exists at least one $t=1,\dots,k+1$ such that $s_t-s_{t-1}\to 0$, one can find at least one $p=1,\dots,k+1$ such that $s_p-s_{p-1}\neq 0$ and is a finite positive number. In this case, the depth can be rewritten as: 
\begin{eqnarray*}
D_{c}(\bm{s};P_{S||S|=k}) &=&  \frac{1}{1+\frac{1}{2}\sum_{i=1}^{k+1}\Big(\log\frac{s_i-s_{i-1}}{(\prod_{j=1}^{k+1}(s_j-s_{j-1}))^{\frac{1}{k+1}}}\Big)^2} \\
& = & \frac{1}{1+\frac{1}{2}\sum_{i=1}^{k+1}\Big(\log\big((s_i-s_{i-1})^{\frac{k}{k+1}}\cdot(\prod_{j\neq i}(s_j-s_{j-1}))^{-\frac{1}{k+1}}\big)\Big)^2}
\end{eqnarray*}
According to the notation above, $(s_p-s_{p-1})^{\frac{k}{k+1}}\cdot(\prod_{j\neq p}(s_j-s_{j-1}))^{-\frac{1}{k+1}}\to\infty$ since $\prod_{j\neq p}(s_j-s_{j-1})\to 0$. Consider the fact that the denominator part is the sum of $1$ and $k+1$ positive terms, if one term approaches infinity, the denominator will approach infinity and this part is verified. 
\item Consider the contour of standard multivariate Gaussian density function, origin is the center based on all classical symmetries \citep{zuo2000general}. According to Proposition \ref{prop:center}, $\bm{s}=(T_1+\frac{T_2-T_1}{k+1},T_1+\frac{2(T_2-T_1)}{k+1},\dots,T_1+\frac{k(T_2-T_1)}{k+1})$ is the center. 
\item This part is trivial due to the shape of contour of normal distribution and Appendix \ref{app:mat}. 
\item This part is trivial. 
\end{enumerate}

\section{Proof of mathematical properties for IPP} \label{app:matipp}
\begin{enumerate}
\item Based on \citet{qi2021dirichlet}, $\Lambda_S(\cdot)$ is a continuous function for any general point process. Thus, the continuity will hold automatically. What is more, if there exists $i=1,2,\dots,k+1$ such that $u_i\to 0$, then, no matter whether the conditional intensity function $\lambda(\cdot)$ is deterministic or not,  $u_i'=\Lambda_S(s_i)-\Lambda_S(s_{i-1})=\int_{T_1}^{s_i}\lambda(t|H_t)dt-\int_{T_1}^{s_{i-1}}\lambda(t|H_t)dt=\int_{s_{i-1}}^{s_i}\lambda(t|H_t)dt \to 0$. Therefore, from the proof of Proposition \ref{prop:ilrdepmat}, the depth value will vanish at boundary. 
\item \label{p2} If the process is inhomogeneous Poisson process, the conditional intensity function can be considered as a positive deterministic function $\lambda(\cdot)$. Based on the definition of $\Lambda_S(\cdot)$, $\Lambda_S(\cdot)$ is a strict increasing function, and therefore is bijective function. Then, based on the proof of Proposition \ref{prop:ilrdepmat}, it is easy to verify this property. 
\item This part is similar to property \ref{p2} and can be omitted. 
\item This part is trivial. 
\end{enumerate}

\section{Proof of uniform convergence rate of Lemma \ref{lemma:uniform}} \label{app:uniformproof}
First, rewrite $\hat{\lambda}(t)-\lambda(t)=\big(\hat{\lambda}(t)-\mathbb{E}[\hat{\lambda}(t)]\big)+\big(\mathbb{E}[\hat{\lambda}(t)]-\lambda(t)\big)$. We will consider the second part first, denote $N_i(t)$ as the number of events occurred until time $t$ in realization $i$, then $\mathbb{E}[\hat{\lambda}(t)]-\lambda(t)$ can be rewritten as follows:  
\begin{eqnarray*}
\mathbb{E}[\hat{\lambda}(t)]-\lambda(t) &=& \frac{M}{n(T_2-T_1)}\sum_{i=1}^{n}\mathbb{E}\Big[\sum_{r=1}^{n_i}I(s_{ir}\in B_j)\Big]-\lambda(t) \\
&=& \frac{M}{n(T_2-T_1)}\sum_{i=1}^{n}\mathbb{E}\bigg[N_i\bigg(\frac{j(T_2-T_1)}{M}\bigg)-N_i\bigg(\frac{(j-1)(T_2-T_1)}{M}\bigg)\bigg]-\lambda(t) \\
&=& \frac{1}{n}\sum_{i=1}^{n}\frac{\mathbb{E}[N_i(\frac{j(T_2-T_1)}{M})-N_i(\frac{(j-1)(T_2-T_1)}{M})]}{\frac{j(T_2-T_1)}{M}-\frac{(j-1)(T_2-T_1)}{M}}-\lambda(t)
\end{eqnarray*}
From Mean value theorem and the definition of $\lambda(t)$, since all of the $n$ realizations have the same intensity function, thus, there exists $t^*\in \big[\frac{(j-1)(T_2-T_1)}{M},\frac{j(T_2-T_1)}{M}\big]$ such that for each $i=1,2,\dots,n$, $\lambda(t^*)=\frac{\mathbb{E}[N_i(\frac{j(T_2-T_1)}{M})-N_i(\frac{(j-1)(T_2-T_1)}{M})]}{\frac{j(T_2-T_1)}{M}-\frac{(j-1)(T_2-T_1)}{M}}$. Then, $\mathbb{E}[\hat{\lambda}(t)]-\lambda(t)=\frac{1}{n}\sum_{i=1}^{n}\lambda(t^*)-\lambda(t)=\lambda(t^*)-\lambda(t)$. Since $\lambda(t)$ is $L$-Lipschitz continuous, $\mathbb{E}[\hat{\lambda}(t)]-\lambda(t)=\lambda(t^*)-\lambda(t)\leq |\lambda(t^*)-\lambda(t)|\leq L|t^*-t|\leq \frac{L(T_2-T_1)}{M}$. This result can be generalized to every point $t$, then $\sup_t|\mathbb{E}[\hat{\lambda}(t)]-\lambda(t)|=O\Big(\frac{1}{M}\Big)$. 

Next, consider the variance of $\hat{\lambda}(t)$, which will be used in later proof. For any $t\in B_j$, $j=1,2,\dots,M$, $Var[\hat{\lambda}(t)]=\frac{M^2}{n^2(T_2-T_1)^2}\sum_{i=1}^{n}Var[\sum_{r=1}^{n_i}I(s_{ir}\in B_j)]$. Since $\sum_{r=1}^{n_i}I(s_{ir}\in B_j)$ denotes the total number of events in $B_j$ for the i-th realization, then $\sum_{r=1}^{n_i}I(s_{ir}\in B_j)\sim Poisson(\int_{\frac{(j-1)(T_2-T_1)}{M}}^{\frac{j(T_2-T_1)}{M}}\lambda(t)dt)$, Thus, 
\begin{eqnarray*}
Var[\hat{\lambda}(t)] &=& \frac{M^2}{n^2(T_2-T_1)^2}\sum_{i=1}^{n}\int_{\frac{(j-1)(T_2-T_1)}{M}}^{\frac{j(T_2-T_1)}{M}}\lambda(t)dt \\
&=& \frac{M^2}{n^2(T_2-T_1)^2}\sum_{i=1}^{n}\frac{\lambda(t^{**})}{M} \\
&\leq& R\cdot\frac{M}{n(T_2-T_1)^2}
\end{eqnarray*}
where $t^{**}$ is a point within $[\frac{(j-1)(T_2-T_1)}{M},\frac{j(T_2-T_1)}{M}]$. 

The remaining part is to focus on $\hat{\lambda}(t)-\mathbb{E}[\hat{\lambda}(t)]$. Denote $A_j=\frac{1}{n(T_2-T_1)}\sum_{i=1}^{n}\sum_{r=1}^{n_i}I(s_{ir}\in B_j)-\frac{1}{n(T_2-T_1)}\sum_{i=1}^{n}\mathbb{E}[N_i(\frac{j(T_2-T_1)}{M})-N_i(\frac{(j-1)(T_2-T_1)}{M})]$. Since, 
\begin{eqnarray*}
\sup_t|\hat{\lambda}(t)-\mathbb{E}[\hat{\lambda}(t)]| &=& \max_{j=1,2,\dots,M}|M\cdot A_j| \\
&=& M\cdot \max_{j=1,2,\dots,M}|A_j|
\end{eqnarray*}
Thus, for any $\epsilon >0$
\begin{eqnarray*}
\mathbb{P}\Big[\sup_t|\hat{\lambda}(t)-\mathbb{E}[\hat{\lambda}(t)]|>\epsilon\Big] &=& \mathbb{P}\Big[M\cdot \max_{j=1,2,\dots,M}|A_j|>\epsilon\Big] \\
&=& \mathbb{P}\Big[ \max_{j=1,2,\dots,M}|A_j|>\frac{\epsilon}{M}\Big] \\
&=& \mathbb{P}\Big[\bigcup_{j=1}^{M}\Big(|A_j|>\frac{\epsilon}{M}\Big)\Big] \\
&\leq& \sum_{j=1}^{M}\mathbb{P}\Big[|A_j|>\frac{\epsilon}{M}\Big] \\
&=& \sum_{j=1}^{M}\mathbb{P}\Big[A_j^2>\frac{\epsilon^2}{M^2}\Big] \\
\text{(Chebyshev's inequality)}&\leq& \sum_{j=1}^{M}\frac{Var(\frac{1}{n(T_2-T_1)}\sum_{i=1}^{n}\sum_{r=1}^{n_i}I(s_{ir}\in B_j))}{\epsilon^2/M^2}  \\
\text{(Previous result about variance)}&\leq& \sum_{j=1}^{M}\frac{R/(nM(T_2-T_1)^2)}{\epsilon^2/M^2} \\
&=& R\cdot \frac{M^2}{n(T_2-T_1)^2\epsilon^2}
\end{eqnarray*}
Therefore, 
\begin{eqnarray*}
\sup_t|\hat{\lambda}(t)-\mathbb{E}[\hat{\lambda}(t)]|=O_P\Big(\sqrt{\frac{M^2}{n}}\Big)
\end{eqnarray*}
Combine with the previous result about $\mathbb{E}[\hat{\lambda}(t)]-\lambda(t)$, the uniform convergence rate of $\hat{\lambda}(t)$ is: 
\begin{eqnarray*}
\sup_t|\hat{\lambda}(t)-\lambda(t)|=O\Big(\frac{1}{M}\Big)+O_P\Big(\sqrt{\frac{M^2}{n}}\Big)
\end{eqnarray*}
Finally, the uniform convergence rate about $\hat{\Lambda}_S^{(n)}(x)$ can be derived as follows: 
\begin{eqnarray*}
\sup_x|\hat{\Lambda}_S^{(n)}(x)-\Lambda_S(x)| &=& \sup_x\Big|\int_{T_1}^{x}\hat{\lambda}(t)dt-\int_{T_1}^{x}\lambda(t)dt\Big| \\
&=& \sup_x\Big|\int_{T_1}^{x}(\hat{\lambda}(t)-\lambda(t))dt\Big| \\
&\leq& \sup_x\int_{T_1}^{x}|\hat{\lambda}(t)-\lambda(t)|dt \\
&\leq& \sup_x\int_{T_1}^{x}\sup_t|\hat{\lambda}(t)-\lambda(t)|ds \\
\text{(The integrand is non-negative)} &\leq&\int_{T_1}^{T_2}\sup_t|\hat{\lambda}(t)-\lambda(t)|ds  \\
&=& O\Big(\frac{1}{M}\Big)+O_P\Big(\sqrt{\frac{M^2}{n}}\Big)
\end{eqnarray*}


\end{document}